\theoremstyle{plain}
\newtheorem{theorem}{Theorem}
\newtheorem{prop}[theorem]{Proposition}
\newtheorem{theo}[theorem]{Theorem}
\newtheorem{lemma}[theorem]{Lemma}
\newtheorem{remark}[theorem]{Remark}
\newtheorem{definition}[theorem]{Definition}
\newtheorem{example}[theorem]{Example}
\theoremstyle{remark}
\newtheorem{case}{Case}
\numberwithin{theorem}{section}
\newcommand{\be}%
  {\protect\setcounter{equation}{\value{subsubsection}}}  
\newcommand{\ee}%
\newcommand{\Z}{\mathbb{Z}}
\newcommand{\F}{\mathbb{F}}
\newcommand{\xn}{x^n - 1}
\begin{document}


\title {Cyclic codes over the ring $\F_p[u,v]\textfractionsolidus \langle u^k,v^2,uv-vu\rangle$}

\author{Bappaditya Ghosh and Pramod Kumar Kewat} 
\address{Department of Applied Mathematics\\ 
         Indian School of Mines\\
         Dhanbad 826 004,  India}
\email{bappaditya.ghosh86@gmail.com, kewat.pk.am@ismdhanbad.ac.in}
\subjclass{94B15} 

\keywords{Cyclic codes, Hamming distance}
\begin{abstract}
Let $p$ be a prime number. In this paper, we discuss the structures of cyclic codes over the ring $ \F_p[u, v]\textfractionsolidus \langle u^k, v^2, uv-vu\rangle$. We find a unique set of generators for these codes. We also study the rank and the Hamming distance of these codes. 
\end{abstract}

\maketitle


\markboth{B. Ghosh and P. K. Kewat}{Cyclic codes over the ring $R_{u^k,v^2,p}$}
\section{Introduction}
The study of linear codes over finite rings has accomplished significant important since the realization that some good nonlinear codes can be identified as Gray image of $\Z_4$ linear code (cf. \cite{Nech91, Ham93, Ham94}). Cyclic codes, an important class of linear codes, have also generated great interest in algebraic coding theory. The good progress has been achieved in a series of papers in the direction of determining the structural properties of cyclic codes over the large family of rings, mainly over finite chain rings (cf. \cite{Tah-Oeh04, Tah-Oeh03, Tah-Siap07, Ash-Ham11, Blac03, Bonn-Udaya99, Cal-Slo95, Cal-Slo98, Con-Slo93, Dinh10, Dinh-Lopez04, Dou-Shiro01, Ples-Qian96,  Aks-Pkk13, Lint91}). 

Yildiz and Karadeniz in \cite{Yil-Kar11} have considered the ring $\F_2[u,v]/\langle u^2, v^2, uv-vu \rangle$, which is not a chain ring, and studied cyclic codes of odd length over that. They have found some good binary codes as the Gray images of these cyclic codes. The authors of \cite{Dou-Yil-Kar12} studied the general properties of cyclic codes over the more general ring $\F_2[u_1, u_2, \cdots, u_k]/\langle u_{i}^2, u_{j}^2, u_iu_j-u_ju_i \rangle$ and  characterized the nontrivial one-generator cyclic codes. Sobhani and Molakarimi in \cite{Sob-Mor13} extended these studies to cyclic codes over the ring $\F_{2^m}[u, v]/\langle u^2, v^2, uv-vu \rangle$. The authors of \cite{KGP15} have studied the cyclic codes over the ring $\F_p[u,v]/\langle u^2, v^2,$ $uv-vu \rangle$ and have found some good ternary codes as the Gray images of these cyclic codes.

In this paper, we discuss the structure of cyclic codes of arbitrary length $n$ over the ring $R_{u^k,v^2,p}=\F_p[u,v]/\langle u^k, v^2, uv-vu \rangle$, $k$ a positive integer. We find a unique set of generators for these codes. The idea to find a set of generators is as follows. We view the cyclic code $C$ as an ideal in the ring $R_{u^k,v^2,p,n}= R_{u^k,v^2,p}[x]/\langle\xn \rangle$. We define the projection map from $R_{u^k,v^2,p,n} \longrightarrow R_{u^k,p,n}= R_{u^k,p}[x]/\langle\xn \rangle,$ $R_{u^k,p}=\F_p[u]/\langle u^k\rangle$, and we get an ideal in the ring $R_{u^k,p,n}$, which gives a cyclic code over the ring $R_{u^k,p}$. The structure of cyclic codes over the ring $R_{u^k,p}$  is known from \cite{Aks-Pkk13}. By pullback, we find a set of generators for a cyclic code over the ring $R_{u^k,v^2,p}$. We simplified a set of generators for these cyclic codes when $n$ is relatively prime to $p$. We also provide the characterization of the free cyclic codes over the ring $R_{u^k,v^2,p}$.

We find the rank and  minimal spanning set for a cyclic code $C$ of arbitrary length $n$ over the ring $R_{u^k,v^2,p}$. We find the rank of these cyclic codes by using the division algorithm and direct computations. We first find the minimal spanning sets of kernel and image of the projection map from $C \longrightarrow R_{u^k,p,n}$. Then using the isomorphism theorem we find the minimal spanning set a cyclic code $C$ over the ring $R_{u^k,v^2,p}$. These computations are not straightforward, there are difficulties that need to be overcome. For example, we have several non regular elements as parts of generators, where we can not apply the division algorithm directly. We have used the inductive arguments to find the minimal spanning set in these cases. We also find the Hamming distance of these codes for length $p^l$. 

\section{Preliminaries} \label{pre}
A ring with the unique maximal ideal is called a local ring. Let $R$ be a finite commutative local ring with maximal ideal $M$. Let $\overline{R} = R\textfractionsolidus M$ be the residue field and $\mu : R[x] \rightarrow \overline{R}[x]$ denote the natural ring homomorphism that maps $r \mapsto r + M$ and the variable $x$ to $x$. The degree of the polynomial $f(x) \in R[x]$ as the degree of the polynomial $\mu(f(x))$ in $\overline{R}[x]$, i.e., $deg(f(x)) = deg(\mu(f(x))$ (see, for example, \cite{McDonald74}). A polynomial $f(x) \in R[x]$ is called regular if it is not a zero divisor.
The following conditions are equivalent for a finite commutative local ring $R$.
\begin{prop} {\rm (cf. \cite[Exercise XIII.2(c)]{McDonald74})} \label{regular-poly}
Let $R$ be a finite commutative local ring. Let $f(x) = a_0+a_1x+ \cdots +a_nx^n$ be in $R[x]$, then the following are equivalent.
\begin{enumerate}[{\rm (1)}]
\item $f(x)$ is regular; \label{1}
\item  $\langle a_0, a_1, \cdots , a_n \rangle = R$; \label{4} 
\item  $a_i$ is an unit for some $i$, $0 \leq i \leq n$; \label{3}
\item  $\mu(f(x)) \neq 0$; \label{2}
\end{enumerate}
\end{prop}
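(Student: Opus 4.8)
The plan is to prove the cycle of implications $(1)\Rightarrow(2)\Rightarrow(3)\Rightarrow(4)\Rightarrow(1)$, exploiting two structural features of a finite commutative local ring $R$: its set of non-units is precisely the maximal ideal $M$, and $M$ is nilpotent (since $R$ is Artinian), say $M^s=0$ with $s\ge 1$ minimal. Two of the implications are then immediate. For $(2)\Rightarrow(3)$: if no $a_i$ were a unit then every $a_i$ would lie in $M$, forcing $\langle a_0,\dots,a_n\rangle\subseteq M\subsetneq R$, a contradiction. For $(3)\Rightarrow(4)$: a unit coefficient $a_i$ satisfies $a_i\notin M$, so $\mu(a_i)=a_i+M\neq 0$ in $\overline{R}$, and hence $\mu(f(x))\neq 0$.

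For $(1)\Rightarrow(2)$ I would argue by contraposition, using nilpotency to manufacture a zero divisor. If $\langle a_0,\dots,a_n\rangle\neq R$, then by locality all $a_i\in M$; choosing any nonzero $c\in M^{s-1}$ gives $ca_i\in M^{s-1}\cdot M=M^s=0$ for every $i$, so that $c\,f(x)=0$ with $c\neq 0$ a nonzero constant polynomial. Thus $f(x)$ is a zero divisor, i.e.\ not regular.

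The substantive step is $(4)\Rightarrow(1)$, and this is where I expect the main difficulty. Suppose $\mu(f(x))\neq 0$ and $f(x)g(x)=0$ with $g(x)\neq 0$; the goal is $g(x)=0$. Since $\overline{R}[x]$ is an integral domain and $\mu(f)\mu(g)=\mu(fg)=0$ with $\mu(f)\neq 0$, we obtain $\mu(g)=0$, i.e.\ $g\in M[x]$. The crux is to bootstrap this conclusion up the filtration $M\supseteq M^2\supseteq\cdots\supseteq M^s=0$. For the inductive step I would assume $g\in M^t[x]$ and reduce the identity $fg=0$ modulo $M^{t+1}$: because $M\cdot M^t\subseteq M^{t+1}$, the non-unit part of $f$ is annihilated, and the relation becomes $\mu(f)\cdot\bar g=0$, where $\bar g$ denotes the image of $g$ in the $\overline{R}[x]$-module $(M^t/M^{t+1})[x]\cong\overline{R}[x]\otimes_{\overline{R}}(M^t/M^{t+1})$. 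As $M^t/M^{t+1}$ is a finite-dimensional $\overline{R}$-vector space, this module is free, hence torsion-free over the domain $\overline{R}[x]$, so multiplication by the nonzero element $\mu(f)$ is injective and forces $\bar g=0$, i.e.\ $g\in M^{t+1}[x]$. Iterating from $t=1$ up to $t=s$ yields $g\in M^s[x]=0$, so $f(x)$ is regular.

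I would remark that a shorter route to $(3)\Rightarrow(1)$ is available via McCoy's theorem, according to which a zero divisor in $R[x]$ is annihilated by a nonzero constant of $R$; the existence of a unit coefficient then gives an immediate contradiction. The filtration argument above is preferable here, however, since it keeps the proof self-contained within the finite local setting and is the step most directly tied to the nilpotency of $M$.
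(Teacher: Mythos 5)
The paper does not prove this proposition at all: it is quoted from McDonald's book (Exercise XIII.2(c)) and used as a black box, so there is no in-paper argument to compare yours against. Your cycle $(1)\Rightarrow(2)\Rightarrow(3)\Rightarrow(4)\Rightarrow(1)$ is correct and self-contained. The three short implications use exactly the two structural facts you isolate --- in a local ring the non-units are precisely $M$, and in a finite (hence Artinian) local ring $M$ is nilpotent --- and the one substantive step, $(4)\Rightarrow(1)$, is handled soundly: from $\mu(f)\mu(g)=0$ in the domain $\overline{R}[x]$ you get $g\in M[x]$, and the bootstrap through the filtration $M\supseteq M^2\supseteq\cdots\supseteq M^s=0$ works because multiplication by a coefficient $a$ on $M^t/M^{t+1}$ depends only on $a+M$, so the reduced relation really is $\mu(f)\cdot\bar g=0$ in $(M^t/M^{t+1})[x]$, a free and hence torsion-free $\overline{R}[x]$-module. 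One small observation: your concluding remark about McCoy's theorem gives the shortcut from $(2)$ (or $(3)$) back to $(1)$ in one line, and is presumably the intended solution of the cited exercise; your filtration argument buys a proof that stays entirely inside the finite local setting the paper works in, at the cost of a little more machinery. Either way, nothing is missing.
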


The following version of the division algorithm holds true for polynomials over finite commutative local rings.
\begin{prop}{\rm (cf. \cite[Exercise XIII.6]{McDonald74})} \label{division-alg}
Let $R$ be a finite commutative local ring. Let $f(x)$ and $g(x)$ be non zero polynomials in $R[x]$. If $g(x)$ is regular, then there exist polynomials $q(x)$ and $r(x)$ in $R[x]$ such that $f(x) =g(x)q(x) + r(x)$ and $deg(r(x)) < deg(g(x))$.
\end{prop}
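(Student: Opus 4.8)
The plan is to mimic the classical polynomial long-division argument, the only new ingredient being that I must locate an invertible leading coefficient for $g(x)$ in the sense of the degree $\deg = \deg \circ \mu$ used here. Since $g(x)$ is regular, Proposition~\ref{regular-poly} gives $\mu(g(x)) \neq 0$, so $d := \deg(g(x)) = \deg(\mu(g(x)))$ is a well-defined nonnegative integer and the coefficient $b_d$ of $x^d$ in $g(x)$ satisfies $\mu(b_d) \neq 0$; because $R$ is local, $b_d \notin M$ forces $b_d$ to be a unit. This invertible $b_d$ is exactly what lets the division proceed even though $g(x)$ need not be monic and may carry higher-order terms with coefficients in $M$.

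I would then induct on $e := \deg(f(x))$. For the base case, if $\deg(f(x)) < d$ (including the degenerate possibility $\mu(f(x)) = 0$, where I read $\deg(f(x)) = -\infty$), I simply set $q(x) = 0$ and $r(x) = f(x)$, and the degree condition holds by hypothesis. For the inductive step, suppose $e \geq d$. Writing $a_e$ for the coefficient of $x^e$ in $f(x)$, the equality $e = \deg(\mu(f(x)))$ forces $\mu(a_e) \neq 0$, hence $a_e$ is a unit. I would form $f_1(x) = f(x) - a_e b_d^{-1} x^{e-d} g(x)$ and verify, by applying $\mu$, that the two degree-$e$ terms cancel: since $\mu\bigl(a_e b_d^{-1} x^{e-d} g(x)\bigr) = \mu(a_e)\mu(b_d)^{-1} x^{e-d}\mu(g(x))$ has degree $e$ with leading coefficient $\mu(a_e)$, matching that of $\mu(f(x))$, I conclude $\deg(f_1(x)) < e$.

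Applying the induction hypothesis to $f_1(x)$ produces $q_1(x)$ and $r(x)$ with $f_1(x) = g(x)q_1(x) + r(x)$ and $\deg(r(x)) < d$; back-substituting gives $f(x) = g(x)\bigl(a_e b_d^{-1} x^{e-d} + q_1(x)\bigr) + r(x)$, so $q(x) = a_e b_d^{-1} x^{e-d} + q_1(x)$ works.

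The step I expect to need the most care is the bookkeeping around the nonstandard degree. Because $\deg$ is measured \emph{after} reduction modulo $M$, the intermediate polynomials $f_1(x)$ may retain nonzero high-order coefficients lying in $M$, so a priori their ordinary degree in $R[x]$ need not drop. What matters, and what must be checked explicitly, is that the $\mu$-degree strictly decreases at each stage; this guarantees that the recursion terminates after finitely many steps and that the final remainder satisfies $\deg(r(x)) < \deg(g(x))$ exactly as required. I would also confirm that the case $\mu(f_1(x)) = 0$ is correctly absorbed into the base case, so that the induction on $e$ is well-founded.
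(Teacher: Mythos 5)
Your proof is correct. Note that the paper does not actually prove this proposition --- it is quoted as a known result (Exercise XIII.6 of McDonald's \emph{Finite Rings with Identity}) --- so there is no in-paper argument to compare against; your write-up supplies the standard expected one. The two points that genuinely need care are exactly the ones you flag and handle: locality of $R$ turns ``$\mu(b_d)\neq 0$'' into ``$b_d$ is a unit,'' and the induction must run on the residue degree $\deg\circ\mu$ rather than the ordinary degree in $R[x]$, since $f_1(x)=f(x)-a_eb_d^{-1}x^{e-d}g(x)$ can pick up high-order terms with coefficients in $M$; with the convention $\deg(h)= -\infty$ when $\mu(h)=0$ absorbed into the base case, the recursion terminates and yields $\deg(r)<\deg(g)$ as required.
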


\subsection{{\rm The ring} $R_{u^k,v^2,p}$} $ $

Let $R_{u^k,v^2,p} = (\F_p + u\F_p + \cdots + u^{k-1}\F_p)+v(\F_p + u\F_p + \cdots + u^{k-1}\F_p), u^k=0$, $v^2=0$ and $uv = vu$. Also it can be viewed as $(\F_p+v\F_p) + u(\F_p+v\F_p) + \cdots + u^{k-1}(\F_p+v\F_p), u^k=0$, $v^2=0$ and $uv = vu$. The ring $R_{u^k,v^2,p}$ is a finite commutative local ring with unique maximal ideal $\langle u,v\rangle$. The set $\{ \{0\}, \langle u\rangle,\langle u^2\rangle, \cdots, \langle u^{k-1}\rangle, \langle v\rangle, \langle uv\rangle, \langle u^2v\rangle, \cdots, \langle u^{k-1}v\rangle, \langle u + \alpha v\rangle, \langle u^2 + \alpha v\rangle,$ $ \cdots, \langle u^{k-1} + \alpha v\rangle, \langle u^{k-1},v\rangle,$ $\langle u^{k-2},v\rangle, \cdots, \langle u,v\rangle, \langle 1\rangle\}$ gives list of all ideals of $R_{u^k, v^2, p}$, where $\alpha$ is a non zero element of $\F_p$. Since the maximal ideal $\langle u,v\rangle$ is not principal, the ring $R_{u^k, v^2, p}$ is not a chain ring. 

Let $g(x)$ be a non zero polynomial in $\F_p[x]$. By Proposition \ref{regular-poly}, it is easy to see that the polynomial $g(x)+up_1(x)+\cdots+u^{k-1}p_{k-1}(x)+v(p_k(x)+up_{k+1}(x)+\cdots+u^{k-1}p_{2k-1}(x)) \in R_{u^k,v^2,p}[x]$ is regular. Note that $\text{deg}(g(x)+up_1(x)+\cdots+u^{k-1}p_{k-1}(x)+v(p_k(x)+up_{k+1}(x)+\cdots+u^{k-1}p_{2k-1}(x))) = \text{deg}(g(x))$.

\subsection{{\rm The Gray map}} $ $

The Gray map $\varphi_L:R_{u^k,v^2,p}\rightarrow \F_p^{2k}$ is defined as follows
\[\varphi_L\left(a+vb\right)\rightarrow \left(\varphi\left(a+b\right),\varphi\left(b\right)\right), ~\forall ~ a,b \in R_{u^k,p},\]
where
\[\varphi\left(a_1+ua_2+\cdots+u^{k-1}a_k\right)=\left(\sum\limits_{i=1}^ka_i, \sum\limits_{i=2}^ka_i, \sum\limits_{i=2}^{k-1}a_i, \cdots, \sum\limits_{i=\frac{k+1}{2}}^{\frac{k+3}{2}}a_i, \sum\limits_{i=\frac{k+1}{2}}^{\frac{k+1}{2}}a_i\right),\]
when $k$ is odd and
\[\varphi\left(a_1+ua_2+\cdots+u^{k-1}a_k\right)=\left(\sum\limits_{i=1}^ka_i, \sum\limits_{i=2}^ka_i, \sum\limits_{i=2}^{k-1}a_i, \cdots, \sum\limits_{i=\frac{k}{2}}^{\frac{k}{2}+1}a_i, \sum\limits_{i=\frac{k}{2}+1}^{\frac{k}{2}+1}a_i\right),\]
when $k$ is even.

Let $w_L$ and $w_H$ denote the Lee weight and Hamming weight respectively. We define the Lee weight as follows
\[w_L\left(a+vb\right)=w_H\left(\varphi_L\left(a+vb\right)\right), ~\forall ~ a,b \in R_{u^k,p},\]
The Gray map naturally extend to $R_{u^k,v^2,p}^n$ as distance preserving isometry
\[\varphi_L:\left(R_{u^k,v^2,p}^n, ~ \text{Lee weight}\right) \rightarrow \left(\F_p^{2kn}, ~ \text{Hamming weight}\right)\]
as follows
\[\varphi_L\left(a_1, a_2, \cdots, a_n\right)\rightarrow \left(\varphi_L\left(a_1\right),\varphi_L\left(a_2\right), \cdots, \varphi_L\left(a_n\right)\right), ~\forall ~ a_i \in R_{u^k,v^2,p}.\]
By linearity of the map $\varphi_L$ we obtain the following theorem.
\begin{theo}
If $C$ is a linear code over the ring $R_{u^k, v^2, p}$ of length $n$, size $p^l$ and
minimum Lee weight $d$, then $\varphi_L(C)$ is a $p$-ary linear code with parameters $[2kn, l, d]$.

\end{theo}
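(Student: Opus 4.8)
The plan is to verify the three code parameters of $\varphi_L(C)$ in turn—length, dimension, and minimum distance—using only the $\F_p$-linearity of $\varphi_L$ together with the fact, already recorded above, that $\varphi_L$ is a distance-preserving isometry from $(R_{u^k,v^2,p}^n,\ \text{Lee weight})$ to $(\F_p^{2kn},\ \text{Hamming weight})$, equivalently that $w_H(\varphi_L(x)) = w_L(x)$ coordinatewise.

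First I would pin down that $\varphi_L$ is an $\F_p$-linear bijection, already at the level of a single coordinate. The inner map $\varphi : R_{u^k,p} \to \F_p^k$ sends the coefficient vector $(a_1,\ldots,a_k)$ to a tuple of nested partial sums, each of which is an $\F_p$-linear functional of the $a_i$; hence $\varphi$ is $\F_p$-linear, and since each $a_i$ is recovered by taking successive differences of consecutive sums, the defining transformation is invertible over $\F_p$ and $\varphi$ is a bijection. Writing a general element as $a+vb$ with $a,b \in R_{u^k,p}$, the formula $\varphi_L(a+vb) = (\varphi(a+b),\varphi(b))$ is a composition of $\F_p$-linear bijections, so $\varphi_L$ is an $\F_p$-linear bijection of $R_{u^k,v^2,p}$ onto $\F_p^{2k}$; taking the $n$-fold product, the extended map $\varphi_L : R_{u^k,v^2,p}^n \to \F_p^{2kn}$ is an $\F_p$-linear bijection as well.

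With linearity in hand the length and dimension are immediate. Each of the $n$ coordinates of a codeword expands into $2k$ entries over $\F_p$, so $\varphi_L(C) \subseteq \F_p^{2kn}$ has length $2kn$. Because $\varphi_L$ is $\F_p$-linear, $\varphi_L(C)$ is an $\F_p$-subspace of $\F_p^{2kn}$, i.e. a $p$-ary linear code; and because $\varphi_L$ is injective, $|\varphi_L(C)| = |C| = p^l$, whence $\dim_{\F_p}\varphi_L(C) = l$. For the minimum distance I would invoke linearity once more to reduce the minimum Hamming distance of the linear code $\varphi_L(C)$ to the minimum Hamming weight of its nonzero codewords: for nonzero $c \in C$ the weight identity gives $w_H(\varphi_L(c)) = w_L(c)$, and as $c$ ranges over the nonzero codewords of $C$, the minimum of the left-hand side is the minimum distance of $\varphi_L(C)$ while the minimum of the right-hand side is the minimum Lee weight $d$ of $C$, so the two coincide.

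The closest thing to an obstacle is the bijectivity of $\varphi_L$, on which both the size and the dimension claims rest; this is settled by the successive-difference computation showing the nested-partial-sum matrix defining $\varphi$ is invertible over $\F_p$ (note $|R_{u^k,v^2,p}| = p^{2k} = |\F_p^{2k}|$, so injectivity and surjectivity are in any case equivalent here). Everything else is a routine consequence of $\F_p$-linearity and the definition of the Lee weight, which is precisely why the parameters $[2kn,l,d]$ drop out as stated.
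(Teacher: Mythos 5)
Your proposal is correct and is essentially the argument the paper intends: the paper offers no written proof beyond the remark that the theorem follows ``by linearity of the map $\varphi_L$,'' and your write-up simply makes explicit the three standard verifications (length, dimension via injectivity of the nested-partial-sum map, and distance via the weight-preserving property). The successive-differences observation establishing that $\varphi$ is an $\F_p$-linear bijection is the right way to fill in the one detail the paper leaves implicit.
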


\section{The structures of cyclic codes over the ring $R_{u^k, v^2, p}$} \label{generator}
Let $p$ be a prime number and $n$ be a positive integer. Let $R_{u^k,v^2,p} = \F_p[u, v]\textfractionsolidus \langle u^k,v^2,uv-vu \rangle$. We can write $R_{u^k,v^2,p}$ as $R_{u^k,v^2,p}= R_{u^k,p} + vR_{u^k,p}, v^2=0$, where $R_{u^k,p}=\F_p + u\F_p + \cdots + u^{k-1}\F_p$ and $u^k=0$. Also, it can be written as $R_{u^k,v^2,p}=R_{v^2,p} + uR_{v^2,p} + \cdots + u^{k-1}R_{v^2,p}$ and $u^k=0$, where $R_{v^2,p}=\F_p+v\F_p$ and $v^2=0$. Let $R_{u^k,v^2,p,n}=R_{u^k,v^2,p}[x]\textfractionsolidus \langle x^n-1\rangle$. Let $C$ be a cyclic code of length $n$ over $R_{u^k,v^2,p}$. We also consider $C$ as an ideal in the ring $R_{u^k,v^2,p,n}$. We define the map $\psi : R_{u^k,v^2,p} \rightarrow R_{u^k,p}$ by $\psi(\alpha + v \beta) = \alpha$, where $\alpha, \beta \in R_{u^k,p}$. Clearly the map $\psi$ is a surjective ring homomorphism. Let $R_{u^k,p,n}=R_{u^k,p}[x]\textfractionsolidus \langle x^n-1\rangle$. We extend this homomorphism to a homomorphism $\phi$ from $C$ to the ring $R_{u^k,p,n}$ 
defined by
\begin{equation} \label{surj-hom}
\phi\left(c_0+c_1x+\cdots+c_{n-1}x^{n-1}\right)=\psi\left(c_0\right)+\psi\left(c_1\right)x+\cdots+\psi\left(c_{n-1}\right)x^{n-1},
\end{equation}
where $c_i \in R_{u^k,v^2,p}$. Let $J=\{r(x)\in R_{u^k,p,n}[x] ~ | ~ vr(x) \in \text{ker}\phi\}$. We see that $J$ is an ideal of $R_{u^k,p,n}$. Hence we can consider $J$ as a cyclic code over the ring $R_{u^k,p}$. We know from Theorem 3.3 of \cite{Aks-Pkk13} that any ideal of $R_{u^k,p,n}$ is of the form $\langle g(x) + u p_1(x) + u^2 p_{2}(x) + \cdots + u^{k-1} p_{k-1}(x), u a_1(x) + u^2 q_1(x) + \cdots + u^{k-1} q_{k-2}(x), u^2 a_2(x) + u^3 l_{1}(x) + \cdots + u^{k-1}l_{k-3}(x), \cdots, u^{k-2} a_{k-2}(x) + u^{k-1} t_1(x), u^{k-1} a_{k-1}(x)\rangle$ with $a_{k-1}(x)|a_{k-2}(x)| \cdots | a_2(x)|a_1(x)|g(x)|(\xn)$ mod $p$ ; $a_1(x)|p_1(x)\frac{\xn}{g(x)}$, $\cdots$, $a_{k-1}(x)|t_1(x)\frac{\xn}{a_{k-2}(x)}$; $\cdots$ ; $a_{k-1}(x)|p_{k-1}(x)\frac{\xn}{g(x)}\cdots\frac{\xn}{a_{k-2}(x)}$; ${\rm deg}(p_1(x))< {\rm deg}(a_1(x))$; ${\rm deg}(p_2(x))$, ${\rm deg}(q_1(x))<{\rm deg}(a_2(x))$; $\cdots$ ; ${\rm deg}(p_{k-1}(x))$, ${\rm deg}(q_{k-2}(x))$, $\cdots$, ${\rm deg}(t_1(x))<{\rm deg}(a_{k-1}(x))$. Now we 
assume 
that $B_1 = g(x) + u p_1(x) + u^2 p_{2}(x) + \cdots + u^{k-1} p_{k-1}(x), B_2 = u a_1(x) + u^2 q_1(x) + \cdots + u^{k-1} q_{k-2}(x), B_3 = u^2 a_2(x) + u^3 l_{1}(x) + \cdots + u^{k-1}l_{k-3}(x)$, $\cdots$, $B_{k-1} = u^{k-2} a_{k-2}(x) + u^{k-1} t_1(x)$, $B_k = u^{k-1} a_{k-1}(x)$. So $J = \langle B_1, B_2, \cdots, B_k\rangle$. Therefore, we can write $\text{ker}\phi=\langle vB_1, vB_2, \cdots, vB_k\rangle$. Since $\phi$ is a surjective homomorphism, the image $\text{Im}\phi$ is an ideal of $R_{u^k,p,n}$. Hence, $\text{Im}\phi$ is a cyclic code over the ring $R_{u^k,p}$. Again we can write $\text{Im}\phi$ as above. That is, $\text{Im}\phi=\langle B_1', B_2', \cdots, B_k'\rangle$. Therefore the code $C$ over the ring $R_{u^k,v^2,p}$ can be written as $C=\langle A_1, A_2, \cdots, A_{2k}\rangle$, where, $A_i$'s are defined as follows.
\label{A_i's}
\begin{align*}
A_1 = &~ g_1(x)+ug_{11}(x)+u^2g_{12}(x)+\cdots+u^{k-1}g_{1(k-1)}(x)\\ & +v(g_{1k}(x)+ug_{1(k+1)}(x)+u^2g_{1(k+2)}(x) + \cdots +u^{k-1}g_{1(2k-1)}(x)),\\
A_2 = &~ ug_2(x)+u^2g_{22}(x)+u^3g_{23}(x)+\cdots+u^{k-1}g_{2(k-1)}(x)\\ & +v(g_{2k}(x)+ug_{2(k+1)}(x)+u^2g_{2(k+2)}(x) + \cdots +u^{k-1}g_{2(2k-1)}(x)),\\
\vdots\\
A_i = &~ u^{i-1}g_i(x)+u^ig_{ii}(x)+u^{i+1}g_{i(i+1)}(x)+ \cdots +u^{k-1}g_{i(k-1)}(x)\\ & +v(g_{ik}(x)+ug_{i(k+1)}(x)+u^2g_{i(k+2)}(x)+ \cdots +u^{k-1}g_{i(2k-1)}(x)),\\
\vdots\\
A_{k-1} = &~ u^{k-2}g_{k-1}(x)+u^{k-1}g_{(k-1)(k-1)}(x)+v(g_{(k-1)k}(x)+ug_{(k-1)(k+1)}(x)\\ & +u^2g_{(k-1)(k+2)}(x)+  \cdots+u^{k-1}g_{(k-1)(2k-1)}(x)),\\
A_k = &~ u^{k-1}g_k(x)+v(g_{kk}(x)+ug_{k(k+1)}(x)+u^2g_{k(k+2)}(x)+\cdots+\\ & u^{k-1}g_{k(2k-1)}(x)),\\
A_{k+1} = &~ v(g_{k+1}(x)+ug_{(k+1)(k+1)}(x)+u^2g_{(k+1)(k+2)}(x)+\cdots+\\ & u^{k-1}g_{(k+1)(2k-1)}(x)),\\
A_{k+2} = &~ v(ug_{k+2}(x)+u^2g_{(k+2)(k+2)}(x)+u^3g_{(k+2)(k+3)}(x)+\cdots+\\ & u^{k-1}g_{(k+2)(2k-1)}(x)),\\
\vdots\\
A_{k+i} = &~ v(u^{i-1}g_{k+i}(x)+u^ig_{(k+i)(k+i)}(x)+u^{i+1}g_{(k+i)(k+i+1)}(x)+ \cdots +\\ & u^{k-1}g_{(k+i)(2k-1)}(x)),\\
\vdots\\
A_{2k-1} = &~ v(u^{k-2}g_{2k-1}(x)+u^{k-1}g_{(2k-1)(2k-1)}(x)),\\
A_{2k} = &~ vu^{k-1}g_{2k}(x).
\end{align*}
Throughout this paper we use $A_1, A_2, \cdots, A_{2k}$ for above polynomials.\\

For an ideal $C$ of the ring $R_{u^k,v^2,p,n}=R_{u^k,v^2,p}[x]\textfractionsolidus \langle x^n-1\rangle$, we define the residue and the torsion of the ideal $C$ as (see \cite{Dou-Yil-Kar12})
\begin{align*}
&\text{Res}(C)=\{a\in R_{u^k,p,n}|~\exists ~b\in R_{u^k,p,n}: a+vb\in C\} ~\text{and}\\
&\text{Tor}(C)=\{a\in R_{u^k,p,n}|~va\in C\}
\end{align*}
It is easy to see that when $C$ is an ideal of the ring $R_{u^k,v^2,p,n}$, the $\text{Res}(C)$ and $\text{Tor}(C)$ both are ideals of $R_{u^k,p,n}$. And also it is easy to show that $\text{Res}(C)=\text{Im}\phi$ and $\text{Tor}(C)=J$.\\

Again, for an ideal $C'$ of the ring $R_{u^i,p,n}=R_{u^i,p}[x]\textfractionsolidus \langle x^n-1\rangle$, for $2 \leq i \leq k$ we define residue and torsion of the ideal $C'$ as
\begin{align*}
&\text{Res}(C')=\{a\in R_{u^{i-1},p,n}|~\exists ~b\in R_{u^{i-1},p,n}: a+u^{i-1}b\in C'\} ~\text{and}\\
&\text{Tor}(C')=\{a\in R_{u,p,n}|~u^{i-1}a\in C'\}
\end{align*}
Here $\text{Res}(C')$ and $\text{Tor}(C')$ are ideals of  the ring $R_{u^{i-1},p,n}$ and $R_{u,p,n}$ respectively.\\
Note that $R_{u,p}=\frac{\F_p[u]}{\langle u \rangle} \simeq \F_p$, therefore, $R_{u,p,n}=\frac{\F_p[x]}{\langle x^n-1\rangle}$.\\
Now we define the ideals $C_1, ~C_2, \cdots, C_{2k}$ associated to $C$ as follows.
\label{C_i's}
\begin{align*}
&C_1= \underbrace{\text{Res}\cdots \text{Res}}_{k~times}(C)=C~\text{mod} ~ \langle u,v\rangle=\langle g_1(x)\rangle\\
&C_2=\text{Tor}\underbrace{\text{Res}\cdots \text{Res}}_{k-1~times}(C)=\{f(x)\in \F_p[x]~|~uf(x)\in C~\text{mod} ~ \langle u^2,v\rangle\}=\langle g_2(x)\rangle\\ 
&\vdots\\
&C_i=\text{Tor}\underbrace{\text{Res}\cdots \text{Res}}_{k-i+1~times}(C)=\{f(x)\in \F_p[x]~|~u^{i-1}f(x)\in C~\text{mod} ~ \langle u^i,v\rangle\}\\
&~\hspace{.7cm} =\langle g_i(x)\rangle\\
&\vdots\\
&C_{k-1}=\text{Tor}\text{Res}\text{Res}(C)=\{f(x)\in \F_p[x]~|~u^{k-2}f(x)\in C~\text{mod} ~ \langle u^{k-1},v\rangle\} \\
&~ \hspace{.7cm}=\langle g_{k-1}(x)\rangle\\
&C_k=\text{Tor}\text{Res}(C)=\{f(x)\in \F_p[x]~|~u^{k-1}f(x)\in C~\text{mod} ~ \langle v\rangle\}=\langle g_k(x)\rangle\\
&C_{k+1}=\underbrace{\text{Res}\cdots \text{Res}}_{k-1~times}\text{Tor}(C)=\{f(x)\in \F_p[x]~|~vf(x)\in C~\text{mod} ~ \langle uv\rangle\}\\
&~\hspace{.7cm} =\langle g_{k+1}(x)\rangle\\
&C_{k+2}=\text{Tor}\underbrace{\text{Res}\cdots \text{Res}}_{k-2~times}\text{Tor}(C)=\{f(x)\in \F_p[x]~|~uvf(x)\in C~\text{mod} ~ \langle u^2v\rangle\} \\ 
&~\hspace{.7cm} =\langle g_{k+2}(x)\rangle\\ 
&\vdots\\
&C_{k+i}=\text{Tor}\underbrace{\text{Res}\cdots \text{Res}}_{k-i~times}\text{Tor}(C)=\{f(x)\in \F_p[x]~|~u^{i-1}vf(x)\in C~\text{mod} ~ \langle u^iv\rangle\} \\ 
&~\hspace{.7cm} =\langle g_{k+i}(x)\rangle\\
&\vdots\\
&C_{2k-1}=\text{Tor}\text{Res}\text{Tor}(C)=\{f(x)\in \F_p[x]~|~u^{k-2}vf(x)\in C~\text{mod} ~ \langle u^{k-1}v\rangle\}\\ &~\hspace{.7cm}=\langle g_{2k-1}(x)\rangle\\
&C_{2k}=\text{Tor}\text{Tor}(C)=\{f(x)\in \F_p[x]~|~u^{k-1}vf(x)\in C\}=\langle g_{2k}(x)\rangle
\end{align*}
Here all $C_i$'s are ideals of $\frac{\F_p[x]}{\langle x^n-1\rangle}$. Throughout this paper we use $C_1, ~C_2, \cdots, C_{2k}$ for above ideals.

\begin{theo}\label{unique}
Any ideal $C$ of the ring $R_{u^k,v^2,p,n}$ is uniquely generated by the polynomials $A_1, A_2, \cdots, A_{2k}$ with $g_{ij}(x)$ are zero polynomials or $\text{deg}(g_{ij}(x))<\text{deg}(g_{j+1}(x))$ for $1\leq i\leq (2k-1)$, $i\leq j\leq (2k-1)$ where $A_i$ and $g_{ij}(x)$'s are defined as on page \pageref{A_i's} $($see page \pageref{A_i's}$)$.
\end{theo}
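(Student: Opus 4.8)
The plan is to prove existence and uniqueness separately, with existence coming essentially for free from the pullback construction already set up, and uniqueness being the real work. For \emph{existence}, the setup already identifies $\text{Res}(C)=\text{Im}\,\phi$ and $\text{Tor}(C)=J$ as ideals of $R_{u^k,p,n}$, each of the standard shape given by Theorem 3.3 of \cite{Aks-Pkk13}. I would apply that theorem to $\text{Im}\,\phi=\langle B_1',\dots,B_k'\rangle$ and choose, for each $i$, a preimage $A_i\in C$ with $\phi(A_i)=B_i'$; since $\psi$ and hence $\phi$ is surjective this is possible, and because $B_i'=u^{i-1}g_i+u^ig_{ii}+\cdots$, the element $A_i$ automatically has the displayed form with residue leading coefficient $g_i$. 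For the torsion part I would set $A_{k+i}=vB_i$, where $\langle B_1,\dots,B_k\rangle=J$, so that $\text{ker}\,\phi=\langle A_{k+1},\dots,A_{2k}\rangle$. A short preimage argument then gives $C=\langle A_1,\dots,A_{2k}\rangle$: any $c\in C$ has $\phi(c)=\sum s_iB_i'=\phi(\sum \tilde s_iA_i)$ after lifting the $s_i$ to $R_{u^k,v^2,p,n}$, whence $c-\sum\tilde s_iA_i\in\text{ker}\,\phi$.

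To force each $g_{ij}$ to be zero or of degree $<\deg g_{j+1}$, I would run a division-algorithm reduction. The generator $A_{j+1}$ has $u^{j}g_{j+1}$ (respectively $vu^{j-k}g_{j+1}$) as its leading coefficient, so dividing $g_{ij}$ by $g_{j+1}$ in $\F_p[x]$ via Proposition \ref{division-alg} — here $g_{j+1}$, a monic divisor of $x^n-1$, is regular — and subtracting the resulting multiple of $A_{j+1}$ from $A_i$ replaces $g_{ij}$ by a remainder of smaller degree while leaving the ideal unchanged. Processing the slots $j=i,i+1,\dots,2k-1$ in increasing order, and the generators $A_{2k},A_{2k-1},\dots,A_1$ in that order, guarantees that each reduction disturbs only not-yet-treated slots and never touches the leading coefficient $g_i$; in particular the purely-$v$ generators $A_{k+1},\dots,A_{2k}$ leave the residue part alone. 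This produces the required reduced form.

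The \emph{uniqueness} step is where the difficulty lies. First, the ideals $C_1,\dots,C_{2k}$ are defined intrinsically from $C$ by iterating $\text{Res}$ and $\text{Tor}$, so they are independent of the chosen generators; since each $C_i$ is a principal ideal of $\F_p[x]/\langle x^n-1\rangle$ with a unique monic generator, the leading coefficients $g_1,\dots,g_{2k}$ are determined by $C$ alone. Now suppose $\langle A_1,\dots,A_{2k}\rangle=\langle A_1^*,\dots,A_{2k}^*\rangle$ with both families in reduced form; fix $i$ and set $D_i=A_i-A_i^*\in C$. Its $u^{i-1}$-coefficient is $g_i-g_i^*=0$, and every remaining coefficient is a difference $g_{ij}-g_{ij}^*$ of degree $<\deg g_{j+1}$. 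I would show $D_i=0$ by annihilating its lowest nonzero coefficient: if the residue part is nonzero with lowest term $u^m$, then reducing $D_i$ modulo $\langle u^{m+1},v\rangle$ shows that this coefficient lies in $C_{m+1}=\langle g_{m+1}\rangle$, which is impossible for a nonzero polynomial of degree $<\deg g_{m+1}$ because $g_{m+1}\mid x^n-1$. Hence the residue part vanishes, and the same argument applied modulo $\langle u^{m+1}v\rangle$ to the lowest surviving $v$-term, using $C_{k+m+1}=\langle g_{k+m+1}\rangle$, kills the torsion part, so $A_i=A_i^*$ for every $i$. The main obstacle is exactly this last step: one must check that reduction modulo the successive ideals $\langle u^{m+1},v\rangle$ and $\langle u^{m+1}v\rangle$ recovers precisely the defining conditions of the intrinsic ideals $C_{m+1}$ and $C_{k+m+1}$, so that the degree bounds of the reduced form can be played against the minimal degree of a nonzero element of a principal cyclic ideal.
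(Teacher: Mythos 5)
Your proposal is correct and follows essentially the same route as the paper: the degree conditions are obtained by the same division-algorithm reduction of each slot $g_{ij}$ against the leading coefficient $g_{j+1}$ of $A_{j+1}$ (disturbing only later slots), and uniqueness is the same argument that the difference of two reduced generators has each coefficient lying in the intrinsic ideal $C_{j+1}=\langle g_{j+1}\rangle$ while having degree below $\deg g_{j+1}$, forcing it to vanish. The only difference is presentational: the existence of the generating set $A_1,\dots,A_{2k}$ via lifting $\mathrm{Im}\,\phi$ and $\ker\phi$ is handled in the paper's discussion preceding the theorem rather than inside the proof, and you make explicit the (correct) point that the $g_i$ are determined intrinsically by $C$.
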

\begin{proof}
We prove the degree result for $g_{1j}(x)$, where $1\leq j\leq 2k-1$. Others are similar. Let $A_1\neq 0$ and $\text{deg}(g_{11}(x))\geq \text{deg}(g_2(x))$. Then by division algorithm, we have $g_{11}(x)=q_1(x)g_2(x)+r_1(x)$, where $\text{deg}(r_1(x))<\text{deg}(g_2(x))$ or $r_1(x)=0$. Now $A_1-q_1(x)A_2=g_1(x)+ur_1(x)+u^2(g_{12}(x)-q_1(x)g_{22}(x))+u^3(g_{13}(x)-q_1(x)g_{23}(x))+\cdots+u^{k-1}(g_{1(k-1)}(x)-q_1(x)g_{2(k-1)}(x))+v((g_{1k}(x)-q_1(x)g_{2k}(x))+u(g_{1(k+1)}(x)-q_1(x)g_{2(k+1)}(x))+u^2(g_{1(k+2)}(x)-q_1(x)g_{2(k+2)}(x))+\cdots+u^{k-1}(g_{1(2k-1)}(x)-q_1(x)g_{2(2k-1)}(x)))$. If $\text{deg}(g_{12}(x)-q_1(x)g_{22}(x))\geq\linebreak \text{deg}(g_3(x))$, then by division algorithm, $g_{12}(x)-q_1(x)g_{22}(x)=q_2(x)g_3(x)+r_2(x)$, where $\text{deg}(r_2(x))<\text{deg}(g_3(x))$ or $r_2(x)=0$. We have $A_1-q_1(x)A_2-q_2(x)A_3=g_1(x)+ur_1(x)+u^2r_2(x)+u^3(g_{13}(x)-q_1(x)g_{23}(x)-q_2(x)g_{33}(x))+\cdots+u^{k-1}(g_{1(k-1)}(x)-q_1(x)g_{2(k-1)}(x)-q_2(x)g_{3(k-1)}(x))+v((g_{1k}(x)-q_1(x)g_{2k}(x)-\
\linebreak q_2(x)g_{3k}(x))+u(
g_{
1(k+1)}(x)-q_1(x)g_{2(k+1)}(x)-q_2(x)g_{3(k+1)}(x))+u^2(g_{1(k+2)}(x)-q_1(x)g_{2(k+2)}(x)-q_2(x)g_{3(k+2)}(x))+\cdots+u^{k-1}(g_{1(2k-1)}(x)-q_1(x)g_{2(2k-1)}(x)-q_2(x)g_{3(2k-1)}(x)))$. Proceeding in this way, after $2k-2$ times, we get $A_1-q_1(x)A_2-q_2(x)A_3-\cdots -q_{2k-2}(x)A_{2k-1}=g_1(x)+ur_1(x)+u^2r_2(x)+\cdots+u^{k-1}r_{k-1}(x)+v(r_k(x)+ur_{k+1}(x)+\cdots+u^{k-2}r_{2k-2}(x)+u^{k-1}(g_{1(2k-1)}-\linebreak q_1(x)g_{2(2k-1)}(x)-q_2(x)g_{3(2k-1)}-\cdots -q_{2k-2}(x)g_{(2k-1)(2k-1)}(x)))$. If $\text{deg}(g_{1(2k-1)}-q_1(x)g_{2(2k-1)}(x)-q_2(x)g_{3(2k-1)}-\cdots -q_{2k-2}(x)g_{(2k-1)(2k-1)}(x))\geq \text{deg}(g_{2k}(x))$, then again by division algorithm, $g_{1(2k-1)}-q_1(x)g_{2(2k-1)}(x)-q_2(x)g_{3(2k-1)}-\cdots -q_{2k-2}(x)g_{(2k-1)(2k-1)}(x)=q_{2k-1}(x)g_{2k}(x)+r_{2k-1}(x)$, where $\text{deg}(r_{2k-1}(x))<\text{deg}(g_{2k}(x))$or $r_{2k-1}(x)=0$. Now $A_1-q_1(x)A_2-q_2(x)A_3-\cdots -q_{2k-2}(x)A_{2k-1}-q_{2k-1}(x)A_{2k}=g_1(x)+ur_1(x)+u^2r_2(x)+\cdots+u^{k-1}r_{k-1}(x)+v(r_k(x)+ur_{k+1}(x)+\cdots+u^{
k-2}r_{2k-2}(x)+u^{k-1}r_{2k-1}(x))$. This polynomial satisfies the required properties of the theorem and also the polynomial $A_1$ can be replaced by this polynomial. Now we have to prove that the polynomials $A_i$'s are unique. Here again, we prove the uniqueness only for polynomial $A_1$. Others are similar. If possible, let $A_1=g_1(x)+ug_{11}(x)+u^2g_{12}(x)+\cdots +u^{k-1}g_{1(k-1)}(x)+v(g_{1k}(x)+ug_{1(k+1)}(x)+u^2g_{1(k+2)}(x)+\cdots +u^{k-1}g_{1(2k-1)}(x))$ and $B_1=g_1(x)+ug'_{11}(x)+u^2g'_{12}(x)+\cdots +u^{k-1}g'_{1(k-1)}+v(g'_{1k}(x)+ug'_{1(k+1)}(x)+u^2g'_{1(k+2)}(x)+\cdots +u^{k-1}g'_{1(2k-1)}(x))$ are two polynomials with same properties in $C$. Hence, $A_1-B_1=u(g_{11}(x)-g'_{11}(x))+u^2(g_{12}(x)-g'_{12}(x))+\cdots +u^{k-1}(g_{1(k-1)}(x)-g'_{1(k-1)}(x))+v((g_{1k}(x)-g'_{1k}(x))+u(g_{1(k+1)}(x)-g'_{1(k+1)}(x))+u^2(g_{1(k+2)}(x)-g'_{1(k+2)}(x))+\cdots +u^{k-1}(g_{1(2k-1)}(x)-g'_{1(2k-1)}(x)))$. We have $A_1-B_1\in C$ which implies that $g_{11}(x)-g'_{11}(x)\in C_2=\langle g_2(x)\rangle$. 
Previously we have proved that the degree of both $g_{11}(x)$ and $g'_{11}(x)$ is less than degree of $g_2(x)$. Hence, $\text{deg}(g_{11}(x)-g'_{11}(x))< \text{deg}(g_2(x))$. But $g_2(x)$ is the minimum degree polynomial in $C_2$, which implies that $g_{11}(x)-g'_{11}(x)=0$. This gives $g_{11}(x)=g'_{11}(x)$. Now $A_1-B_1=u^2(g_{12}(x)-g'_{12}(x))+\cdots +u^{k-1}(g_{1(k-1)}(x)-g'_{1(k-1)}(x))+v((g_{1k}(x)-g'_{1k}(x))+u(g_{1(k+1)}(x)-g'_{1(k+1)}(x))+u^2(g_{1(k+2)}(x)-g'_{1(k+2)}(x))+\cdots +u^{k-1}(g_{1(2k-1)}(x)-g'_{1(2k-1)}(x)))$. We have $A_1-B_1\in C$ which implies that $g_{12}(x)-g'_{12}(x)\in C_3=\langle g_3(x)\rangle$. Again, we have already proved that the degrees of $g_{12}(x)$ and $g'_{12}(x)$ is less than degree of $g_3(x)$. Hence, $\text{deg}(g_{12}(x)-g'_{12}(x))< \text{deg}(g_3(x))$, which implies that $g_{12}(x)-g'_{12}(x)=0$. This gives $g_{12}(x)=g'_{12}(x)$. Similarly we can show $g_{1i}(x)=g'_{1i}(x)$ for all $1\leq i\leq (2k-1)$. Hence, $A_1-B_1=0$. Thus $A_1=B_1$. 
Thus $A_1$ is unique.
\end{proof}

\begin{theo}\label{properties}
Let $C=\langle A_1,A_2, \cdots, A_{2k} \rangle$ be an ideal of the ring $R_{u^k,v^2,p,n}$. Then the following relations hold in the ring $\frac{\F_p[x]}{\langle x^n-1\rangle}$.
\begin{align}
&g_{2k}(x)|g_{2k-1}(x)| \cdots |g_{k+2}(x)|g_{k+1}(x) ~ {\rm{and}} ~ g_k(x)|g_{k-1}(x)| \cdots |g_2(x)|g_1(x)|(x^n-1),\\
&g_{k+i}(x)|g_i(x), ~ {\rm{for}} ~ 1\leq i \leq k,\\
&g_{i+1}(x)|\frac{x^n-1}{g_i(x)}g_{ii}(x), ~ {\rm{for}} ~ 1\leq i \leq 2k-1,\\
&{\rm{For ~ a ~ fix}} ~ j, ~ {\rm{where}}~ 1 \leq j \leq 2k-1,\notag \\
&g_{i+j}(x)|\frac{x^n-1}{g_i(x)}\frac{x^n-1}{g_{i+1}(x)} \cdots \frac{x^n-1}{g_{i+j-1}(x)}g_{i(i+j-1)}(x), ~ {\rm{for}} ~ 1 \leq i \leq 2k-j\\
&g_{k+i}(x)|g_{(k-(i-2))k}(x), ~ {\rm{for}} ~ 2\leq i \leq k,\\
&g_{k+i+1}(x)|r_{ii}(x),~ {\rm{for}} ~ 1\leq i \leq k-1, ~ {\rm{where}}\notag \\
&r_{ii}(x)=g_{ii}(x)-\frac{g_i(x)}{g_{k+i}(x)}g_{(k+i)(k+i)}(x),\\
&g_{k+i+j+1}(x)|r_{i(i+j)}(x), ~ {\rm{for}} ~ 1 \leq i \leq k-2 ~ {\rm{and}} ~ 1 \leq j \leq k-i-1, ~ {\rm{where}}\notag \\
&r_{i(i+j)}(x)=g_{i(i+j)}(x)-\frac{g_i(x)}{g_{k+i}(x)}g_{(k+i)(k+i+j)}(x)-\sum_{l=1}^j\frac{r_{i(i+l-1)}(x)}{g_{k+i+l}(x)}g_{(k+i+l)(k+i+j)}(x)\\
&{\rm{and}}\notag \\
&g_{i+j+1}(x)|\frac{x^n-1}{g_i(x)}s_{i(i+j)}(x), ~ {\rm{for}} ~ 1 \leq i \leq 2k-2 ~ {\rm{and}} ~ 1 \leq j \leq 2k-i-1,\notag \\
&{\rm{where}} ~ s_{ii}(x)=g_{ii}(x) ~ {\rm{and}} ~ s_{i(i+j)}(x)=g_{i(i+j)}(x)-\sum_{l=1}^j\frac{s_{i(i+l-1)}(x)}{g_{i+l}(x)}g_{(i+l)(i+j)}(x).
\end{align}
\end{theo}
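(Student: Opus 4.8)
The plan is to deduce every divisibility from the single principle that each $C_j=\langle g_j(x)\rangle$ is an ideal of $\F_p[x]/\langle\xn\rangle$ generated by a divisor of $\xn$, so that for any polynomial $P(x)$ one has $P(x)\in C_j$ if and only if $g_j(x)\mid P(x)$. Order the monomials of $R_{u^k,v^2,p,n}$ as $u^0\prec u^1\prec\cdots\prec u^{k-1}\prec v\prec vu\prec\cdots\prec vu^{k-1}$, so that the leading (smallest) term of $A_j$ sits in the slot corresponding to $C_j$. For each claimed relation I would exhibit an explicit element of $C$ whose lowest nonvanishing coefficient occupies the slot $C_j$ named on the left-hand side and equals the polynomial $P(x)$ named on the right; reducing that element modulo the ideal cutting out the slot and invoking the description of $C_j$ on page~\pageref{C_i's} then yields $g_j(x)\mid P(x)$. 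The two tools used repeatedly are Proposition~\ref{division-alg}, which lets us divide by the regular polynomials $g_\bullet(x)$, and the identity $\frac{\xn}{g_i(x)}\,g_i(x)\equiv 0$ in $R_{u^k,v^2,p,n}$, which annihilates whichever leading term we wish to clear.

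First I would settle the relations inherited from the one-variable theory. Since $\text{Res}(C)=\text{Im}\,\phi$ and $\text{Tor}(C)=J$ are ideals of $R_{u^k,p,n}$, Theorem~3.3 of \cite{Aks-Pkk13} applies to each: its chain conditions for $\text{Res}(C)$ give $g_k(x)\mid g_{k-1}(x)\mid\cdots\mid g_1(x)\mid(\xn)$ and its chain conditions for $\text{Tor}(C)$ give $g_{2k}(x)\mid\cdots\mid g_{k+1}(x)$, which together are (1). For (2) I would note that $a+vb\in C$ forces $v(a+vb)=va\in C$ because $v^2=0$, so $\text{Res}(C)\subseteq\text{Tor}(C)$; applying the inclusion-preserving operations $\text{Res}$ and $\text{Tor}$ to this containment turns it into $C_i\subseteq C_{k+i}$, hence $g_{k+i}(x)\mid g_i(x)$ for $1\le i\le k$.

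The remaining relations all follow the leading-term template, and I would group them into four arguments. Relation (5) is immediate: multiplying $A_m$ by $u^{k-m+1}$ annihilates its entire $u$-part and leaves $vu^{k-m+1}g_{mk}(x)+\cdots\in C$, whose lowest term occupies the slot of $C_{2k-m+2}$, giving $g_{2k-m+2}(x)\mid g_{mk}(x)$. Relation (4) comes from multiplying $A_i$ by the product $\frac{\xn}{g_i(x)}\frac{\xn}{g_{i+1}(x)}\cdots\frac{\xn}{g_{i+j-1}(x)}$: using the lower relations to see that every term below slot $i+j$ is killed, the surviving leading coefficient is exactly $\frac{\xn}{g_i(x)}\cdots\frac{\xn}{g_{i+j-1}(x)}g_{i(i+j-1)}(x)$, lying in $C_{i+j}$. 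Relations (3) and (8) form one induction on $j$: multiplying $A_i$ by $\frac{\xn}{g_i(x)}$ kills the main term, and successively subtracting the multiples $\frac{\xn}{g_i(x)}\frac{s_{i(i+l-1)}(x)}{g_{i+l}(x)}A_{i+l}$ — legitimate because the inductive hypothesis makes each quotient a genuine polynomial — clears the intermediate slots; the recursion defining $s_{i(i+j)}(x)$ is precisely the bookkeeping of these subtractions, and the coefficient left in slot $i+j+1$ is $\frac{\xn}{g_i(x)}s_{i(i+j)}(x)$, whence $g_{i+j+1}(x)\mid\frac{\xn}{g_i(x)}s_{i(i+j)}(x)$, the case $j=0$ recovering (3). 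Relations (6) and (7) form the parallel induction built on $v$-combinations: since $g_{k+i}(x)\mid g_i(x)$ by (2), the element $vA_i-\frac{g_i(x)}{g_{k+i}(x)}A_{k+i}\in C$ has its top $v$-term cancelled and equals $v\bigl(u^i r_{ii}(x)+\cdots\bigr)$, and subtracting the further multiples $\frac{r_{i(i+l-1)}(x)}{g_{k+i+l}(x)}A_{k+i+l}$ produces the recursion for $r_{i(i+j)}(x)$ and leaves $r_{i(i+j)}(x)$ in slot $k+i+j+1$, giving $g_{k+i+j+1}(x)\mid r_{i(i+j)}(x)$.

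The main obstacle is the crossover at index $k$, where the arguments for (3), (4) and (8) must pass from the purely-$u$ generators $A_1,\dots,A_k$ to the $v$-generators $A_{k+1},\dots,A_{2k}$: as $i+j$ increases through $k$ the surviving leading monomial changes type from $u^{k-1}$ to $v$, and one must check that the slot read off is still $C_{i+j+1}$ and that the coefficients $g_{(i+l)(i+j)}(x)$ feeding the recursions are taken from the correct block. The second delicate point, which is where the inductive hypotheses are actually consumed, is that every division step must respect the degree bounds of Theorem~\ref{unique}: one has to verify at each stage that the quotients $\frac{s_{i(i+l-1)}(x)}{g_{i+l}(x)}$ and $\frac{r_{i(i+l-1)}(x)}{g_{k+i+l}(x)}$ really are polynomials (this is exactly the previously established instance of (3), (8) or (6), (7)) and that the subtracted multiples do not reintroduce terms in already-cleared slots, so that the induction on $j$ terminates with the stated leading coefficient.
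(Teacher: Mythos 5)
Your proposal is correct and follows essentially the same route as the paper: each divisibility is obtained by exhibiting an explicit element of $C$ (a multiple of some $A_m$ by powers of $u$, $v$, or $\frac{x^n-1}{g_\bullet(x)}$, minus correcting multiples of later generators) and reading off its lowest surviving coefficient in the appropriate ideal $C_j$. The only imprecision is your parenthetical claim that the quotients $\frac{s_{i(i+l-1)}(x)}{g_{i+l}(x)}$ are themselves polynomials by the earlier instances of (3)/(8) --- those instances only show that the full multipliers $\frac{x^n-1}{g_i(x)}\cdot\frac{s_{i(i+l-1)}(x)}{g_{i+l}(x)}$ are polynomials, which is what the paper uses and is all your subtraction actually needs (for the $r$-recursion your justification via (6)/(7) is exact).
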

\begin{proof}
\begin{enumerate}
\setcounter{enumi}{1}
\item For $1\leq i \leq k-1$, we have $uA_i\in C$. Therefore, $g_i(x)\in C_{i+1}=\langle g_{i+1}(x)\rangle$. This gives $g_{i+1}(x)|g_i(x)$. Again, for $1\leq i \leq k-1$, we have $uA_{k+i}\in C$. Therefore, $g_{k+i}(x)\in C_{k+i+1}=\langle g_{k+i+1}(x)\rangle$. Thus, $g_{k+i+1}(x)|g_{k+i}(x)$.\\
\item For $1\leq i \leq k$, we have $vA_i\in C$. This gives $g_i(x)\in C_{k+i}=\langle g_{k+i}(x)\rangle$. Thus, $g_{k+i}(x)|g_i(x)$.\\
\item For $1\leq i \leq 2k-1$, we have $\frac{x^n-1}{g_i(x)}A_i\in C$. Therefore, $\frac{x^n-1}{g_i(x)}g_{ii}(x)\in C_{i+1}=\langle g_{i+1}(x)\rangle$. Hence, $g_{i+1}(x)|\frac{x^n-1}{g_i(x)}g_{ii}(x)$.\\
\item For $j=1$, Condition 5 is reduced to Condition 4. For $j=2$ and for $1 \leq i \leq k$, we have $\frac{x^n-1}{g_i(x)}\frac{x^n-1}{g_{i+1}(x)}A_i=u^i\frac{x^n-1}{g_i(x)}\frac{x^n-1}{g_{i+1}(x)}g_{ii}(x)+ u^{i+1}\frac{x^n-1}{g_i(x)}\frac{x^n-1}{g_{i+1}(x)}\linebreak g_{i(i+1)}(x)+ \cdots +u^{k-1}\frac{x^n-1}{g_i(x)}\frac{x^n-1}{g_{i+1}(x)}g_{i(k-1)}(x)+ v(\frac{x^n-1}{g_i(x)}\frac{x^n-1}{g_{i+1}(x)}g_{ik}(x)+u\frac{x^n-1}{g_i(x)}\linebreak \frac{x^n-1}{g_{i+1}(x)}g_{i(k+1)}(x)+ \cdots + u^{k-1}\frac{x^n-1}{g_i(x)}\frac{x^n-1}{g_{i+1}(x)}g_{i(2k-1)}(x))$. From Condition 4, we have $g_{i+1}(x)|\frac{x^n-1}{g_i(x)}g_{ii}(x)$. Therefore the coefficient of $u^i$ is\linebreak $\frac{x^n-1}{g_i(x)}\frac{x^n-1}{g_{i+1}(x)}g_{ii}(x)=\frac{x^n-1}{g_{i+1}(x)}(\frac{x^n-1}{g_i(x)}g_{ii}(x))=0$. Thus, $\frac{x^n-1}{g_i(x)}\frac{x^n-1}{g_{i+1}(x)}A_i=u^{i+1}\frac{x^n-1}{g_i(x)}\frac{x^n-1}{g_{i+1}(x)}g_{i(i+1)}(x)+ \cdots +u^{k-1}\frac{x^n-1}{g_i(x)}\frac{x^n-1}{g_{i+1}(x)}g_{i(k-1)}(x)+v(\frac{x^n-1}{g_i(x)}\frac{x^n-
1}{g_{i+1}(x)}\linebreak g_{ik}(x)+u\frac{x^n-1}{g_i(x)}\frac{x^n-1}{g_{i+1}(x)}g_{i(k+1)}(x)+ \cdots +u^{k-1}\frac{x^n-1}{g_i(x)}\frac{x^n-1}{g_{i+1}(x)}g_{i(2k-1)}(x))$. This gives $\frac{x^n-1}{g_i(x)}\frac{x^n-1}{g_{i+1}(x)}g_{i(i+1)}(x) \in C_{i+2}=\langle g_{i+2}(x)\rangle$. Hence, we have\linebreak $g_{i+2}(x)|\frac{x^n-1}{g_i(x)}\frac{x^n-1}{g_{i+1}(x)}g_{i(i+1)}(x)$. Now for $k+1 \leq i \leq 2k-2$. Let $i=k+l$ for $1 \leq l \leq k-2$. We have $\frac{x^n-1}{g_{k+l}(x)}\frac{x^n-1}{g_{k+l+1}(x)}A_{k+l}=v(u^l\frac{x^n-1}{g_{k+l}(x)}\frac{x^n-1}{g_{k+l+1}(x)}\linebreak g_{(k+l)(k+l)}(x)+u^{l+1}\frac{x^n-1}{g_{k+l}(x)}\frac{x^n-1}{g_{k+l+1}(x)}g_{(k+l)(k+l+1)}(x)+ \cdots +u^{k-1}\frac{x^n-1}{g_{k+l}(x)}\frac{x^n-1}{g_{k+l+1}(x)}\linebreak g_{(k+l)(2k-1)}(x))$. From Condition 4, $g_{k+l+1}(x)|\frac{x^n-1}{g_{k+l}(x)}g_{(k+l)(k+l)}(x)$. Therefore, the coefficient of $u^lv$ is $\frac{x^n-1}{g_{k+l}(x)}\frac{x^n-1}{g_{k+l+1}(x)}g_{(k+l)(k+l)}(x)=\frac{x^n-1}{g_{k+l+1}(x)}\frac{x^n-1}{g_{k+l}(x)}\linebreak g_{
(k+l)(k+l)}(x)=0$. This gives $\frac{x^n-1}{g_{k+l}(x)}\frac{x^n-1}{g_{k+l+1}(x)}g_{(
k+l)(k+l+1)}(x) \in C_{k+l+2}=\langle g_{k+l+2}(x)\rangle$. Hence, we have $g_{k+l+2}(x)|\frac{x^n-1}{g_{k+l}(x)}\frac{x^n-1}{g_{k+l+1}(x)}g_{(k+l)(k+l+1)}(x)$. Since $i=k+l$, thus $g_{i+2}(x)|\frac{x^n-1}{g_i(x)}\frac{x^n-1}{g_{i+1}(x)}g_{i(i+1)}(x)$, for $k+1 \leq i \leq 2k-2$. This proves the condition for $j=2$. Similarly for others value of $j$ we can prove the Conditions 5.\\
\item We have $A_i=u^{i-1}g_i(x)+u^ig_{ii}(x)+u^{i+1}g_{i(i+1)}(x)+ \cdots +u^{k-1}g_{i(k-1)}(x)+v(g_{ik}(x)+ug_{i(k+1)}(x)+u^2g_{i(k+2)}(x)+ \cdots +u^{k-1}g_{i(2k-1)}(x))$, for $1\leq i \leq k$. Therefore, $A_{k-(i-2)}=u^{k-(i-1)}g_{k-(i-2)}(x)$ $+$ $u^{k-(i-2)}g_{(k-(i-2))(k-(i-2))}(x)$ $+$ $u^{k-(i-3)}g_{(k-(i-2))(k-(i-3))}(x)$ $+$ $\cdots$ $+$ $u^{k-1}g_{(k-(i-2))(k-1)}(x)+\linebreak v(g_{(k-(i-2))k}(x)$ $+$ $ug_{(k-(i-2))(k+1)}(x)$ $+$ $u^2g_{(k-(i-2))(k+2)}(x)$ $+$ $\cdots$ $+$\linebreak $u^{k-1}g_{(k-(i-2))(2k-1)}(x))$, for $2\leq i \leq k$. Thus, $u^{i-1}A_{k-(i-2)}=\linebreak v(u^{i-1}g_{(k-(i-2))k}(x)+u^ig_{(k-(i-2))(k+1)}(x) +u^{i+1}g_{(k-(i-2))(k+2)}(x)+ \cdots +u^{i+(k-i-2)}g_{(k-(i-2))(2k-i-1)}(x)+u^{i+(k-i-1)}g_{(k-(i-2))(2k-i)}(x))\in C$. This implies that $g_{(k-(i-2))k}(x)\in C_{k+i}=\langle g_{k+i}(x)\rangle$. Hence, the condition $g_{k+i}(x)|g_{(k-(i-2))k}(x)$ for $2\leq i \leq k$ is proved.\\
\item For $1\leq i \leq k-1$, $A_i=u^{i-1}g_i(x)+u^ig_{ii}(x)+u^{i+1}g_{i(i+1)}(x)+ \cdots +u^{k-1}g_{i(k-1)}(x)+v(g_{ik}(x)+ug_{i(k+1)}(x)+u^2g_{i(k+2)}(x)+ \cdots +u^{k-1}g_{i(2k-1)}(x))$ and $A_{k+i}=v(u^{i-1}g_{k+i}(x)+u^ig_{(k+i)(k+i)}(x)+u^{i+1}g_{(k+i)(k+i+1)}(x)+ \cdots +u^{k-1}g_{(k+i)(2k-1)}(x))$. Now $vA_i-\frac{g_i(x)}{g_{k+i}(x)}A_{k+i}=v(u^i(g_{ii}(x)-\frac{g_i(x)}{g_{k+i}(x)}\linebreak g_{(k+i)(k+i)}(x))+u^{i+1}(g_{i(i+1)}(x)-\frac{g_i(x)}{g_{k+i}(x)}g_{(k+i)(k+i+1)}(x))+ \cdots +\linebreak u^{k-1}(g_{i(k-1)}(x)-\frac{g_i(x)}{g_{k+i}(x)}g_{(k+i)(2k-1)}(x))) \in C$. This implies that $g_{ii}(x)-\frac{g_i(x)}{g_{k+i}(x)}g_{(k+i)(k+i)}(x)\in C_{k+i+1}=\langle g_{k+i+1}(x)\rangle$. Hence, $g_{k+i+1}(x)|g_{ii}(x)-\frac{g_i(x)}{g_{k+i}(x)}g_{(k+i)(k+i)}(x)$. That is $g_{k+i+1}(x)|r_{ii}(x)$, where $r_{ii}(x)=g_{ii}(x)-\frac{g_i(x)}{g_{k+i}(x)}g_{(k+i)(k+i)}(x)$.\\
\item From the proof of Condition 7 we have $vA_i-\frac{g_i(x)}{g_{k+i}(x)}A_{k+i}=v(u^i(g_{ii}(x)-\frac{g_i(x)}{g_{k+i}(x)}g_{(k+i)(k+i)}(x))+u^{i+1}(g_{i(i+1)}(x)-\frac{g_i(x)}{g_{k+i}(x)}g_{(k+i)(k+i+1)}(x))+ \cdots +u^{k-1}(g_{i(k-1)}(x)-\frac{g_i(x)}{g_{k+i}(x)}g_{(k+i)(2k-1)}(x)))$. Now $vA_i-\frac{g_i(x)}{g_{k+i}(x)}A_{k+i}-\linebreak \frac{r_{ii}(x)}{g_{k+i+1}(x)}A_{k+i+1}=v(u^{i+1}(g_{i(i+1)}(x)-\frac{g_i(x)}{g_{k+i}(x)}g_{(k+i)(k+i+1)}(x)-\frac{r_{ii}(x)}{g_{k+i+1}(x)}\linebreak g_{(k+i+1)(k+i+1)}(x))+ \cdots +u^{k-1}(g_{i(k-1)}(x)-\frac{g_i(x)}{g_{k+i}(x)}g_{(k+i)(2k-1)}(x)-\frac{r_{ii}(x)}{g_{k+i+1}(x)}\linebreak g_{(k+i+1)(2k-1)}(x)))\in C$. This implies that $g_{i(i+1)}(x)-\frac{g_i(x)}{g_{k+i}(x)}\linebreak g_{(k+i)(k+i+1)}(x)-\frac{r_{ii}(x)}{g_{k+i+1}(x)}g_{(k+i+1)(k+i+1)}(x)\in C_{k+i+2}=\langle g_{k+i+2}(x)\rangle$. Hence, $g_{k+i+2}(x)|g_{i(i+1)}(x)-\frac{g_i(x)}{g_{k+i}(x)}g_{(k+i)(k+i+1)}(x)-\frac{r_{ii}(x)}{g_{k+i+1}(x)}\linebreak g_{(k+i+1)(k+i+1)}(x)$. That is $g_{k+i+2}(x)|r_{i(i+1)}
(x)$, where $r_{i(i+1)}(x)=\linebreak g_{i(i+1)}(x)-\frac{g_i(x)}{
g_{k+i}(x)}g_{(k+i)(k+i+1)}(x)-\frac{r_{ii}(x)}{g_{k+i+1}(x)}g_{(k+i+1)(k+i+1)}(x)$. We have shown for $j=1$. Similarly we can show for other value of $j$.\\
\item For $1\leq i \leq k$, $A_i=u^{i-1}g_i(x)+u^ig_{ii}(x)+u^{i+1}g_{i(i+1)}(x)+ \cdots +u^{k-1}g_{i(k-1)}(x)+v(g_{ik}(x)+ug_{i(k+1)}(x)+u^2g_{i(k+2)}(x)+ \cdots +u^{k-1}g_{i(2k-1)}(x))$ and $A_{k+i}=v(u^{i-1}g_{k+i}(x)+u^ig_{(k+i)(k+i)}(x)+u^{i+1}g_{(k+i)(k+i+1)}(x)+ \cdots +u^{k-1}g_{(k+i)(2k-1)}(x))$. First we prove the condition for $j=1$. For $1\leq i \leq k-1$, We can write for  $\frac{x^n-1}{g_i(x)}A_i-\frac{x^n-1}{g_i(x)}\frac{g_{ii}(x)}{g_{i+1}(x)}A_{i+1}=u^{i+1}(\frac{x^n-1}{g_i(x)}g_{i(i+1)}(x)-\frac{x^n-1}{g_i(x)}\frac{g_{ii}(x)}{g_{i+1}(x)}g_{(i+1)(i+1)}(x))+u^{i+2}(\frac{x^n-1}{g_i(x)}g_{i(i+2)}(x)-\frac{x^n-1}{g_i(x)}\frac{g_{ii}(x)}{g_{i+1}(x)}g_{(i+1)(i+2)}(x))+\cdots+u^{k-1}(\frac{x^n-1}{g_i(x)}g_{i(k-1)}(x)-\frac{x^n-1}{g_i(x)}\frac{g_{ii}(x)}{g_{i+1}(x)}g_{(i+1)(k-1)}(x))+v((\frac{x^n-1}{g_i(x)}g_{ik}(x)-\frac{x^n-1}{g_i(x)}\frac{g_{ii}(x)}{g_{i+1}(x)}g_{(i+1)k}(x))+u(\frac{x^n-1}{g_i(x)}g_{i(k+1)}(x)-\frac{x^n-1}{g_i(x)}\frac{g_{ii}(x)}{g_{
i+1}(x)}g_{(i+1)(k+1)}(x))+\cdots+u^{k-1}(\frac{x^n-1}{g_i(x)}g_{i(2k-1)}(x)-\frac{x^n-1}{g_i(x)}\frac{g_{ii}(x)}{g_{i+1}(x)}g_{(i+1)(2k-1)}(x)))$. Hence, $\frac{x^n-1}{g_i(x)}\linebreak g_{i(i+1)}(x)-\frac{x^n-1}{g_i(x)}\frac{g_{ii}(x)}{g_{i+1}(x)}g_{(i+1)(i+1)}(x) \in C_{i+2}=\langle g_{i+2}(x)\rangle$. Therefore, $g_{i+2}(x)|\frac{x^n-1}{g_i(x)}(g_{i(i+1)}(x)-\frac{g_{ii}(x)}{g_{i+1}(x)}g_{(i+1)(i+1)}(x))$, for $1\leq i \leq k-1$. Similarly by calculating $\frac{x^n-1}{g_k(x)}A_k-\frac{x^n-1}{g_k(x)}\frac{g_{kk}(x)}{g_{k+1}(x)}A_{k+1}$ and $\frac{x^n-1}{g_{k+l}(x)}A_{k+l}-\frac{x^n-1}{g_{k+l}(x)}\linebreak \frac{g_{(k+l)(k+l)}(x)}{g_{k+l+1}(x)}A_{k+l+1}$, for $1\leq l \leq k-1$ we can show $g_{i+2}(x)|\frac{x^n-1}{g_i(x)}(g_{i(i+1)}(x)-\frac{g_{ii}(x)}{g_{i+1}(x)}g_{(i+1)(i+1)}(x))$, for $k\leq i \leq 2k-2$. This implies that $g_{i+2}(x)|\frac{x^n-1}{g_i(x)}\linebreak s_{i(i+1)}(x)$, where $s_{ii}(x)=g_{ii}(x)$ and $s_{i(i+1)}(x)=g_{i(i+1)}(x)-\frac{s_{ii}(x)}{g_{i+1}(x)}\linebreak g_{(i+1)(i+1)}(x)$. Now 
we prove for $j=2$. For $1\leq i \leq k-2$ we can write $\frac{x^n-1}{g_i(x)}A_i-\frac{x^n-1}{g_i(x)}\frac{s_{ii}(x)}{g_{i+1}(x)}A_{i+1}-\frac{x^n-1}{g_i(x)}\frac{s_{i(i+1)}(x)}{g_{i+2}(x)}A_{i+2}=u^{i+2}\frac{x^n-1}{g_i(x)}(g_{i(i+2)}(x)-\frac{s_{ii}(x)}{g_{i+1}(x)}g_{(i+1)(i+2)}(x)- \frac{s_{i(i+1)}}{g_{i+2}(x)}g_{(i+2)(i+2)}(x))$ $+\cdots+$ $u^{k-1}(\frac{x^n-1}{g_i(x)}g_{i(k-1)}(x)-\frac{x^n-1}{g_i(x)}\frac{s_{ii}(x)}{g_{i+1}(x)}g_{(i+1)(k-1)}(x)-\frac{x^n-1}{g_i(x)}\frac{s_{i(i+1)}(x)}{g_{i+2}(x)}g_{(i+2)(k-1)}(x))+v((\frac{x^n-1}{g_i(x)}g_{ik}(x)-\frac{x^n-1}{g_i(x)}\frac{s_{ii}(x)}{g_{i+1}(x)}g_{(i+1)k}(x)- \frac{x^n-1}{g_i(x)}\frac{s_{i(i+1)}(x)}{g_{i+2}(x)}g_{(i+2)k}(x))+u(\frac{x^n-1}{g_i(x)}g_{i(k+1)}(x)-\frac{x^n-1}{g_i(x)}\linebreak \frac{s_{ii}(x)}{g_{i+1}(x)}g_{(i+1)(k+1)}(x)-\frac{x^n-1}{g_i(x)}\frac{s_{i(i+1)}(x)}{g_{i+2}(x)}g_{(i+2)(k+1)}(x))+\cdots+u^{k-1}(\frac{x^n-1}{g_i(x)}\linebreak g_{i(2k-1)}(x)-\frac{x^n-1}{g_i(x)}\frac{s_{ii}(x)}{g_{i+1}(x)}g_{(i+1)(2k-1)}(x)-\frac{x^n-1}{g_i(x)}\
\frac{s_{i(i+1)}(x)}{g_{i+2}(x)}g_{(i+2)(2k-1)}(x)))$.\linebreak Therefore, $\frac{x^n-1}{g_i(x)}(g_{i(i+2)}(x)-\frac{s_{ii}(x)}{g_{i+1}(x)}g_{(i+1)(i+2)}(x)-\frac{s_{i(i+1)}(x)}{g_{i+2}(x)}g_{(i+2)(i+2)}(x))\in C_{i+3}=\langle g_{i+3}(x)\rangle$. Hence, $g_{i+3}(x)|\frac{x^n-1}{g_i(x)}(g_{i(i+2)}(x)-\frac{s_{ii}(x)}{g_{i+1}(x)}g_{(i+1)(i+2)}(x)-\frac{s_{i(i+1)}(x)}{g_{i+2}(x)}g_{(i+2)(i+2)}(x))$ or $g_{i+3}(x)|\frac{x^n-1}{g_i(x)}s_{i(i+2)}(x)$, where $s_{i(i+2)}(x)=\linebreak (g_{i(i+2)}(x)-\frac{s_{ii}(x)}{g_{i+1}(x)}g_{(i+1)(i+2)}(x)-\frac{s_{i(i+1)}(x)}{g_{i+2}(x)}g_{(i+2)(i+2)}(x))$. Similarly by calculating $\frac{x^n-1}{g_{k-1}(x)}A_{k-1}-\frac{x^n-1}{g_{k-1}(x)}\frac{s_{(k-1)(k-1)}(x)}{g_k(x)}A_k-\frac{x^n-1}{g_{k-1}(x)}\frac{s_{(k-1)k}(x)}{g_{k+1}(x)}A_{k+1}$, $\frac{x^n-1}{g_k(x)}A_k-\frac{x^n-1}{g_k(x)}\frac{s_{kk}(x)}{g_{k+1}(x)}A_{k+1}-\frac{x^n-1}{g_k(x)}\frac{s_{k(k+1)}(x)}{g_{k+2}(x)}A_{k+2}$ and $\frac{x^n-1}{g_{k+l}(x)}A_{k+l}-\frac{x^n-1}{g_{k+l}(x)}\linebreak \frac{s_{(k+l)(k+l)}(x)}{g_{
k+l+1}(x)}A_{k+l+1}-\frac{x^n-1}{g_{k+l}(x)}\frac{s_{(k+l)(k+l+1)}(x)}{g_{k+l+2}(x)}
A_{k+l+2}$, for $1\leq l \leq k-2$ we can show $g_{i+3}(x)|\frac{x^n-1}{g_i(x)}s_{i(i+2)}(x)$, for $k-1\leq i \leq 2k-2$. By the same fashion we can prove the Condition 9 for the others value of $j$.
\end{enumerate}
\end{proof}
The following theorem characterizes the free cyclic codes over  $R_{u^k,v^2,p}$.

\begin{theo} \label{freecode}
If $C=\langle A_1, A_2, \cdots, A_{2k}\rangle$ is a cyclic code over the ring $R_{u^k,v^2,p}$, then $C$ is a free cyclic code if and only if $g_1(x)=g_{2k}(x)$. In this case, we have $C=\langle A_1\rangle$ and $A_1|(x^n-1)$ in $R_{u^k,v^2,p}$.
\end{theo}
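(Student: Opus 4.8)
The plan is to convert the single hypothesis $g_1(x)=g_{2k}(x)$ into a rigid normal form and then to exploit the regularity of $A_1$. First I would record that the divisibility relations of Theorem~\ref{properties}, namely $g_{2k}(x)\mid g_{2k-1}(x)\mid\cdots\mid g_{k+1}(x)$, $g_k(x)\mid\cdots\mid g_1(x)\mid(x^n-1)$ and $g_{k+i}(x)\mid g_i(x)$, chain \emph{every} generator between $g_{2k}(x)$ and $g_1(x)$: one has $g_{2k}\mid g_i\mid g_1$ for all $1\le i\le 2k$ (using $g_{k+1}\mid g_1$ to bridge the two towers). Consequently the condition $g_1=g_{2k}$ is \emph{equivalent} to the assertion that all $g_i(x)$ equal one monic polynomial $g(x)$ of some degree $d$. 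This normal form is what I would use in the forward implication.

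For $g_1=g_{2k}\Rightarrow C$ free, the pivotal step is a minimal-degree lemma: when every $g_i(x)=g(x)$, each nonzero element of $C$ has degree $\ge d$. I would prove it using the filtration encoded in the ideals $C_i$ on page~\pageref{C_i's}. Writing a putative low-degree $c\in C$ as $c=\sum_{i=0}^{k-1}u^iA_i(x)+v\sum_{i=0}^{k-1}u^iB_i(x)$ with $\deg A_i,\deg B_i<d$, successive reduction modulo $\langle u,v\rangle,\langle u^2,v\rangle,\dots,\langle v\rangle$ and then modulo $\langle uv\rangle,\langle u^2v\rangle,\dots$ forces each $A_i$ into $C_{i+1}=\langle g\rangle$ and each $B_i$ into $C_{k+i+1}=\langle g\rangle$; since nonzero elements of $\langle g\rangle$ have degree $\ge d$, all $A_i,B_i$ vanish and $c=0$. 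Because $\mu(A_1)=g(x)\ne0$, Proposition~\ref{regular-poly} makes $A_1$ regular of degree $d$ with unit leading coefficient, so Proposition~\ref{division-alg} yields $x^n-1=A_1q(x)+s(x)$ with $\deg s<d$. As $s\equiv -A_1q\pmod{x^n-1}$ lies in $\langle A_1\rangle\subseteq C$ and has degree $<d$, the lemma gives $s=0$, i.e.\ $A_1\mid(x^n-1)$ in $R_{u^k,v^2,p}[x]$.

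Regularity then makes $\langle A_1\rangle$ behave like a classical divisor code: $\{A_1,xA_1,\dots,x^{\,n-d-1}A_1\}$ is an $R_{u^k,v^2,p}$-basis (independence again coming from dividing by the non--zero-divisor $A_1$), so $\langle A_1\rangle$ is free of rank $n-d$ with $p^{2k(n-d)}$ elements. On the other hand $|C|=|\text{Im}\phi|\cdot|J|=p^{\sum_{i=1}^{2k}(n-\deg g_i)}=p^{2k(n-d)}$, where the middle equality applies the known cardinalities of cyclic codes over $R_{u^k,p}$ from \cite{Aks-Pkk13} to $\text{Res}(C)=\text{Im}\phi$ (tower $g_1,\dots,g_k$) and $\text{Tor}(C)=J$ (tower $g_{k+1},\dots,g_{2k}$). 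Since $\langle A_1\rangle\subseteq C$ and both are finite of the same size, $C=\langle A_1\rangle$ is free, which also settles the two ``in this case'' claims.

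For the converse I would argue through the socle $\text{Soc}(C)=\{c\in C:\langle u,v\rangle c=0\}$. A direct computation gives $\text{Ann}_{R_{u^k,v^2,p,n}}(\langle u,v\rangle)=u^{k-1}v\,\tfrac{\F_p[x]}{\langle x^n-1\rangle}$, whence $\text{Soc}(C)=u^{k-1}v\,C_{2k}$ and $\dim_{\F_p}\text{Soc}(C)=n-\deg g_{2k}$. If $C$ is free of rank $r$ then $\text{Soc}(C)\cong\text{Soc}(R_{u^k,v^2,p})^{\,r}$, and since $\dim_{\F_p}\text{Soc}(R_{u^k,v^2,p})=1$ we get $r=n-\deg g_{2k}$. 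Equating $|C|=p^{2kr}$ with $|C|=p^{\sum_{i=1}^{2k}(n-\deg g_i)}$ yields $\sum_{i=1}^{2k}(\deg g_{2k}-\deg g_i)=0$; each summand is $\le 0$ by $g_{2k}\mid g_i$, so $\deg g_i=\deg g_{2k}$ for all $i$, and monicity upgrades this to $g_i=g_{2k}$, in particular $g_1=g_{2k}$. I expect the genuine obstacle to be controlling the zero divisors of $R_{u^k,v^2,p}$: the forward direction rests entirely on the minimal-degree lemma, which is precisely what prevents division by the merely regular (non-monic) $A_1$ from leaving a spurious remainder and what keeps the powers $x^iA_1$ independent; the socle identification and the cardinality bookkeeping are then routine given the structure theory of \cite{Aks-Pkk13}.
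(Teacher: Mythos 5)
Your proposal is correct, but it reaches the theorem by a genuinely different route in both directions. For the forward implication the paper, after collapsing all $g_i(x)$ to a common $g(x)$ exactly as you do, invokes Theorem 3.3 of \cite{Aks-Pkk13} to reduce $\text{Im}\phi$ and $\ker\phi$ to single generators and then shows by an explicit coefficient-by-coefficient degree argument that $g_{1i}(x)=g_{(k+1)(k+i)}(x)$ for all $i$, hence $vA_1=A_{k+1}$ and $C=\langle A_1\rangle\simeq R_{u^k,v^2,p}^{\,n-\deg g_1}$; you instead prove a minimal-degree lemma via the filtration $C_1,\dots,C_{2k}$, deduce $A_1\mid(x^n-1)$ and the freeness of $\langle A_1\rangle$ directly, and then close the gap $\langle A_1\rangle\subseteq C$ by a cardinality count $|C|=|\text{Im}\phi|\cdot|\ker\phi|=p^{2k(n-d)}=|\langle A_1\rangle|$. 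Your version trades the paper's explicit structural identity $vA_1=A_{k+1}$ for a dependence on the cardinality formulas for cyclic codes over $\F_p[u]/\langle u^k\rangle$ from \cite{Aks-Pkk13}; both are legitimate, though the paper's computation yields slightly more information about the generators. For the converse the difference is more substantive: the paper simply asserts that a free cyclic code ``must have'' $C=\langle A_1\rangle$ and then compares $u^{k-1}vg_{2k}(x)$ with scalar multiples of $u^{k-1}vg_1(x)$, whereas your socle argument ($\text{Soc}(C)=u^{k-1}vC_{2k}$, $\dim_{\F_p}\text{Soc}(R_{u^k,v^2,p})=1$, so free rank $r=n-\deg g_{2k}$, then force $\sum_i(\deg g_{2k}-\deg g_i)=0$) works for a free module of arbitrary rank and any generating set, and so actually repairs the one under-justified step in the paper's proof. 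Two small points of care: your ``minimal-degree lemma'' should be phrased in terms of all coefficient polynomials over $\F_p$ having degree less than $d$ rather than in terms of the paper's $\deg=\deg\circ\mu$ (under which non-regular elements such as $u^{k-1}vg(x)$ have no well-defined degree), and the bound $\deg s<d$ on the remainder needs the observation, which you do make, that $A_1$ has unit leading coefficient in degree $d$ because $\deg(g_{1j})<d$ by Theorem \ref{unique}; with those readings your argument is sound.
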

\begin{proof}
Let $g_1(x)=g_{2k}(x)$. From Condition 2 and condition 3 of Theorem \ref{properties} we have $g_{2k}(x)|g_{2k-1}(x)|$ $\cdots$ $|g_{k+2}(x)|g_{k+1}(x)$, $g_k(x)|g_{k-1}(x)| \cdots |g_2(x)|g_1(x)$ and $g_{k+i}(x)|g_i(x)$, for $1\leq i \leq k$, this gives $g_1(x)=g_2(x)= \cdots =g_{2k}(x)$. Here, we have $\text{Im}\phi=\langle g_1(x)+ug_{11}(x)+\cdots+u^{k-1}g_{1(k-1)}(x), ug_2(x)+u^2g_{22}(x)+\cdots+u^{k-1}g_{2(k-1)}(x), \cdots, u^{k-2}g_{k-1}(x)+u^{k-1}g_{(k-1)(k-1)}(x), u^{k-1}g_k(x)\rangle$ and $\text{ker}\phi=v\langle g_{k+1}(x)+ug_{(k+1)(k+1)}(x)+\cdots+u^{k-1}g_{(k+1)(2k-1)}(x), ug_{k+2}(x)+u^2g_{(k+2)(k+2)}(x)+\cdots+u^{k-1}g_{(k+2)(2k-1)}(x)$, $\cdots, u^{k-2}g_{2k-1}(x)+u^{k-1}g_{(2k-1)(2k-1)}(x)$, $u^{k-1}g_{2k}(x)\rangle$. (See Equation \ref{surj-hom} for the definition of $\phi$). From the Theorem 3.3 of \cite{Aks-Pkk13}, we get that if $g_1(x)=g_k(x)$ then $\text{Im}\phi=\langle g_1(x)+ug_{11}(x)+\cdots+u^{k-1}g_{1(k-1)}(x)\rangle$ and if $g_{k+1}(x)=g_{2k}(x)$ then $\text{ker}\phi=v\langle g_{
k+1}(x)+ug_{(k+1)(k+1)}(x)+\cdots+u^{k-1}g_{(k+1)(2k-1)}(x)\rangle$. Therefore, we can write $C=\langle g_1(x)+ug_{11}(x)+\
\cdots+u^{k-1}g_{1(k-1)}(x)+v(g_{1k}(x)+ug_{1(k+1)}(x)+\cdots+u^{k-1}g_{1(2k-1)}(x)), v(g_{k+1}(x)+ug_{(k+1)(k+1)}(x)+\cdots+u^{k-1}g_{(k+1)(2k-1)}(x))\rangle$. Now we show that $g_{1i}(x)=\linebreak g_{(k+1)(k+i)}(x)$ for $1 \leq i\leq k-1$. We can write $vA_1-\frac{g_1(x)}{g_{k+1}(x)}A_{k+1}=uv(g_{11}(x)-\frac{g_1(x)}{g_{k+1}(x)}g_{(k+1)(k+1)}(x))+u^2v(g_{12}(x)-\frac{g_1(x)}{g_{k+1}(x)}g_{(k+1)(k+2)}(x))+ \cdots +u^{k-1}v(g_{1(k-1)}(x)-\frac{g_1(x)}{g_{k+1}(x)}g_{(k+1)(2k-1)}(x))\in C$. This implies that $g_{11}(x)-\frac{g_1(x)}{g_{k+1}(x)}g_{(k+1)(k+1)}(x)\in C_{k+2}=\langle g_{k+2}(x)\rangle$. This gives $g_{k+2}(x)|(g_{11}(x)-\frac{g_1(x)}{g_{k+1}(x)}g_{(k+1)(k+1)}(x))$. Since $g_1(x)=g_{k+1}(x)=g_{k+2}(x)$, we get $g_1(x)|(g_{11}(x)-g_{(k+1)(k+1)}(x))$. Note that $\text{deg}(g_{11}(x)),\text{deg}(g_{(k+1)(k+1)}(x))<\text{deg}(g_1(x))$, this implies that $g_{11}(x)-\linebreak g_{(k+1)(k+1)}(x)=0$. Therefore, $g_{11}(x)=g_{(k+1)(k+1)}(x)$. Hence, $vA_1-\frac{g_1(x)}{g_{k+1}(x)}A_{k+1}=u^2v(g_{12}(x)-\
\frac{g_1(x)}{g_{k+1}(x)}
g_{(k+1)(
k+2)}(x))+u^3v(g_{13}(x)-\frac{g_1(x)}{g_{k+1}(x)}g_{(k+1)(k+3)}(x))$ $+\cdots+$ $u^{k-1}v(g_{1(k-1)}(x)-\frac{g_1(x)}{g_{k+1}(x)}g_{(k+1)(2k-1)}(x)) \in C$. Again, from the above expression it is easy to see that $g_{12}(x)-\frac{g_1(x)}{g_{k+1}(x)}g_{(k+1)(k+2)}(x)\in C_{k+3}=\langle g_{k+3}(x)\rangle$. In a similar way, as above we can show $g_{12}(x)=g_{(k+1)(k+2)}(x)$. By continuing this way, we can show $g_{1i}(x)=g_{(k+1)(k+i)}(x)$ for $3 \leq i \leq k-1$. This implies that $vA_1=A_{k+1}$. This gives $C=\langle g_1(x)+ug_{11}(x)+\cdots+u^{k-1}g_{1(k-1)}(x)+v(g_{1k}(x)+ug_{1(k+1)}(x)+\cdots+u^{k-1}g_{1(2k-1)}(x))\rangle$ and $C \simeq R_{u^k,v^2,p}^{n-{\rm deg}(g_1(x))}$. Hence, $C$ is a free cyclic code. Conversely, if $C$ is a free cyclic code, we must have $C =\langle g_1(x)+ug_{11}(x)+\cdots+u^{k-1}g_{1(k-1)}(x)+v(g_{1k}(x)+ug_{1(k+1)}(x)+\cdots+u^{k-1}g_{1(2k-1)}(x))\rangle$. Since $u^{k-1}v g_{2k}(x) \in C$, we get $u^{k-1}vg_{2k}(x) = u^{k-1}v \
\alpha g_1(x)$ for some $\
\alpha \in \F_p$. Note that $g_{2k}(x)|g_1(x)$, hence by comparing the coefficients both sides, we get $g_1(x) = g_{2k}(x)$. For the second condition, by division algorithm, we have $x^n-1=A_1q(x)+r(x)$, where $r(x)=0$ or $\text{deg}(r(x))<\text{deg}(g_1(x))$. This implies that $r(x)=(x^n-1)-A_1q(x)\in C$. Note that $A_1$ is the lowest degree polynomial in $C$. So $r(x)=0$. Hence, $A_1|(x^n-1)$ in $R_{u^k,v^2,p}$.
\end{proof}

Note that we get the simpler form for the generators of the cyclic code over $R_{u^k,v^2,p}$, like in the above theorem, if we have $g_1(x) = g_2(x)=\cdots=g_i(x)$, for $2 \leq i \leq 2k-1$ and $g_{2k}(x)=g_{2k-1}(x)=\cdots=g_i(x)$ for $2 \leq i \leq 2k-1$.

\subsection{\rm When $n$ is relatively prime to $p$} $ $

Let $n$ be relatively prime to $p$. If $C=\langle A_1, A_2, \cdots, A_{2k}\rangle$ is a cyclic code of length $n$ over the ring $R_{u^k,v^2,p}$ then we have $\text{Im}\phi=\langle g_1(x)+ug_{11}(x)+\cdots+u^{k-1}g_{1(k-1)}(x), ug_2(x)+u^2g_{22}(x)+\cdots+u^{k-1}g_{2(k-1)}(x), \cdots, u^{k-2}g_{k-1}(x)+u^{k-1}g_{(k-1)(k-1)}(x), u^{k-1}g_k(x)\rangle$ and $\text{ker}\phi=v\langle g_{k+1}(x)+ug_{(k+1)(k+1)}(x)+\cdots+u^{k-1}g_{(k+1)(2k-1)}(x), ug_{k+2}(x)+u^2g_{(k+2)(k+2)}(x)+\cdots+u^{k-1}g_{(k+2)(2k-1)}(x)$, $\cdots,\linebreak u^{k-2}g_{2k-1}(x)+u^{k-1}g_{(2k-1)(2k-1)}(x)$, $u^{k-1}g_{2k}(x)\rangle$. (See Equation \ref{surj-hom} for the definition of $\phi$). Since $n$ is relatively prime to $p$ then from Theorem 3.4 of \cite{Aks-Pkk13}, we have $\text{Im}\phi=\langle g_1(x)+ug_2(x)+\cdots+u^{k-1}g_k(x)\rangle$ and $\text{ker}\phi=v\langle g_{k+1}(x)+ug_{k+2}(x)+\cdots+u^{k-1}g_{2k}(x)\rangle$ with $g_{2k}(x)|g_{2k-1}(x)|$ $\cdots$ $|g_{k+2}(x)|g_{k+1}(x)$ and $g_k(x)|g_{k-1}(x)| \cdots |g_2(x)|g_1(x)$. We 
also have 
the condition $g_{k+i}(x)|g_i(x)$ for $1\leq i \leq k$ from Condition 3 of Theorem \ref{properties}. Therefore, the code $C$ can be written as $C=\langle g_1(x)+ug_2(x)+\cdots+u^{k-1}g_k(x)+v(g_{1k}(x)+ug_{1(k+1)}(x)+\cdots+u^{k-1}g_{1(2k-1)}(x)), v(g_{k+1}(x)+ug_{k+2}(x)+\cdots+u^{k-1}g_{2k}(x))\rangle$ with the same conditions as above on $g_i(x)$'s. From Condition 5 of Theorem \ref{properties}, for $i=1$ and $k\leq j \leq 2k-1$, we get $g_{j+1}(x)|\frac{x^n-1}{g_1(x)}\frac{x^n-1}{g_2(x)} \cdots \frac{x^n-1}{g_j(x)}g_{1j}(x)$. Since $n$ is relatively prime to $p$, $x^n-1$ can be uniquely factored as product of distinct irreducible factors. Therefore, we must have $\text{g.c.d.}\left(g_{j+1}(x), \frac{x^n-1}{g_l(x)}\right) = 1$, for $1\leq l \leq j$. This gives $g_{j+1}(x)|g_{1j}(x)$. From Theorem \ref{unique}, we have $\text{deg}(g_{1j}(x))<\text{deg}(g_{j+1}(x))$, for $k\leq j \leq 2k-1$. This gives $g_{1j}(x)=0$, for $k\leq j \leq 2k-1$. Thus we have proved the following theorem.

\begin{theo}\label{relatively-prime}
Let $C=\langle A_1, A_2, \cdots, A_{2k}\rangle$ be a cyclic code over the ring $R_{u^k,v^2,p}$ of length $n$. If $n$ is relatively prime to $p$, then we have $C=\langle g_1(x)+ug_2(x)+\cdots+u^{k-1}g_k(x), v(g_{k+1}(x)+ug_{k+2}(x)+\cdots+u^{k-1}g_{2k}(x))\rangle$ with the condition $g_{2k}(x)|g_{2k-1}(x)|$ $\cdots$ $|g_{k+2}(x)|g_{k+1}(x)$, $g_k(x)|g_{k-1}(x)| \cdots |g_2(x)|g_1(x)$ and $g_{k+i}(x)|g_i(x)$ for $1\leq i \leq k$.
\end{theo}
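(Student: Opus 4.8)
The plan is to reduce the statement to the already-known classification of cyclic codes over $R_{u^k,p}$ and then to show that the relative primality of $n$ and $p$ forces the $v$-components of the first generator to vanish. First I would record that $\mathrm{Im}\,\phi=\mathrm{Res}(C)$ and $\ker\phi=v\,\mathrm{Tor}(C)$ are ideals of $R_{u^k,p,n}$, hence cyclic codes over $R_{u^k,p}$, and write each in the general form supplied by Theorem 3.3 of \cite{Aks-Pkk13}. Because $\gcd(n,p)=1$, Theorem 3.4 of \cite{Aks-Pkk13} collapses each of these to a single generator, namely $\mathrm{Im}\,\phi=\langle g_1(x)+ug_2(x)+\cdots+u^{k-1}g_k(x)\rangle$ and $\mathrm{Tor}(C)=\langle g_{k+1}(x)+ug_{k+2}(x)+\cdots+u^{k-1}g_{2k}(x)\rangle$, together with the divisibility chains $g_{2k}(x)|\cdots|g_{k+1}(x)$ and $g_k(x)|\cdots|g_1(x)$.

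Next I would lift these two generators back to $C$. Any preimage under $\phi$ of the generator of $\mathrm{Im}\,\phi$ produces an element $A_1=g_1(x)+ug_2(x)+\cdots+u^{k-1}g_k(x)+v\big(g_{1k}(x)+ug_{1(k+1)}(x)+\cdots+u^{k-1}g_{1(2k-1)}(x)\big)$, while the generator of $\ker\phi$ furnishes $A_{k+1}=v\big(g_{k+1}(x)+ug_{k+2}(x)+\cdots+u^{k-1}g_{2k}(x)\big)$. A standard pullback argument then gives $C=\langle A_1,A_{k+1}\rangle$: for $c\in C$ one has $\phi(c)=r(x)\phi(A_1)$ for some $r(x)$, whence $c-r(x)A_1\in\ker\phi=\langle A_{k+1}\rangle$. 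I would also carry along the cross relations $g_{k+i}(x)|g_i(x)$ for $1\le i\le k$ from Condition 3 of Theorem \ref{properties}, which are part of the asserted conclusion.

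The substance of the theorem is that the $v$-part of $A_1$ is superfluous, i.e.\ $g_{1j}(x)=0$ for $k\le j\le 2k-1$. For this I would specialize Condition 5 of Theorem \ref{properties} to $i=1$, obtaining $g_{j+1}(x)\ \big|\ \frac{x^n-1}{g_1(x)}\frac{x^n-1}{g_2(x)}\cdots\frac{x^n-1}{g_j(x)}\,g_{1j}(x)$ for each such $j$. Since $\gcd(n,p)=1$, the polynomial $x^n-1$ is squarefree, so every $g_i(x)$ is a squarefree divisor of $x^n-1$ and divisibility is governed entirely by subsets of irreducible factors. Using the two chains above together with the cross relations, I would argue that $g_{j+1}(x)$ is coprime to each cofactor $\frac{x^n-1}{g_l(x)}$ in the product, which forces $g_{j+1}(x)\mid g_{1j}(x)$; combined with the degree bound $\deg g_{1j}(x)<\deg g_{j+1}(x)$ from Theorem \ref{unique}, this yields $g_{1j}(x)=0$. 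Discarding the vanishing terms leaves exactly the two-generator form in the statement.

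The hard part is the coprimality step: establishing $\gcd\big(g_{j+1}(x),\frac{x^n-1}{g_l(x)}\big)=1$ for every $1\le l\le j$ is equivalent, by squarefreeness, to checking $g_{j+1}(x)\mid g_l(x)$ across the whole range $1\le l\le j$. The torsion indices $l>k$ follow immediately from the chain $g_{2k}(x)|\cdots|g_{k+1}(x)$, and the residue indices with index at most the torsion offset follow by routing through the cross relations $g_{k+i}(x)|g_i(x)$; the remaining residue indices are where the two chains must be combined most carefully, and I expect to lean on Conditions 6--9 of Theorem \ref{properties} to close this. This divisibility bookkeeping, rather than any conceptual difficulty, is where the argument will demand the most attention.
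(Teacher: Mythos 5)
Your proposal retraces the paper's own proof almost line for line: reduce $\mathrm{Im}\,\phi$ and $\mathrm{Tor}(C)$ to single generators via Theorem 3.4 of \cite{Aks-Pkk13}, pull back, and then try to kill the $v$-part of the first generator by combining Condition 5 of Theorem \ref{properties} (at $i=1$) with the squarefreeness of $x^n-1$ and the degree bounds of Theorem \ref{unique}. The one place you depart from the paper is that you state explicitly what the coprimality step actually requires, namely $g_{j+1}(x)\mid g_l(x)$ for \emph{every} $1\le l\le j$, and you concede that you cannot derive this for the residue indices $2\le l\le k$ from the two chains and the cross relations. The paper simply asserts $\gcd\bigl(g_{j+1}(x),\tfrac{x^n-1}{g_l(x)}\bigr)=1$ for all $1\le l\le j$ at that point, with no justification.

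The step you flagged is a genuine gap, not bookkeeping. Already for $k=2$, $j=2$, $l=2$ one needs $g_3(x)\mid g_2(x)$, and the available relations ($g_2\mid g_1$, $g_4\mid g_3$, $g_3\mid g_1$, $g_4\mid g_2$) do not yield it. Concretely, if $x^n-1=f_1f_2f_3f_4\cdots$ is a product of at least four distinct irreducibles and $C=\langle f_1f_2f_3+uf_1f_2+vf_2,\ v(f_2f_3+uf_2)\rangle$, a component-wise computation in $\prod_i\F_p[x]/\langle f_i\rangle$ gives $g_1=f_1f_2f_3$, $g_2=f_1f_2$, $g_3=f_2f_3$, $g_4=f_2$, so $\gcd\bigl(g_3,\tfrac{x^n-1}{g_2}\bigr)=f_3\neq 1$ and Condition 5 only forces $f_2\mid g_{12}$, not $g_{12}=0$; moreover the same computation shows $vf_2\notin C$, hence $g_1+ug_2\notin C$, so the asserted two-generator form is not even contained in $C$. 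Conditions 6--9 cannot rescue this: they are necessary conditions satisfied by every ideal, in particular by this one. So your argument, like the paper's, breaks exactly where you said it might; closing it would require an extra hypothesis (e.g.\ $g_{k+1}(x)\mid g_k(x)$, which makes $g_{j+1}\mid g_l$ hold across the whole range and lets your coprimality bookkeeping go through).
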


\section{Ranks and minimal spanning sets}\label{rank}
In this section, we find the rank and minimal spanning set of a cyclic code $C$ over the ring $R_{u^k,v^2,p}$. We follow Dougherty and Shiromoto \cite[page 401]{Dou-Shiro01} for the definition of the rank of a code $C$. We first prove the number of lemmas that we use to find the rank and minimal spanning set of these cyclic codes.\\\\

Let $C=\langle A_1, A_2, \cdots, A_{2k}\rangle$ be a cyclic code over the ring $R_{u^k,v^2,p}$ (see page \pageref{A_i's} for $A_i$'s). We know that $C_i=\langle g_i(x)\rangle$, for $1 \leq i \leq 2k$ (see page \pageref{C_i's}). Let  $t_i={\rm deg}(g_i(x))$ for $1 \leq i \leq 2k$. From Conditions 2 and Condition 3 of Theorem \ref{properties}, we have $g_{2k}(x)|g_{2k-1}(x)| \cdots |g_{k+2}(x)|g_{k+1}(x)$, $g_k(x)|g_{k-1}(x)| \cdots |g_2(x)|g_1(x)$ and $g_{k+i}(x)|g_i(x)$ for $1\leq i \leq k$. Therefore, we get $t_1 \geq t_2 \geq \cdots \geq t_k$, $t_{k+1} \geq t_{k+2} \geq \cdots \geq t_{2k}$ and $t_i \geq t_{k+i}$ for $1\leq i \leq k$.

\begin{lemma} \label{lm-g-p}
Let $C=\langle A_1, A_2, \cdots, A_{2k}\rangle$ be a cyclic code over the ring $R_{u^k,v^2,p}$.  For $1 \leq i \leq k$, we get the following:
\begin{enumerate}
\item Any polynomial in $C$ of the form $v(u^{i-1}p_0(x)+u^ip_1(x)+u^{i+1}p_2(x)+ \cdots +u^{k-1}p_{k-i}(x))$ can be written as $q_0(x)A_{k+i}+q_1(x)A_{k+i+1}+ \cdots +q_{k-i-1}(x)A_{2k-1}+q_{k-i}(x)A_{2k}$ and 
\item any polynomial in $C$ of the form $u^{i-1}p_0(x)+u^ip_1(x)+u^{i+1}p_2(x)+ \cdots +u^{k-1}p_{k-i}(x)+v(p_{k+1}(x)+up_{k+2}(x)+u^2p_{k+3}(x)+ \cdots +u^{k-1}p_{2k}(x))$ can be written as $q_0(x)A_i+q_1(x)A_{i+1}+ \cdots +q_{k-i-1}(x)A_{k-1}+q_{k-i}(x)A_k+q_{k+1}(x)A_{k+1}+q_{k+2}(x)A_{k+2}+q_{k+3}(x)A_{k+3}+ \cdots +q_{2k-1}(x)A_{2k-1}+q_{2k}(x)A_{2k}$
\end{enumerate}
for some $q_0(x),q_1(x), \cdots, q_{k-i-1}(x), q_{k-i}(x), q_{k+1}(x), \cdots, q_{2k}(x)\in \F_p[x]$.
\end{lemma}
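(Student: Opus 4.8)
The plan is to prove both parts by successive elimination of the lowest surviving power of $u$, exploiting that the relevant ``leading'' coefficient always lands in one of the torsion ideals $C_j=\langle g_j(x)\rangle$ and is therefore divisible by $g_j(x)$. The single structural fact I will use repeatedly is the definition of the ideals $C_j$ on page \pageref{C_i's}: if a polynomial of $C$ is congruent to $u^{i-1}vf(x)$ modulo $\langle u^iv\rangle$ then $f(x)\in C_{k+i}=\langle g_{k+i}(x)\rangle$, and likewise if it is congruent to $u^{i-1}f(x)$ modulo $\langle u^i,v\rangle$ then $f(x)\in C_i=\langle g_i(x)\rangle$. In each case divisibility lets me subtract an explicit $\F_p[x]$-multiple of the generator whose leading term matches (here I use that each $g_j(x)\mid(x^n-1)$, so the quotient may be taken in $\F_p[x]$), strictly raising the lowest power of $u$ present; since that power is bounded by $k-1$, the process terminates after finitely many steps.

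For part (1), I would start from $P=v\bigl(u^{i-1}p_0(x)+u^ip_1(x)+\cdots+u^{k-1}p_{k-i}(x)\bigr)\in C$. Reducing $P$ modulo $\langle u^iv\rangle$ leaves only the term $u^{i-1}vp_0(x)$, so the definition of $C_{k+i}$ forces $p_0(x)\in\langle g_{k+i}(x)\rangle$; writing $p_0(x)=q_0(x)g_{k+i}(x)$ and recalling that the leading term of $A_{k+i}$ is $vu^{i-1}g_{k+i}(x)$, the difference $P-q_0(x)A_{k+i}$ again lies in $C$ but now has lowest $u$-power at least $u^i$ (the higher terms of $A_{k+i}$ merely update $p_1,\dots,p_{k-i}$). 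Repeating with $A_{k+i+1},A_{k+i+2},\dots$ eliminates the coefficients of $u^iv,u^{i+1}v,\dots$ in turn, and after using $A_{2k}=vu^{k-1}g_{2k}(x)$ the remainder is $0$. Collecting the quotients yields the claimed expression $q_0(x)A_{k+i}+\cdots+q_{k-i}(x)A_{2k}$.

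For part (2), I would first clear the part of $P$ not divisible by $v$. The coefficient of $u^{i-1}$ in $P$ is $p_0(x)$, and reducing $P$ modulo $\langle u^i,v\rangle$ shows $p_0(x)\in C_i=\langle g_i(x)\rangle$; subtracting the appropriate multiple of $A_i$ (whose leading term is $u^{i-1}g_i(x)$) removes it and raises the lowest $u$-power of the non-$v$ part. Iterating with $A_{i+1},\dots,A_k$ removes every term $u^{i-1},u^i,\dots,u^{k-1}$, and since each $A_j$ carries a $v$-tail these subtractions only alter the $v$-component. The result $P'=P-\sum_{j=i}^{k}q_{j-i}(x)A_j$ is therefore a polynomial in $C$ of the pure form $v\bigl(p_{k+1}''(x)+up_{k+2}''(x)+\cdots+u^{k-1}p_{2k}''(x)\bigr)$, i.e.\ exactly the shape treated in part (1) with index $1$. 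Applying part (1) expresses $P'$ as a combination of $A_{k+1},\dots,A_{2k}$, and combining with the first stage gives the stated expansion.

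The main obstacle is bookkeeping rather than conceptual: one must verify at every stage that the surviving leading coefficient genuinely lies in the right ideal $\langle g_j(x)\rangle$, which requires tracking exactly how the higher-order terms (and, in part (2), the $v$-tails of $A_i,\dots,A_k$) feed into the later coefficients after each subtraction. Making the congruences modulo $\langle u^iv\rangle$ and modulo $\langle u^i,v\rangle$ precise, and confirming that no generator of index below $k+i$ (in part (1)) or below $i$ (in part (2)) is ever needed, is where care is required; both the termination and the membership claims rest on the monotone increase of the lowest power of $u$ at each step.
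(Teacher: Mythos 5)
Your proposal is correct and follows essentially the same route as the paper: for part (1) you peel off the lowest surviving $u$-power of the $v$-part by observing that its coefficient lies in $C_{k+i},C_{k+i+1},\dots,C_{2k}$ in turn and subtracting the matching multiple of $A_{k+i},\dots,A_{2k}$, which is exactly the paper's argument. Your treatment of part (2) --- first clearing the non-$v$ part with $A_i,\dots,A_k$ and then invoking part (1) for the residual $v$-part --- is a clean way of making precise what the paper dismisses as ``similar to (1).''
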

\begin{proof}
$(1)$ Let $A'=v(u^{i-1}p_0(x)+u^ip_1(x)+u^{i+1}p_2(x)+ \cdots +u^{k-1}p_{k-i}(x))\in C$. This implies that $p_0(x)\in C_{k+i}= \langle g_{k+i}(x) \rangle$. That is $g_{k+i}(x)|p_0(x)$, thus, $p_0(x)=q_0(x)g_{k+i}(x)$ for some $q_0(x) \in \F_p[x]$. We have 
\begin{multline}
A_{k+i}=v(u^{i-1}g_{k+i}(x)+u^ig_{(k+i)(k+i)}(x)\\+u^{i+1}g_{(k+i)(k+i+1)}(x)+ \cdots +u^{k-1}g_{(k+i)(2k-1)}(x)).\notag
\end{multline}
Therefore,
\begin{multline}
A'-q_0(x)A_{k+i}=v(u^i(p_1(x)-q_0(x)g_{(k+i)(k+i)}(x))\\+u^{i+1}(p_2(x)-q_0(x)g_{(k+i)(k+i+1)}(x))+u^{i+2}(p_3(x)-q_0(x)g_{(k+i)(k+i+2)}(x))\\+ \cdots +u^{k-1}(p_{k-i}(x)-q_0(x)g_{(k+i)(2k-1)}(x)).\notag
\end{multline}
Since $A', A_{k+i} \in C$ and $C$ is an ideal, we get $A'-q_0(x)A_{k+i} \in C$.
This implies that
\begin{equation}
p_1(x)-q_0(x)g_{(k+i)(k+i)}(x) \in C_{k+i+1}= \langle g_{k+i+1}(x) \rangle.\notag
\end{equation}
Thus,
\begin{equation}
g_{k+i+1}(x)|(p_1(x)-q_0(x)g_{(k+i)(k+i)}(x)).\notag
\end{equation}
Therefore,
\begin{equation}
p_1(x)-q_0(x)g_{(k+i)(k+i)}(x)=q_1(x)g_{k+i+1}(x)\notag
\end{equation}
for some $q_1(x) \in \F_p[x]$. Again,
\begin{multline}
A'-q_0(x)A_{k+i}-q_1(x)A_{k+i+1}\\=v(u^{i+1}(p_2(x)-q_0(x)g_{(k+i)(k+i+1)}(x)-q_1(x)g_{(k+i+1)(k+i+1)}(x))\\+u^{i+2}(p_3(x)-q_0(x)g_{(k+i)(k+i+2)}(x)-q_1(x)g_{(k+i+1)(k+i+2)}(x))\\+ \cdots +u^{k-1}(p_{k-i}(x)-q_0(x)g_{(k+i)(2k-1)}(x)-q_1(x)g_{(k+i+1)(2k-1)}(x)).\notag
\end{multline}
Proceeding in this way, after $k-i$ times we get
\begin{multline}
A'-q_0(x)A_{k+i}-q_1(x)A_{k+i+1}- \cdots -q_{k-i-1}(x)A_{2k-1}\\=vu^{k-1}(p_{k-i}(x)-q_0(x)g_{(k+i)(2k-1)}(x)-q_1(x)g_{(k+i+1)(2k-1)}(x)\\- \cdots -q_{k-i-1}(x)g_{(2k-1)(2k-1)}(x))\in C,\notag
\end{multline}
for some $q_2(x), q_3(x), \cdots, q_{k-i-1}(x) \in \F_p[x]$.
This implies that
\begin{multline}
p_{k-i}(x)-q_0(x)g_{(k+i)(2k-1)}(x)-q_1(x)g_{(k+i+1)(2k-1)}(x)\\- \cdots -q_{k-i-1}(x)g_{(2k-1)(2k-1)}(x)\in C_{2k}= \langle g_{2k}(x) \rangle.\notag
\end{multline}
That is
\begin{multline}
p_{k-i}(x)-q_0(x)g_{(k+i)(2k-1)}(x)-q_1(x)g_{(k+i+1)(2k-1)}(x)\\- \cdots -q_{k-i-1}(x)g_{(2k-1)(2k-1)}(x)=q_{k-i}(x)g_{2k}(x)\notag
\end{multline}
for some $q_{k-i}(x) \in \F_p[x]$. Therefore,
\begin{equation}
A'-q_0(x)A_{k+i}-q_1(x)A_{k+i+1}- \cdots -q_{k-i-1}(x)A_{2k-1}-q_{k-i}(x)A_{2k}=0.\notag
\end{equation}
This gives,
\begin{multline}
A'=q_0(x)A_{k+i}+q_1(x)A_{k+i+1}+ \cdots +q_{k-i-1}(x)A_{2k-1}+q_{k-i}(x)A_{2k}.\notag
\end{multline}
This proves Statement $1$.\\ 
$(2)$ The proof is similar to $1$.
\end{proof}
This lemma is referred in the proof of Lemma \ref{lm-t_k+i-1-t_k+i} and Lemma \ref{lm-t_i-t_k+i}.

\begin{remark} \label{rank-remark}
Note that in the proof of Statement 1 of Lemma \ref{lm-g-p}, we have the following relation between $p_j(x)$ and $q_j(x)$'s, for $0 \leq j \leq k-i$ and $1 \leq i \leq k$ : $p_0(x)=q_0(x)g_{k+i}(x)$, $p_1(x)-q_0(x)g_{(k+i)(k+i)}(x)=q_1(x)g_{k+i+1}(x)$, $\cdots$ and $p_{k-i}(x)-q_0(x)g_{(k+i)(2k-1)}(x)-q_1(x)g_{(k+i+1)(2k-1)}(x)- \cdots -q_{k-i-1}(x)\linebreak g_{(2k-1)(2k-1)}(x)=q_{k-i}(x)g_{2k}(x)$.
\end{remark}
This remark is referred in the proof of Lemma \ref{lm-t_k+i-1-t_k+i}.

\begin{lemma}\label{lm-t_k+i-1-t_k+i}
Let $C=\langle A_1, A_2, \cdots, A_{2k}\rangle$ be a cyclic code over the ring $R_{u^k,v^2,p}$.  $x^{t_{k+i-1}-t_{k+i}}A_{k+i}$, for $2 \leq i \leq k$, can be written as $x^{t_{k+i-1}t_{k+i}}A_{k+i}$ $=$ $c_{k+i-1}uA_{k+i-1}$ $+$ $q_0(x)A_{k+i}$ $+$ $q_1(x)A_{k+i+1}+q_2(x)A_{k+i+2}+ \cdots +q_{k-i}(x)A_{2k}$, for some $c_{k+i-1}\in \F_p$ and $q_0(x),q_1(x), \cdots, q_{k-i}(x)\in \F_p[x]$ with $\text{deg}(q_0(x))<t_{k+i-1}-t_{k+i}$,\linebreak $\text{deg}(q_1(x))<(t_{k+i-1}-t_{k+i}) ~ or ~ (t_{k+i}-t_{k+i+1})$, $\text{deg}(q_2(x))<(t_{k+i-1}-t_{k+i}) ~ or\linebreak (t_{k+i}-t_{k+i+1}) ~ or ~ (t_{k+i+1}-t_{k+i+2})$, $\cdots$ and $\text{deg}(q_{k-i}(x))<(t_{k+i-1}-t_{k+i}) ~ or\linebreak (t_{k+i}-t_{k+i+1}) ~or ~ (t_{k+i+1}-t_{k+i+2}) ~ or ~ \cdots ~or ~ (t_{2k-1}-t_{2k})$.
\end{lemma}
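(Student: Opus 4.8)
The plan is to peel off the leading term of $x^{t_{k+i-1}-t_{k+i}}A_{k+i}$ using a scalar multiple of $uA_{k+i-1}$ and then feed the remainder into Statement 1 of Lemma \ref{lm-g-p}. First I would observe that, since $A_{k+i}$ and $A_{k+i-1}$ are generators of the ideal $C$, for any $c_{k+i-1}\in\F_p$ the element
\[
D=x^{t_{k+i-1}-t_{k+i}}A_{k+i}-c_{k+i-1}\,uA_{k+i-1}
\]
lies in $C$. Multiplying $A_{k+i-1}$ by $u$ raises its lowest $u$-power to $u^{i-1}$ (its top term $u^{k-1}g_{(k+i-1)(2k-1)}(x)$ is annihilated because $u^k=0$), so $D$ involves exactly the $u$-powers $u^{i-1},\dots,u^{k-1}$ and therefore has the shape $v\bigl(u^{i-1}p_0(x)+u^ip_1(x)+\cdots+u^{k-1}p_{k-i}(x)\bigr)$. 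Its bottom coefficient is $p_0(x)=x^{t_{k+i-1}-t_{k+i}}g_{k+i}(x)-c_{k+i-1}g_{k+i-1}(x)$; since $g_{k+i}(x)\mid g_{k+i-1}(x)$ by Condition 2 of Theorem \ref{properties}, choosing $c_{k+i-1}$ to be the ratio of leading coefficients makes $\deg p_0<t_{k+i-1}$ while keeping $g_{k+i}(x)\mid p_0(x)$.

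Next I would apply Statement 1 of Lemma \ref{lm-g-p} to $D$, which is permissible because $D\in C$ and has the required form. This produces
\[
D=q_0(x)A_{k+i}+q_1(x)A_{k+i+1}+\cdots+q_{k-i}(x)A_{2k},
\]
and transposing the term $c_{k+i-1}uA_{k+i-1}$ yields the identity in the statement. The coefficients $q_j(x)$ are the outputs of the successive divisions recorded in Remark \ref{rank-remark}, namely $p_0=q_0g_{k+i}$ and, for $j\geq 1$, $q_j(x)g_{k+i+j}(x)=p_j(x)-\sum_{l=0}^{j-1}q_l(x)g_{(k+i+l)(k+i+j-1)}(x)$.

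The substantive part, and the main obstacle, is the degree bound on the $q_j(x)$, which I would prove by induction on $j$. For $j=0$ the bound $\deg q_0<t_{k+i-1}-t_{k+i}$ is immediate from $\deg p_0<t_{k+i-1}$ together with $q_0=p_0/g_{k+i}$. For $j\geq 1$ I would write
\[
p_j(x)=x^{t_{k+i-1}-t_{k+i}}g_{(k+i)(k+i-1+j)}(x)-c_{k+i-1}g_{(k+i-1)(k+i-2+j)}(x),
\]
under the reading $g_{(k+i)(k+i-1)}=g_{k+i}$ and $g_{(k+i-1)(k+i-2)}=g_{k+i-1}$, and invoke the estimate $\deg g_{(a)(b)}(x)<t_{b+1}$ furnished by Theorem \ref{unique}. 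Setting $M_j=\max_{0\leq m\leq j}(t_{k+i+m-1}-t_{k+i+m})$ and substituting these estimates into the recursion, the contributions of the shift $x^{t_{k+i-1}-t_{k+i}}$ and of the subtracted sum each have degree below $M_{j-1}+t_{k+i+j}$, while the term $c_{k+i-1}g_{(k+i-1)(k+i-2+j)}(x)$ has degree below $t_{k+i+j-1}$; hence $\deg\bigl(q_jg_{k+i+j}\bigr)<\max\bigl(t_{k+i+j-1},\,M_{j-1}+t_{k+i+j}\bigr)$, and dividing by $g_{k+i+j}$ gives $\deg q_j<\max\bigl(t_{k+i+j-1}-t_{k+i+j},\,M_{j-1}\bigr)=M_j$. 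This is precisely the disjunctive bound in the statement. The delicate point throughout is deciding which of these two competing contributions dominates after division by $g_{k+i+j}(x)$, and it is exactly this dichotomy that yields the ``or'' in each inequality.
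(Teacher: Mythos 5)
Your proposal is correct and follows essentially the same route as the paper: cancel the leading term of $x^{t_{k+i-1}-t_{k+i}}A_{k+i}$ against $c_{k+i-1}uA_{k+i-1}$ (the paper phrases this as one step of the division algorithm on the regular polynomials inside $A_{k+i}$ and $A_{k+i-1}$), recognize the remainder as an element of $C$ of the form handled by Statement 1 of Lemma \ref{lm-g-p}, and extract the degree bounds from the recursion of Remark \ref{rank-remark} together with the estimates $\deg(g_{(a)(b)})<t_{b+1}$ from Theorem \ref{unique}. Your reformulation of the disjunctive bounds as $\deg q_j<M_j$ with $M_j=\max_m(t_{k+i+m-1}-t_{k+i+m})$ is just a cleaner packaging of the paper's case analysis on which competing term dominates.
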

\begin{proof}
We have
\begin{multline}
A_{k+i}=v(u^{i-1}g_{k+i}(x)+u^ig_{(k+i)(k+i)}(x)\\+u^{i+1}g_{(k+i)(k+i+1)}(x)+ \cdots +u^{k-1}g_{(k+i)(2k-1)}(x))\\=u^{i-1}v(g_{k+i}(x)+ug_{(k+i)(k+i)}(x)+u^2g_{(k+i)(k+i+1)}(x)\\+ \cdots +u^{k-i}g_{(k+i)(2k-1)}(x))\notag
\end{multline}
and
\begin{multline}
A_{k+i-1}=u^{i-2}v(g_{k+i-1}(x)+ug_{(k+i-1)(k+i-1)}(x)\\+u^2g_{(k+i-1)(k+i)}(x)+ \cdots +u^{k-i+1}g_{(k+i-1)(2k-1)}(x)).\notag
\end{multline}
Since the above two polynomials $g_{k+i}(x)+ug_{(k+i)(k+i)}(x)+u^2g_{(k+i)(k+i+1)}(x)+ \cdots +u^{k-i}g_{(k+i)(2k-1)}(x)$ and $g_{k+i-1}(x)+ug_{(k+i-1)(k+i-1)}(x)+u^2g_{(k+i-1)(k+i)}(x)+ \cdots +u^{k-i+1}g_{(k+i-1)(2k-1)}(x)$ are regular, from Proposition \ref{division-alg} we can apply the division algorithm for these two polynomial. Let the leading coefficient of $x^{t_{k+i-1}-t_{k+i}}(g_{k+i}(x)+ug_{(k+i)(k+i)}(x)+ \cdots + u^{k-i}g_{(k+i)(2k-1)}(x))$ be $\alpha_i$ and of $g_{k+i-1}(x)+ug_{(k+i-1)(k+i-1)}(x)+ \cdots + u^{k-i+1}g_{(k+i-1)(2k-1)}(x)$ be $\beta_{i-1}$. There exists a constant $c_{k+i-1}\in \F_p$ such that $\alpha_i=c_{k+i-1}\beta_{i-1}$. By the division algorithm, we have
\begin{multline} \label{eq-t_k+i-1-t_k+i-div-alg}
x^{t_{k+i-1}-t_{k+i}}(g_{k+i}(x)+ug_{(k+i)(k+i)}(x)+ \cdots +u^{k-i}g_{(k+i)(2k-1)}(x))\\=c_{k+i-1}(g_{k+i-1}(x)
+ug_{(k+i-1)(k+i-1)}(x)+ \cdots +u^{k-i+1}g_{(k+i-1)(2k-1)}(x))\\+(p_0(x)+up_1(x)+ \cdots +u^{k-i+1}(x)p_{k-i+1}(x)),
\end{multline}
where $p_0(x)+up_1(x)+ \cdots +u^{k-i+1}(x)p_{k-i+1}(x)$ is the remainder term and $\text{deg}(p_0(x))<\text{deg}(g_{k+i-1}(x))=t_{k+i-1}$. This gives by comparing coefficient $p_0(x)=x^{t_{k+i-1}-t_{k+i}}g_{k+i}(x)-c_{k+i-1}g_{k+i-1}(x)$, $p_1(x)=x^{t_{k+i-1}-t_{k+i}}g_{(k+i)(k+i)}(x)-c_{k+i-1}g_{(k+i-1)(k+i-1)}(x)$, $\cdots$, $p_{k-i}(x)=x^{t_{k+i-1}-t_{k+i}}g_{(k+i)(2k-1)}(x)-c_{k+i-1}\linebreak g_{(k+i-1)(2k-2)}(x)$ and $p_{k-i+1}(x)=-c_{k+i-1}g_{(k+i-1)(2k-1)}(x)$. Multiplying both side of Equation \ref{eq-t_k+i-1-t_k+i-div-alg} by $u^{i-1}v$ gives,
\begin{multline}
x^{t_{k+i-1}-t_{k+i}}v(u^{i-1}g_{k+i}(x)+u^ig_{(k+i)(k+i)}(x)+ \cdots +u^{k-1}g_{(k+i)(2k-1)}(x))\\=c_{k+i-1}vu(u^{i-2}g_{k+i-1}(x)+u^{i-1}g_{(k+i-1)(k+i-1)}(x)+ \cdots +u^{k-1}g_{(k+i-1)(2k-1)}(x))\\+v(u^{i-1}p_0(x)\linebreak +u^ip_1(x)+ \cdots +u^{k-1}(x)p_{k-i}(x)).\notag
\end{multline}
This can be written as
\begin{multline}
x^{t_{k+i-1}-t_{k+i}}A_{k+i}=c_{k+i-1}uA_{k+i-1}\\+v(u^{i-1}p_0(x)+u^ip_1(x)+ \cdots +u^{k-1}(x)p_{k-i}(x)).\notag
\end{multline}
That is
\begin{multline}
x^{t_{k+i-1}-t_{k+i}}A_{k+i}-c_{k+i-1}uA_{k+i-1}\\=v(u^{i-1}p_0(x)+u^ip_1(x)+\cdots +u^{k-1}(x)p_{k-i}(x)).\notag
\end{multline}
Now $x^{t_{k+i-1}-t_{k+i}}A_{k+i}$ and $c_{k+i-1}uA_{k+i-1}\in C$, this implies that
\begin{multline}
x^{t_{k+i-1}-t_{k+i}}A_{k+i}-c_{k+i-1}uA_{k+i-1}\\=v(u^{i-1}p_0(x)+u^ip_1(x)+ \cdots +u^{k-1}(x)p_{k-i}(x))\in C.\notag
\end{multline}
From Statement 1 of Lemma \ref{lm-g-p}, we can write
\begin{multline}
v(u^{i-1}p_0(x)+u^ip_1(x)+ \cdots +u^{k-1}(x)p_{k-i}(x))\\=q_0(x)A_{k+i}+q_1(x)A_{k+i+1}+ \cdots +q_{k-i}(x)A_{2k}.\notag
\end{multline}
for some $q_0(x)$, $q_1(x)$, $\cdots$, $q_{k-i}(x)\in \F_p[x]$. Hence,
\begin{multline}
x^{t_{k+i-1}-t_{k+i}}A_{k+i}-c_{k+i-1}uA_{k+i-1}=q_0(x)A_{k+i}+q_1(x)A_{k+i+1}+ \cdots +q_{k-i}(x)A_{2k},\notag
\end{multline}
that is,
\begin{multline}
x^{t_{k+i-1}-t_{k+i}}A_{k+i}=c_{k+i-1}uA_{k+i-1}+q_0(x)A_{k+i}+q_1(x)A_{k+i+1}+ \cdots +q_{k-i}(x)A_{2k}.\notag
\end{multline}
This prove the first part of the lemma. Now we prove the degree result. To show the degree results, from Remark \ref{rank-remark}, we have
\begin{equation}
p_0(x)=q_0(x)g_{k+i}(x),\notag
\end{equation}
this implies that
\begin{equation}
\text{deg}(p_0(x))=\text{deg}(q_0(x))+\text{deg}(g_{k+i}(x)).\notag
\end{equation}
Therefore,
\begin{equation}
\text{deg}(p_0(x))=\text{deg}(q_0(x))+t_{k+i}.\notag
\end{equation}
From Equation \ref{eq-t_k+i-1-t_k+i-div-alg}, $\text{deg}(p_0(x))<t_{k+i-1}$, thus
\begin{equation}
\text{deg}(q_0(x))<t_{k+i-1}-t_{k+i}.\notag
\end{equation}
Again, from Equation \ref{eq-t_k+i-1-t_k+i-div-alg}, we have 
\begin{equation} \label{eq-p_1-a}
p_1(x)=x^{t_{k+i-1}-t_{k+i}}g_{(k+i)(k+i)}(x)-c_{k+i-1}g_{(k+i-1)(k+i-1)}(x).
\end{equation}
Also, from Remark \ref{rank-remark}, we have
\begin{equation} \label{eq-p_1-b}
p_1(x)-q_0(x)g_{(k+i)(k+i)}(x)=q_1(x)g_{k+i+1}(x).
\end{equation}
Therefore, Equations \ref{eq-p_1-a} and Equation \ref{eq-p_1-b} gives
\begin{multline} \label{eq-deg-p_1}
x^{t_{k+i-1}-t_{k+i}}g_{(k+i)(k+i)}(x)-c_{k+i-1}g_{(k+i-1)(k+i-1)}(x)-q_0(x)g_{(k+i)(k+i)}(x)\\=q_1(x)g_{k+i+1}(x).
\end{multline}
The degree of the polynomial on right hand side of Equation \ref{eq-deg-p_1} is less than or equal to the degree of highest degree polynomial on the left hand side. We have
\begin{equation}
\text{deg}(q_0(x)g_{(k+i)(k+i)}(x))<\text{deg}(x^{t_{k+i-1}-t_{k+i}}g_{(k+i)(k+i)}(x)),\notag
\end{equation}
because $\text{deg}(q_0(x))<t_{k+i-1}-t_{k+i}$. Thus, $q_0(x)g_{(k+i)(k+i)}(x)$ is not the highest degree polynomial in the left hand side of Equation \ref{eq-deg-p_1}. Therefore, either polynomial $x^{t_{k+i-1}-t_{k+i}}g_{(k+i)(k+i)}(x)$ or polynomial $c_{k+i-1}g_{(k+i-1)(k+i-1)}(x)$ is highest degree polynomial or both has equal degree. If $x^{t_{k+i-1}-t_{k+i}}g_{(k+i)(k+i)}(x)$ is the highest degree polynomial. From Equation \ref{eq-deg-p_1} we have
\begin{multline}
\text{deg}(q_1(x)g_{k+i+1}(x)) \leq \text{deg}(x^{t_{k+i-1}-t_{k+i}}g_{(k+i)(k+i)}(x))\\=t_{k+i-1}-t_{k+i}+\text{deg}(g_{(k+i)(k+i)}(x))\notag
\end{multline}
From Theorem \ref{unique}, we have $\text{deg}(g_{(k+i)(k+i)}(x))<t_{k+i+1}$. Therefore,
\begin{equation}
\text{deg}(q_1(x)g_{k+i+1}(x))<t_{k+i-1}-t_{k+i}+t_{k+i+1}\notag
\end{equation}
This gives
\begin{equation}
\text{deg}(q_1(x))+t_{k+i+1}<t_{k+i-1}-t_{k+i}+t_{k+i+1}.\notag
\end{equation}
This implies that
\begin{equation}
\text{deg}(q_1(x))<t_{k+i-1}-t_{k+i}.\notag
\end{equation}
Again, If $c_{k+i-1}g_{(k+i-1)(k+i-1)}(x)$ is the highest degree polynomial. From Equation \ref{eq-deg-p_1}, we have 
\begin{equation}
\text{deg}(q_1(x)g_{k+i+1}(x))\leq \text{deg}(c_{k+i-1}g_{(k+i-1)(k+i-1)}(x))\notag
\end{equation}
From Theorem \ref{unique}, we have $\text{deg}(g_{(k+i-1)(k+i-1)}(x))<t_{k+i}$. This gives,
\begin{equation}
\text{deg}(q_1(x))+t_{k+i+1}<t_{k+i}.\notag
\end{equation}
This implies that
\begin{equation}
\text{deg}(q_1(x))<t_{k+i}-t_{k+i+1}.\notag
\end{equation}
That is finally we get
\begin{equation}
\text{deg}(q_1(x))<(t_{k+i-1}-t_{k+i}) ~ or ~ (t_{k+i}-t_{k+i+1}).\notag
\end{equation}
Proceeding  in a similar way we can prove that $\text{deg}(q_2(x))<(t_{k+i-1}-t_{k+i})$ or $(t_{k+i}-t_{k+i+1})$ or $(t_{k+i+1}-t_{k+i+2}), \cdots$ and $\text{deg}(q_{k-i}(x))<(t_{k+i-1}-t_{k+i})$ or $(t_{k+i}-t_{k+i+1})$ or $(t_{k+i+1}-t_{k+i+2})$ or $\cdots$ or $(t_{2k-1}-t_{2k})$.
\end{proof}
This lemma is referred in the proof of Case \ref{case-1} and Case \ref{case-3} in Theorem \ref{rank-main}.

\begin{lemma}\label{lm-t_i-t_k+i}
Let $C=\langle A_1, A_2, \cdots, A_{2k}\rangle$ be a cyclic code over the ring $R_{u^k,v^2,p}$. $x^{t_i-t_{k+i}}A_{k+i}$, for $1 \leq i \leq k$, can be written as $x^{t_i-t_{k+i}}A_{k+i}=c_ivA_i+q_0(x)A_{k+i}+q_1(x)A_{k+i+1}+q_2(x)A_{k+i+2}+ \cdots +q_{k-i}(x)A_{2k}$ for some $c_i\in \F_p$ and $q_0(x),q_1(x)$, $\cdots, q_{k-i}(x)\in \F_p[x]$ with $\text{deg}(q_0(x))<t_i-t_{k+i}$, $\text{deg}(q_1(x))<(t_i-t_{k+i}) ~ or ~ (t_{k+i}-t_{k+i+1})$, $\text{deg}(q_2(x))<(t_{i}-t_{k+i}) ~ or ~ (t_{k+i}-t_{k+i+1}) ~ or ~ (t_{k+i+1}-t_{k+i+2}), \cdots$ and $\text{deg}(q_{k-i}(x))<(t_{i}-t_{k+i}) ~ or ~ (t_{k+i}-t_{k+i+1}) ~ or ~ (t_{k+i+1}-t_{k+i+2}) ~ or ~ \cdots ~or ~ (t_{2k-1}-t_{2k})$.
\end{lemma}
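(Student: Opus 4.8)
The plan is to follow the proof of Lemma~\ref{lm-t_k+i-1-t_k+i} almost verbatim; the only structural change is that the generator $A_{k+i}$ is now reduced against $vA_i$ (with multiplier $v$) instead of against $uA_{k+i-1}$. First I would record the two factorizations that drive everything. Using $v^2=0$ one has
\[
A_{k+i}=u^{i-1}v\bigl(g_{k+i}(x)+ug_{(k+i)(k+i)}(x)+\cdots+u^{k-i}g_{(k+i)(2k-1)}(x)\bigr),
\]
\[
vA_i=u^{i-1}v\bigl(g_i(x)+ug_{ii}(x)+u^2g_{i(i+1)}(x)+\cdots+u^{k-i}g_{i(k-1)}(x)\bigr),
\]
so that both $A_{k+i}$ and $vA_i$ equal $u^{i-1}v$ times a regular polynomial (Proposition~\ref{regular-poly}), of $x$-degrees $t_{k+i}$ and $t_i$ respectively. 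By Condition~3 of Theorem~\ref{properties} we have $g_{k+i}(x)\mid g_i(x)$, whence $t_i\ge t_{k+i}$ and the shift $x^{t_i-t_{k+i}}$ is legitimate.

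Next I would perform a single reduction step, exactly as in equation~\eqref{eq-t_k+i-1-t_k+i-div-alg}. Choosing $c_i\in\F_p$ so that the leading coefficients of $x^{t_i-t_{k+i}}\bigl(g_{k+i}(x)+\cdots\bigr)$ and of $g_i(x)+\cdots$ agree, the division algorithm (Proposition~\ref{division-alg}) yields
\[
x^{t_i-t_{k+i}}\bigl(g_{k+i}(x)+\cdots+u^{k-i}g_{(k+i)(2k-1)}(x)\bigr)=c_i\bigl(g_i(x)+\cdots+u^{k-i}g_{i(k-1)}(x)\bigr)+\bigl(p_0(x)+up_1(x)+\cdots+u^{k-i}p_{k-i}(x)\bigr),
\]
with $\text{deg}(p_0(x))<t_i$. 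Multiplying through by $u^{i-1}v$ (and noting that the highest $u$-power produced is only $u^{k-1}$, so nothing is lost to $u^k=0$) turns this into $x^{t_i-t_{k+i}}A_{k+i}-c_ivA_i=v\bigl(u^{i-1}p_0(x)+u^ip_1(x)+\cdots+u^{k-1}p_{k-i}(x)\bigr)$. Since $x^{t_i-t_{k+i}}A_{k+i}\in C$ and $c_ivA_i\in C$, the right-hand side lies in $C$, and it has precisely the shape treated in Statement~1 of Lemma~\ref{lm-g-p}; invoking that lemma rewrites it as $q_0(x)A_{k+i}+q_1(x)A_{k+i+1}+\cdots+q_{k-i}(x)A_{2k}$, which gives the asserted identity.

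Finally, the degree estimates follow exactly as in Lemma~\ref{lm-t_k+i-1-t_k+i}, using the relations between the $p_j(x)$ and the $q_l(x)$ recorded in Remark~\ref{rank-remark}. From $p_0(x)=q_0(x)g_{k+i}(x)$ together with $\text{deg}(p_0(x))<t_i$ one reads off $\text{deg}(q_0(x))<t_i-t_{k+i}$ at once. For each higher $q_j(x)$ I would combine the explicit form of $p_j(x)$ obtained from the reduction equation (a difference of a shifted $g_{(k+i)(\cdots)}(x)$ and a multiple of $g_{i(\cdots)}(x)$) with the corresponding relation from Remark~\ref{rank-remark}, and then split into two cases according to which of these two polynomials carries the larger degree; in each case the bound $\text{deg}(g_{ij}(x))<\text{deg}(g_{j+1}(x))$ of Theorem~\ref{unique} converts the degree inequality into one of the stated alternatives. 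I expect this term-by-term two-case degree bookkeeping to be the only genuinely delicate part of the proof, since one must track at every level which summand dominates and feed in the correct bound from Theorem~\ref{unique}; the algebraic core — the $v^2=0$ factorization, the single reduction step producing $c_i$, and the appeal to Lemma~\ref{lm-g-p} — is routine and structurally identical to the preceding lemma.
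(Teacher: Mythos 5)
Your proof is correct, but it is worth noting that the paper's own proof of this lemma does \emph{not} simply transplant the division-algorithm argument of Lemma \ref{lm-t_k+i-1-t_k+i}: instead it exploits the exact divisibility $g_{k+i}(x)\mid g_i(x)$ (Condition 3 of Theorem \ref{properties}) to write $g_i(x)=s_i(x)g_{k+i}(x)$, computes $vA_i-s_i(x)A_{k+i}$ --- whose $u^{i-1}$-coefficient vanishes identically, so Statement 1 of Lemma \ref{lm-g-p} returns only $A_{k+i+1},\dots,A_{2k}$ terms --- and then produces $c_i$ and $q_0(x)$ by splitting $s_i(x)$ into its leading monomial (with $c_i=s_{i(t_i-t_{k+i})}^{-1}$) and its lower-order tail. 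Your route instead matches leading coefficients, divides with constant quotient $c_i$, and lets Lemma \ref{lm-g-p} generate the $q_0(x)A_{k+i}$ term from the nonzero remainder coefficient $p_0(x)=x^{t_i-t_{k+i}}g_{k+i}(x)-c_ig_i(x)$, with $\deg(q_0(x))<t_i-t_{k+i}$ read off from $p_0(x)=q_0(x)g_{k+i}(x)$ exactly as you say. Both mechanisms are valid and give the same identity; the paper's version buys a cleaner invocation of Lemma \ref{lm-g-p} (one fewer generator in the output) at the cost of introducing $s_i(x)$, while yours keeps the two reduction lemmas formally parallel. For the higher degree bounds both you and the paper defer to the two-case bookkeeping of Lemma \ref{lm-t_k+i-1-t_k+i}; be aware that in the present setting the case where the $g_{ii}(x)$-term dominates yields $\deg(q_1(x))<t_{i+1}-t_{k+i+1}$ via Theorem \ref{unique}, which is not literally one of the two alternatives in the statement --- but this imprecision is already present in the paper's own proof and is harmless where the lemma is applied in Theorem \ref{rank-main}.
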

\begin{proof}
From Condition 3 of Theorem \ref{properties}, we have $g_{k+i}(x)|g_i(x)$ for $1 \leq i \leq k$. This implies that $g_i(x)=s_i(x)g_{k+i}(x)$ for some $s_i(x) \in \F_p[x]$. This can be written as $g_i(x)=g_{k+i}(x)(s_{i0}+s_{i1}x +\cdots +s_{i(t_i-t_{k+i})}x^{t_i-t_{k+i}})$, where $s_{i0}, s_{i1}, \cdots,\linebreak s_{i(t_i-t_{k+i})} \in \F_p$. Clearly, $s_{i(t_i-t_{k+i})} \neq 0$ (by degree comparison). For $1 \leq i \leq k$, we have
\begin{multline}
A_i=u^{i-1}g_i(x)+u^ig_{ii}(x)+u^{i+1}g_{i(i+1)}(x)+ \cdots +u^{k-1}g_{i(k-1)}(x)\\+v(g_{ik}(x)+ug_{i(k+1)}(x)+u^2g_{i(k+2)}(x)+ \cdots +u^{k-1}g_{i(2k-1)}(x))\notag
\end{multline}
and
\begin{multline}
A_{k+i}=v(u^{i-1}g_{k+i}(x)+u^ig_{(k+i)(k+i)}(x)\\+u^{i+1}g_{(k+i)(k+i+1)}(x)+ \cdots +u^{k-1}g_{(k+i)(2k-1)}(x)).\notag
\end{multline}
Therefore,
\begin{multline}
vA_i-s_i(x)A_{k+i}=v(u^i(g_{ii}(x)-s_i(x)g_{(k+i)(k+i)}(x))+u^{i+1}(g_{i(i+1)}(x)\\-s_i(x)g_{(k+i)(k+i+1)}(x))+ \cdots +u^{k-1}(g_{i(k-1)}(x)-s_i(x)g_{(k+i)(2k-1)}(x))).\notag
\end{multline}
Again, $vA_i-s_i(x)A_{k+i}\in C$. From Statement 1 of Lemma \ref{lm-g-p}, we have
\begin{multline}
v(u^i(g_{ii}(x)-s_i(x)g_{(k+i)(k+i)}(x))+u^{i+1}(g_{i(i+1)}(x)\\-s_i(x)g_{(k+i)(k+i+1)}(x))+ \cdots +u^{k-1}(g_{i(k-1)}(x)-s_i(x)g_{(k+i)(2k-1)}(x)))\\=q_1'(x)A_{k+i+1}+q_2'(x)A_{k+i+2}+ \cdots +q_{k-i}'(x)A_{2k}.\notag
\end{multline}
for some $q_1'(x)$, $q_2'(x)$, $\cdots$, $q_{k-i}'(x)\in \F_p[x]$. That is
\begin{multline}
vA_i-s_i(x)A_{k+i}=q_1'(x)A_{k+i+1}+q_2'(x)A_{k+i+2}+ \cdots +q_{k-i}'(x)A_{2k},\notag
\end{multline}
thus
\begin{multline}
s_i(x)A_{k+i}=vA_i-q_1'(x)A_{k+i+1}-q_2'(x)A_{k+i+2}- \cdots -q_{k-i}'(x)A_{2k},\notag
\end{multline}
this can be written as
\begin{multline}
(s_{i0}+s_{i1}x +\cdots +s_{i(t_i-t_{k+i})}x^{t_i-t_{k+i}})A_{k+i}\\=vA_i-q_1'(x)A_{k+i+1}-q_2'(x)A_{k+i+2}- \cdots -q_{k-i}'(x)A_{2k}.\notag
\end{multline}
This gives,
\begin{multline}
s_{i(t_i-t_{k+i})}x^{t_i-t_{k+i}}A_{k+i}=vA_i-(s_{i0}+s_{i1}x +\cdots +s_{i(t_i-t_{k+i}-1)}x^{t_i-t_{k+i}-1})A_{k+i}\\-q_1'(x)A_{k+i+1}-q_2'(x)A_{k+i+2}- \cdots -q_{k-i}'(x)A_{2k}.\notag
\end{multline}
Therefore,
\begin{multline}
s_{i(t_i-t_{k+i})}x^{t_i-t_{k+i}}A_{k+i}=vA_i-q_0'(x)A_{k+i}\\-q_1'(x)A_{k+i+1}-q_2'(x)A_{k+i+2}- \cdots -q_{k-i}'(x)A_{2k},\notag
\end{multline}
where $q_0'(x)=s_{i0}+s_{i1}x +\cdots +s_{i(t_i-t_{k+i}-1)}x^{t_i-t_{k+i}-1}$. This implies that
\begin{multline}
x^{t_i-t_{k+i}}A_{k+i}=s_{i(t_i-t_{k+i})}^{-1}vA_i-s_{i(t_i-t_{k+i})}^{-1}q_0'(x)A_{k+i}-s_{i(t_i-t_{k+i})}^{-1}q_1'(x)A_{k+i+1}\\-s_{i(t_i-t_{k+i})}^{-1}q_2'(x)A_{k+i+2}- \cdots -s_{i(t_i-t_{k+i})}^{-1}q_{k-i}'(x)A_{2k}.\notag
\end{multline}
This equation can be written as
\begin{multline}
x^{t_i-t_{k+i}}A_{k+i}=c_ivA_i+q_0(x)A_{k+i}\\+q_1(x)A_{k+i+1}+q_2(x)A_{k+i+2}+ \cdots +q_{k-i}(x)A_{2k},\notag
\end{multline}
where $c_i=s_{i(t_i-t_{k+i})}^{-1}$, $q_0(x)=-s_{i(t_i-t_{k+i})}^{-1}q_0'(x)$, $\cdots$, $q_{k-i}(x)=-s_{i(t_i-
t_{k+i})}^{-1}q_{k-i}'(x)$. The proof of the degree results are same as shown in Lemma \ref{lm-t_k+i-1-t_k+i}.
\end{proof}
This lemma is referred in the proof of Case \ref{case-1}, Case \ref{case-2} and Case \ref{case-3} in Theorem \ref{rank-main}.

\begin{lemma}\label{lm-vA_i}
Let $C$ be a cyclic code over the ring $R_{u^k,v^2,p}$. If $C=\langle vA_1, vA_2,$ $\cdots,$ $vA_{k}\rangle$, then the spanning set $S$ of the code  $C$ is $\{vA_1, xvA_1$, $\cdots$, $x^{n-t_1-1}vA_1$, $A_2, xvA_2$, $\cdots$, $x^{t_1-t_2-1}vA_2,$ $\cdots$, $vA_k, xvA_k$,$\cdots$, $x^{t_{k-1}-t_k-1}vA_k$\}.
\end{lemma}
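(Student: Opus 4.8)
The plan is to establish the two halves implicit in the statement: that every element of $S$ lies in $C$ (this is immediate, since each $x^{j}vA_i$ is a multiple of a generator), and, for the substantial part, that $S$ spans $C$ as an $\F_p$-vector space. The first thing I would record is a simplification coming from $v^{2}=0$: for any $r=a+vb\in R_{u^k,v^2,p}$ with $a,b\in R_{u^k,p}$ one has $r\,vA_i=a\,vA_i$, so every codeword can be written as $\sum_{i=1}^{k}a_i(x)\,vA_i$ with $a_i(x)\in R_{u^k,p,n}$, and it suffices to reduce such an expression to an $\F_p$-linear combination of the listed shifts. I would also note that $vA_i=u^{i-1}v\,h_i(x)$, where $h_i(x)=g_i(x)+ug_{ii}(x)+\cdots+u^{k-i}g_{i(k-1)}(x)$ is regular of degree $t_i$ with the same leading coefficient as $g_i(x)$ by Proposition \ref{regular-poly}; this regular ``head'' is exactly what makes the division algorithm applicable.

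Next I would carry out the reduction level by level in the $u$-adic filtration. Given a codeword $c$, let $u^{i-1}v$ be the lowest power of $u$ (carrying $v$) that occurs, with polynomial part $p(x)$. Membership of $c$ in $C$ forces $p(x)$ to be a multiple of $g_i(x)$ at this stage, so by the division algorithm (Proposition \ref{division-alg}) applied with the regular head $h_i$ I can subtract a suitable $q(x)\,vA_i$ to drive the degree of the $u^{i-1}v$-component below $t_i$, that is, into the window spanned by $vA_i,xvA_i,\dots,x^{t_{i-1}-t_i-1}vA_i$. Since $h_i$ is not a monomial, this subtraction perturbs the components at the higher levels $u^{i}v,\dots,u^{k-1}v$; I would then advance one level and repeat, using $vA_{i+1},\dots,vA_k$ to absorb the freshly created lower-degree tails. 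The divisibility chain $g_k(x)\mid g_{k-1}(x)\mid\cdots\mid g_1(x)$ from Condition (2) of Theorem \ref{properties} is what guarantees that each residual tail at level $u^{j}v$ is again a multiple of $g_{j+1}(x)$, so that $vA_{j+1}$ can be brought to bear, while the degree bounds $\deg g_{ij}<\deg g_{j+1}$ from Theorem \ref{unique} keep the tails inside the successive windows of size $t_{j}-t_{j+1}$.

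The step I expect to be the main obstacle is precisely this entanglement of the $u$-levels: because the generators are not homogeneous in $u$, reducing one level reintroduces contributions at all higher levels, so a single naive pass does not visibly terminate. I would handle it by an induction organised simultaneously on the level $i$ and on the degree of the component being reduced, showing at each stage that the perturbations fall strictly into the part of the code already controlled by $vA_{i+1},\dots,vA_k$, and that shifts reaching the top of a window fold back—via $x^{n}=1$ and via the relation $g_i(x)=s_i(x)g_{i+1}(x)$ coming from $g_{i+1}(x)\mid g_i(x)$—into combinations already present in $S$. This is the same inductive device used in Lemma \ref{lm-t_i-t_k+i}, adapted to consecutive generators rather than to the $v$-jump. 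Once spanning is secured, the distinct leading terms of the listed shifts in the graded $(u,x)$-ordering give their linear independence, so that $S$ is in fact a minimal spanning set, which completes the proof.
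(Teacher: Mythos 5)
Your argument is sound in substance, but it follows a different, more self-contained route than the paper. The paper's proof is essentially a one-line reduction: since $v^2=0$, multiplication by $v$ is an injective $R_{u^k,p}[x]$-module map, so $C=\langle vA_1,\dots,vA_k\rangle$ is an isomorphic copy of the cyclic code $\langle u^{i-1}\bigl(g_i(x)+ug_{ii}(x)+\cdots+u^{k-i}g_{i(k-1)}(x)\bigr) : 1\le i\le k\rangle$ over the chain ring $R_{u^k,p}$ (this is $\text{Im}\,\phi$ in the standard form of \cite{Aks-Pkk13}), and the asserted spanning set is then read off directly from Theorem 4.2 of \cite{Aks-Pkk13}. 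You instead re-derive that chain-ring spanning result from scratch: your $u$-adic level-by-level reduction with the regular heads $h_i$, the use of $C_i=\langle g_i(x)\rangle$ to force divisibility of the bottom component, and the folding of the high shifts $x^{j}vA_i$ for $j\ge t_{i-1}-t_i$ back into $c\,u\,vA_{i-1}$ plus degree-controlled tails is precisely the analogue, for consecutive $u$-levels, of Lemmas \ref{lm-t_k+i-1-t_k+i} and \ref{lm-t_i-t_k+i}. Both routes work; the paper buys brevity by outsourcing to the reference, while yours makes the lemma independent of \cite{Aks-Pkk13} at the cost of repeating an argument the paper already carries out elsewhere. Two minor points of precision: after dividing by the regular head, the $u^{i-1}v$-component is not merely pushed below degree $t_i$ but annihilated outright (it stays a multiple of $g_i(x)$ of degree less than $t_i$, hence is zero), so the real content is entirely in showing that $q(x)\,vA_i$ with $\deg q(x)$ as large as $n-t_i-1$ lies in $\text{Span}(S)$, which you do address; and your closing independence claim is not part of the lemma and should not be dispatched by leading terms alone, since the listed shifts are torsion elements over $R_{u^k,v^2,p}$ and minimality here means no proper subset spans.
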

\begin{proof}
It is suffices to show that $S$ spans the set $S'=\{vA_1, xvA_1$, $\cdots$, $x^{n-t_1-1}vA_1$, $A_2, xvA_2$, $\cdots$, $x^{n-t_2-1}vA_2$, $\cdots$, $vA_{k}, xvA_{k}$, $\cdots$, $x^{n-t_{k}-1}vA_{k}\}$. We can also visualize $C$ as a $R_{u^k,p}$-module, hence as a cyclic code over the ring $R_{u^k,p}$. Therefore, from Theorem 4.2 of \cite{Aks-Pkk13}, it is easy to see that, the elements of the set $\{x^{t_{k-1}-t_{k}}vA_{k}$, $x^{t_{k-1}-t_{k}+1}vA_{k}$, $\cdots$, $x^{n-t_{k}-1}vA_{k}$, $x^{t_{k-2}-t_{k-1}}vA_{k-1}$, $x^{t_{k-2}-t_{k-1}+1}vA_{k-1}$, $\cdots$, $x^{n-t_{k-1}-1}vA_{k-1}$, $\cdots$,
$x^{t_{1}-t_{2}}vA_{2}$, $x^{t_{1}-t_{2}+1}A_{2}$, $\cdots$,\linebreak $x^{n-t_{2}-1}A_{2}$\}
can be written as $R_{u^k,p}$-linear combination of the elements of the set \{$vA_{1}$, $xvA_{1}$, $\cdots$, $x^{n-t_{1}-1}vA_{1}$, $vA_{2}$, $xvA_{2}$, $\cdots$, $x^{t_{1}-t_{2}-1}vA_{2}$, $\cdots$, $vA_{k}$, $xvA_{k}$, $\cdots$, $x^{t_{k-1}-t_{k}-1}vA_{k}$\}. This proves the lemma.
\end{proof}
Note that $\langle vA_1, vA_2, \cdots, vA_{k}\rangle$ is a subcode of the code $\langle A_1, A_2, \cdots, A_{2k}\rangle$.
This lemma is referred in the proof of Case \ref{case-2} in Theorem \ref{rank-main}.

\begin{theo} \label{rank-main}
Let $n$ be a positive integer not relatively prime to $p$ and $C$ be a cyclic code of length $n$ over the ring $R_{u^k,v^2,p}$. If $C=\langle A_1,A_2, \cdots ,A_{2k}\rangle$, ${\rm deg}(g_i(x))=t_i$, $1 \leq i \leq 2k$ and $t_i' = {\rm min}\{{\rm deg}(g_{i+1}(x)),{\rm deg}(g_{k+i}(x))\}$, $1 \leq i \leq k-1$, then $C$ has free rank $n-t_1$ and rank $n+t_1+t_1^{'}+t_2^{'}+\cdots+t_{k-1}^{'}-(t_k+t_{k+1}+\cdots+t_{2k})$. The minimal spanning set $B$ of the code  $C$ is $\{A_1, xA_1$, $\cdots$, $x^{n-t_1-1}A_1$, $A_2, xA_2$, $\cdots$, $x^{t_1-t_2-1}A_2, A_3, xA_3$, $\cdots$, $x^{t_2-t_3-1}A_3$, $\cdots$, $A_k, xA_k$, $\cdots$, $x^{t_{k-1}-t_k-1}A_k, A_{k+1}, xA_{k+1}$, $\cdots$, $x^{t_1-t_{k+1}-1}A_{k+1}$, $A_{k+2}$, $xA_{k+2}$, $\cdots$,\linebreak $x^{t_1'-t_{k+2}-1}A_{k+2}$, $A_{k+3}$, $xA_{k+3}$, $\cdots$, $x^{t_2'-t_{k+3}-1}A_{k+3}$, $\cdots$, $A_{2k-1}$, $xA_{2k-1}$, $\cdots$, $x^{t_{k-2}'-t_{2k-1}-1}A_{2k-1}$, $A_{2k}, xA_{2k}$, $\cdots$, $x^{t_{k-1}'-t_{2k}-1}A_{2k}\}$.
\end{theo}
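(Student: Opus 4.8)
The plan is to prove the three assertions at once: I would first show that the displayed set $B$ generates $C$ as a module over $R_{u^k,v^2,p}$, then verify that $|B|$ equals the stated rank, and finally isolate the free part. The organizing device is the surjection $\phi$ of \eqref{surj-hom}, which separates the problem into the residue part $\text{Im}\phi=\langle\phi(A_1),\dots,\phi(A_k)\rangle$ and the kernel part $\text{ker}\phi=\langle A_{k+1},\dots,A_{2k}\rangle$; a spanning set for $C$ is then obtained by lifting a spanning set of $\text{Im}\phi$ through $\phi$ and adjoining a spanning set of $\text{ker}\phi$.

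For the residue part I would regard $\text{Im}\phi$ as a cyclic code over $R_{u^k,p}$ and invoke Theorem~4.2 of \cite{Aks-Pkk13}: its minimal spanning set, pulled back through $\phi$, is precisely $A_1,xA_1,\dots,x^{n-t_1-1}A_1,A_2,\dots,x^{t_1-t_2-1}A_2,\dots,A_k,\dots,x^{t_{k-1}-t_k-1}A_k$, which comprises $n-t_k$ elements by telescoping. Because $g_1(x)\mid(x^n-1)$, its leading coefficient is a unit of $\F_p$, so the $n-t_1$ shifts $x^jA_1$ (for $0\le j\le n-t_1-1$) are $R_{u^k,v^2,p}$-linearly independent and span a free submodule; the remaining generators $A_2,\dots,A_k$ carry leading coefficients divisible by $u$ and hence contribute only torsion. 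This is what pins the free rank at $n-t_1$.

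The substantive work, and the step I expect to be the main obstacle, is the kernel part. Each $A_{k+i}$ has leading coefficient $u^{i-1}v$, a zero divisor, so the division algorithm cannot be applied to $A_{k+i}$ directly---precisely the non-regularity issue flagged in the introduction. In its place I would use the two reduction identities already proved: Lemma~\ref{lm-t_k+i-1-t_k+i} trades $x^{t_{k+i-1}-t_{k+i}}A_{k+i}$ for $uA_{k+i-1}$ plus low-degree multiples of $A_{k+i},\dots,A_{2k}$, while Lemma~\ref{lm-t_i-t_k+i} trades $x^{t_i-t_{k+i}}A_{k+i}$ for $vA_i$ plus similar multiples. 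Conditions~2 and~3 of Theorem~\ref{properties} guarantee $t_{k+i}\le t_{k+i-1}$ and $t_{k+i}\le t_i$, so both thresholds are meaningful, and the smaller of the two is the first power at which $x^jA_{k+i}$ becomes reducible. Since $t_{i-1}'=\min\{t_i,t_{k+i-1}\}$, this common threshold equals $t_{i-1}'-t_{k+i}$, so exactly the shifts with $0\le j\le t_{i-1}'-t_{k+i}-1$ survive; the boundary generator $A_{k+1}$ has no kernel predecessor and is reducible only through $vA_1$ at power $t_1-t_{k+1}$, leaving $t_1-t_{k+1}$ survivors. The degree bounds on the quotients $q_l(x)$ supplied by both lemmas are exactly what make the argument close: they keep every reduction term among the already-listed shifts, so the process terminates; whenever the $vA_i$ branch is taken I would re-express the shifts of $vA_i$ inside the spanning set of $\langle vA_1,\dots,vA_k\rangle$ furnished by Lemma~\ref{lm-vA_i}. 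I would split this reduction into three cases according to the sign of $t_i-t_{k+i-1}$, handling the equality case and the boundary generator separately, which is where Lemmas~\ref{lm-t_k+i-1-t_k+i}, \ref{lm-t_i-t_k+i} and \ref{lm-vA_i} enter.

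Combining the counts, the residue part furnishes $n-t_k$ elements and the kernel part furnishes $(t_1-t_{k+1})+\sum_{i=2}^{k}(t_{i-1}'-t_{k+i})$; their sum rearranges to $n+t_1+t_1'+\cdots+t_{k-1}'-(t_k+t_{k+1}+\cdots+t_{2k})$, the asserted rank. For minimality I would appeal to the leading-term structure: within each graded block the listed shifts realize distinct leading degrees, so deleting any one of them strictly reduces the submodule it helps generate; equivalently, one checks that the module spanned by $B$ has cardinality $|C|$, forcing $B$ to be a minimal spanning set whose size is the rank.
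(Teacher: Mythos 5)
Your proposal is correct and follows essentially the same route as the paper: the decomposition of the problem via $\phi$ into the residue part (handled by Theorem~4.2 of \cite{Aks-Pkk13}) and the kernel part (handled by the reduction identities of Lemmas~\ref{lm-t_k+i-1-t_k+i}, \ref{lm-t_i-t_k+i} and \ref{lm-vA_i}, with the case split on whether $t_{k+i}$ or $t_{i+1}$ is smaller), followed by the same count and a leading-degree minimality argument. The only point where the paper is more careful is that the reduction terms $q_l(x)A_{k+i+l}$ need not lie among the shifts retained in $B$, so the paper orders the set $B_2\setminus B$ and argues inductively that each element is a combination of $B$ and previously handled shifts; your ``the process terminates'' covers the same mechanism.
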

\begin{proof}
It is suffices to show that $B$ spans the set $B'=\{A_1$, $xA_1$, $\cdots$, $x^{n-t_1-1}A_1$, $A_2$, $xA_2$, $\cdots$, $x^{n-t_2-1}A_2$, $\cdots$, $A_k$, $xA_k$, $\cdots$, $x^{n-t_k-1}A_k$, $A_{k+1}$, $xA_{k+1}$, $\cdots$, $x^{n-t_{k+1}-1}A_{k+1}$, $A_{k+2}$, $xA_{k+2}$, $\cdots$, $x^{n-t_{k+2}-1}A_{k+2}$, $\cdots$, $A_{2k}$, $xA_{2k}$, $\cdots$,\linebreak $x^{n-t_{2k}-1}A_{2k}\}$. To show $B$ spans $B'$ we write the set $B'$ as $B'=B_1\cup B_2$, where $B_1=\{A_1, xA_1$, $\cdots$, $x^{n-t_1-1}A_1$, $A_2, xA_2$, $\cdots$, $x^{n-t_2-1}A_2, A_3, xA_3$, $\cdots$, $x^{n-t_3-1}A_3$, $\cdots$, $A_k, xA_k, x^{n-t_k-1}A_k\}$ and $B_2=\{A_{k+1}$, $xA_{k+1}$, $\cdots$, $x^{n-t_{k+1}-1}A_{k+1}$, $A_{k+2}$, $xA_{k+2}$, $\cdots$, $x^{n-t_{k+2}-1}$, $A_{k+3}$, $xA_{k+3}$, $\cdots$, $x^{n-t_{k+3}-1}A_{k+3}$, $\cdots$, $A_{2k}$, $xA_{2k}$, $\cdots$, $x^{n-t_{2k}-1}A_{2k}\}$. First we show that $B$ spans $B_2$ and then we show that $B$ spans $B_1$. To show $B$ spans $B_2$ we divide the 
proof in three cases.
\begin{case} \label{case-1}
Let $t_{k+i}<t_{i+1}$, that is, $t'_i=t_{k+i}$, for $1 \leq i \leq k-1$. We define the set $B_2-B$ as an ordered set :\\
$\{x^{t_{2k-1}-t_{2k}}A_{2k}$, $x^{t_{2k-1}-t_{2k}+1}A_{2k}$, $\cdots$, $x^{t_{2k-2}-t_{2k}-1}A_{2k}$,
$x^{t_{2k-2}-t_{2k-1}}A_{2k-1}$,\linebreak $x^{t_{2k-2}-t_{2k-1}+1}A_{2k-1}$, $\cdots$, $x^{t_{2k-3}-t_{2k-1}-1}A_{2k-1}$,
$x^{t_{2k-2}-t_{2k}}A_{2k}$, $x^{t_{2k-2}-t_{2k}+1}A_{2k}$, $\cdots$, $x^{t_{2k-3}-t_{2k}-1}A_{2k}$,
$x^{t_{2k-3}-t_{2k-2}}A_{2k-2}$, $x^{t_{2k-3}-t_{2k-2}+1}A_{2k-2}$, $\cdots$, $x^{t_{2k-4}-t_{2k-2}-1}A_{2k-2}$,
$x^{t_{2k-3}-t_{2k-1}}A_{2k-1}$, $x^{t_{2k-3}-t_{2k-1}+1}A_{2k-1}$, $\cdots$, $x^{t_{2k-4}-t_{2k-1}-1}A_{2k-1}$,
$x^{t_{2k-3}-t_{2k}}A_{2k}$, $x^{t_{2k-3}-t_{2k}+1}A_{2k}$, $\cdots$, $x^{t_{2k-4}-t_{2k}-1}A_{2k}$,
$\cdots$
$x^{t_{k+1}-t_{k+2}}A_{k+2}$, $x^{t_{k+1}-t_{k+2}+1}A_{k+2}$, $\cdots$, $x^{t_1-t_{k+2}-1}A_{k+2}$,
$x^{t_{k+1}-t_{k+3}}A_{k+3}$, $x^{t_{k+1}-t_{k+3}+1}A_{k+3}$, $\cdots$, $x^{t_1-t_{k+3}-1}A_{k+3}$,
$\cdots$\linebreak
$x^{t_{k+1}-t_{k+i}}A_{k+i}$, $x^{t_{k+1}-t_{k+i}+1}A_{k+i}$, $\cdots$, $x^{t_1-t_{k+i}-1}A_{k+i}$,
$\cdots$
$x^{t_{k+1}-t_{2k-1}}A_{2k-1}$,\linebreak $x^{t_{k+1}-t_{2k-1}+1}A_{2k-1}$, $\cdots$, $x^{t_1-t_{2k-1}-1}A_{2k-1}$,
$x^{t_{k+1}-t_{2k}}A_{2k}$, $x^{t_{k+1}-t_{2k}+1}A_{2k}$, $\cdots$,\linebreak $x^{t_1-t_{2k}-1}A_{2k}$,
$x^{t_1-t_{k+1}}A_{k+1}$, $x^{t_1-t_{k+1}+1}A_{k+1}$, $\cdots$, $x^{n-t_{k+1}-1}A_{k+1}$,
$x^{t_1-t_{k+2}}A_{k+2}$, $x^{t_1-t_{k+2}+1}A_{k+2}$, $\cdots$, $x^{n-t_{k+2}-1}A_{k+2}$,
$\cdots$
$x^{t_1-t_{k+i}}A_{k+i}$, $x^{t_1-t_{k+i}+1}A_{k+i}$, $\cdots$,\linebreak $x^{n-t_{k+i}-1}A_{k+i}$,
$\cdots$
$x^{t_1-t_{2k-1}}A_{2k-1}$, $x^{t_1-t_{2k-1}+1}A_{2k-1}$, $\cdots$, $x^{n-t_{2k-1}-1}A_{2k-1}$,\linebreak
$x^{t_1-t_{2k}}A_{2k}$, $x^{t_1-t_{2k}+1}A_{2k}$, $\cdots$, $x^{n-t_{2k}-1}A_{2k}\}$,
where $x^{t_{2k-1}-t_{2k}}A_{2k}$ is the first and $x^{n-t_{2k}-1}A_{2k}$ is the last element of the set $B_2-B$. Rest of the elements are in the order as they appear in the set $B_2-B$ as given above. To understand the pattern of order of the ordered set $B_2-B$, we write the set as:\\
\begin{align*}
&\{x^{t_{2k-1}-t_{2k}}A_{2k}, ~ x^{t_{2k-1}-t_{2k}+1}A_{2k}, ~ \cdots, ~ x^{t_{2k-2}-t_{2k}-1}A_{2k}\}
\cup\\
&\{x^{t_{2k-2}-t_{2k-1}}A_{2k-1}, ~ x^{t_{2k-2}-t_{2k-1}+1}A_{2k-1}, ~ \cdots, ~ x^{t_{2k-3}-t_{2k-1}-1}A_{2k-1},\\ &x^{t_{2k-2}-t_{2k}}A_{2k}, ~ x^{t_{2k-2}-t_{2k}+1}A_{2k}, ~ \cdots, ~ x^{t_{2k-3}-t_{2k}-1}A_{2k}\}
\cup\\
&\{x^{t_{2k-3}-t_{2k-2}}A_{2k-2}, ~ x^{t_{2k-3}-t_{2k-2}+1}A_{2k-2}, ~ \cdots, ~ x^{t_{2k-4}-t_{2k-2}-1}A_{2k-2},\\
&x^{t_{2k-3}-t_{2k-1}}A_{2k-1}, ~ x^{t_{2k-3}-t_{2k-1}+1}A_{2k-1}, ~ \cdots, ~ x^{t_{2k-4}-t_{2k-1}-1}A_{2k-1},\\
&x^{t_{2k-3}-t_{2k}}A_{2k}, ~ x^{t_{2k-3}-t_{2k}+1}A_{2k}, ~ \cdots, ~ x^{t_{2k-4}-t_{2k}-1}A_{2k}\}\\
&\cup\\
&\vdots\\
&\cup\\
&\{x^{t_{k+1}-t_{k+2}}A_{k+2}, ~ x^{t_{k+1}-t_{k+2}+1}A_{k+2}, ~ \cdots, ~ x^{t_1-t_{k+2}-1}A_{k+2},\\
&x^{t_{k+1}-t_{k+3}}A_{k+3}, ~ x^{t_{k+1}-t_{k+3}+1}A_{k+3}, ~ \cdots, ~ x^{t_1-t_{k+3}-1}A_{k+3},\\
&\vdots\\
&x^{t_{k+1}-t_{k+i}}A_{k+i}, ~ x^{t_{k+1}-t_{k+i}+1}A_{k+i}, ~ \cdots, ~ x^{t_1-t_{k+i}-1}A_{k+i},\\
&\vdots\\
&x^{t_{k+1}-t_{2k-1}}A_{2k-1}, ~ x^{t_{k+1}-t_{2k-1}+1}A_{2k-1}, ~ \cdots, ~ x^{t_1-t_{2k-1}-1}A_{2k-1},\\
&x^{t_{k+1}-t_{2k}}A_{2k}, ~ x^{t_{k+1}-t_{2k}+1}A_{2k}, ~ \cdots, ~ x^{t_1-t_{2k}-1}A_{2k}\}
\cup\\
&\{x^{t_1-t_{k+1}}A_{k+1}, ~ x^{t_1-t_{k+1}+1}A_{k+1}, ~ \cdots, ~ x^{n-t_{k+1}-1}A_{k+1},\\
&x^{t_1-t_{k+2}}A_{k+2}, ~ x^{t_1-t_{k+2}+1}A_{k+2}, ~ \cdots, ~ x^{n-t_{k+2}-1}A_{k+2},\\
&\vdots\\
&x^{t_1-t_{k+i}}A_{k+i}, ~ x^{t_1-t_{k+i}+1}A_{k+i}, ~ \cdots, ~ x^{n-t_{k+i}-1}A_{k+i},\\
&\vdots\\
&x^{t_1-t_{2k-1}}A_{2k-1}, ~ x^{t_1-t_{2k-1}+1}A_{2k-1}, ~ \cdots, ~ x^{n-t_{2k-1}-1}A_{2k-1},\\
&x^{t_1-t_{2k}}A_{2k}, ~ x^{t_1-t_{2k}+1}A_{2k}, ~ \cdots, ~ x^{n-t_{2k}-1}A_{2k}\}\\
\end{align*}
First we show that the first element $x^{t_{2k-1}-t_{2k}}A_{2k} \in B_2-B$ is a linear combination of some elements of $B$ and then we show other elements of the set $B_2-B$ are linear combination of elements of $B$ and its previous elements of the ordered set $B_2-B$. For $i=k$, from Lemma \ref{lm-t_k+i-1-t_k+i}, we have
\begin{equation} \label{eq-t_2k-1-t_2k-a}
x^{t_{2k-1}-t_{2k}}A_{2k}=c_{2k-1}uA_{2k-1}+q_0(x)A_{2k},
\end{equation}
where $\text{deg}(q_0(x))<t_{2k-1}-t_{2k}$. Let
\begin{equation}
q_0(x)=q_{00}+q_{01}x +\cdots+q_{0(t_{2k-1}-t_{2k}-1)}x^{t_{2k-1}-t_{2k}-1},\notag
\end{equation}
where $q_{0i} \in \F_p$. Thus, we have
\begin{multline} \label{eq-t_2k-1-t_2k-b}
x^{t_{2k-1}-t_{2k}}A_{2k}=c_{2k-1}uA_{2k-1}+q_{00}A_{2k}+q_{01}xA_{2k} +\cdots \\+q_{0(t_{2k-1}-t_{2k}-1)}x^{t_{2k-1}-t_{2k}-1}A_{2k}.
\end{multline}
Therefore, $x^{t_{2k-1}-t_{2k}}A_{2k}$ is a linear combination of some elements of $B$, that is $x^{t_{2k-1}-t_{2k}}A_{2k} \in \text{Span}(B)$. Now, we show that $x^{t_{2k-1}-t_{2k}+1}A_{2k} \in \text{Span}(B)$. Multiplying Equation \ref{eq-t_2k-1-t_2k-b} by $x$, we get
\begin{multline} \label{eq-t_2k-1-t_2k+1-a}
x^{t_{2k-1}-t_{2k}+1}A_{2k}=c_{2k-1}uxA_{2k-1}+q_{00}xA_{2k}+q_{01}x^2A_{2k} +\cdots \\+q_{0(t_{2k-1}-t_{2k}-1)}x^{t_{2k-1}-t_{2k}}A_{2k}.
\end{multline}
If we put the value of $x^{t_{2k-1}-t_{2k}}A_{2k}$ from Equation \ref{eq-t_2k-1-t_2k-b} in Equation \ref{eq-t_2k-1-t_2k+1-a}, we get
\begin{multline} \label{eq-t_2k-1-t_2k+1-b}
x^{t_{2k-1}-t_{2k}+1}A_{2k}=(c_{2k-1}uq_{0(t_{2k-1}-t_{2k}-1)}+c_{2k-1}ux)A_{2k-1}+q_{00}q_{0(t_{2k-1}-t_{2k}-1)}A_{2k}\\+(q_{00}+q_{01}q_{0(t_{2k-1}-t_{2k}-1)})xA_{2k}+(q_{01}+q_{02}q_{0(t_{2k-1}-t_{2k}-1)})x^2A_{2k} +\cdots\\+(q_{0(t_{2k-1}-t_{2k}-2)}+q_{0(t_{2k-1}-t_{2k}-1)}q_{0(t_{2k-1}-t_{2k}-1)})x^{t_{2k-1}-t_{2k}-1}A_{2k}.
\end{multline}
Equation \ref{eq-t_2k-1-t_2k+1-b} can be written as
\begin{equation} \label{eq-t_2k-1-t_2k+1-c}
x^{t_{2k-1}-t_{2k}+1}A_{2k}=(c_{2k-1}uq_{0(t_{2k-1}-t_{2k}-1)}+c_{2k-1}ux)A_{2k-1}+q_0'(x)A_{2k},
\end{equation}
where $q_0'(x)=q_{00}q_{0(t_{2k-1}-t_{2k}-1)}+(q_{00}+q_{01}q_{0(t_{2k-1}-t_{2k}-1)})x+(q_{01}+q_{02}q_{0(t_{2k-1}-t_{2k}-1)})\linebreak x^2+\cdots+(q_{0(t_{2k-1}-t_{2k}-2)}+q_{0(t_{2k-1}-t_{2k}-1)}q_{0(t_{2k-1}-t_{2k}-1)})x^{t_{2k-1}-t_{2k}-1}$ and also\linebreak $\text{deg}(q_0'(x))<t_{2k-1}-t_{2k}$. This implies that $x^{t_{2k-1}-t_{2k}+1}A_{2k}$ is a linear combination of its previous elements in the ordered set $B_2-B$ and some elements of $B$. Hence, $x^{t_{2k-1}-t_{2k}+1}A_{2k} \in \text{Span}(B)$. (Note that multiplying Equation \ref{eq-t_2k-1-t_2k-a} by $x$ we get Equation \ref{eq-t_2k-1-t_2k+1-c}. The degree of $q_0(x)$ in Equation \ref{eq-t_2k-1-t_2k-a} and degree of $q_0'(x)$ in Equation \ref{eq-t_2k-1-t_2k+1-c} both are satisfying $\text{deg}(q_0(x)),\text{deg}(q_0'(x))<t_{2k-1}-t_{2k}$. That is, even after multiplying Equation \ref{eq-t_2k-1-t_2k-a} by $x$, the degree of coefficient polynomial of $A_{2k}$ in Equation \ref{eq-t_2k-1-t_2k+1-c} does not exceed by $t_{2k-1}-t_{2k}-1$. In fact, the 
degree of coefficient polynomial of $A_{2k}$ in Equation \ref{eq-t_2k-1-t_2k+1-c} does not exceed by the degree of coefficient polynomial of $A_{2k}$ in Equation \ref{eq-t_2k-1-t_2k-a}. Only the degree of coefficient polynomial of $A_{2k-1}$ is increased by one in Equation \ref{eq-t_2k-1-t_2k+1-c} from Equation \ref{eq-t_2k-1-t_2k-a}). In a similar way, multiplying Equation \ref{eq-t_2k-1-t_2k+1-c} by $x$ we can show that $x^{t_{2k-1}-t_{2k}+2}A_{2k} \in \text{Span}(B)$. And also it can be shown that degree of coefficient polynomial of $A_{2k}$ will not be increased. Only the degree of coefficient polynomial of $A_{2k-1}$ will be increased by one. In this similar way after $t_{2k-2}-t_{2k-1}-1$ times we will get
\begin{equation} \label{eq-t_2k-2-t_2k-1_A_2k}
x^{t_{2k-2}-t_{2k}-1}A_{2k}=q_{-1}(x)A_{2k-1}+q_0''(x)A_{2k},
\end{equation}
for some $q_{-1}(x)$, $q_0''(x) \in \F_p[x]$ with $\text{deg}(q_{-1}(x))=t_{2k-2}-t_{2k-1}-1$ and $\text{deg}(q_0''(x))<t_{2k-1}-t_{2k}$. From Equation \ref{eq-t_2k-2-t_2k-1_A_2k} we can say that $x^{t_{2k-2}-t_{2k}-1}A_{2k} \in \text{Span}(B)$. Now the next term of $x^{t_{2k-2}-t_{2k}-1}A_{2k}$ in ordered set $B_2-B$ is $x^{t_{2k-2}-t_{2k-1}}A_{2k-1}$. To show $x^{t_{2k-2}-t_{2k-1}}A_{2k-1} \in \text{Span}(B)$ we proceed as follows. For $i=k-1$, from Lemma \ref{lm-t_k+i-1-t_k+i}, we have 
\begin{equation} \label{eq-t_2k-2-t_2k-1}
x^{t_{2k-2}-t_{2k-1}}A_{2k-1}=c_{2k-2}uA_{2k-2}+q_0(x)A_{2k-1}+q_1(x)A_{2k}
\end{equation}
where $\text{deg}(q_0(x))<t_{2k-2}-t_{2k-1}$ and $\text{deg}(q_1(x))<(t_{2k-2}-t_{2k-1}) ~ \text{or} ~ (t_{2k-1}-t_{2k})$. In the above discussion we have shown that $x^iA_{2k} \in \text{Span}(B)$, for $1 \leq i \leq t_{2k-2}-t_{2k}-1$. Now we have $t_{2k-2}>t_{2k-1}>t_{2k}$ which implies that $t_{2k-2}-t_{2k-1}$, $t_{2k-1}-t_{2k}<t_{2k-2}-t_{2k}$. This gives $\text{deg}(q_1(x))<t_{2k-2}-t_{2k}$. Therefore $q_1(x)A_{2k}$ is a linear combination of some element of $B$ and previous elements of term $x^{t_{2k-2}-t_{2k}-1}A_{2k}$ in the ordered set $B_2-B$. That is $q_1(x)A_{2k} \in \text{Span}(B)$. And also both the term $c_{2k-2}uA_{2k-2}$, $q_0(x)A_{2k-1} \in \text{Span}(B)$ (since, $\text{deg}(q_0(x))<t_{2k-2}-t_{2k-1}$). Therefore $x^{t_{2k-2}-t_{2k-1}}A_{2k-1}$ is a linear combination of some elements of $B$ and its previous elements in the ordered set $B_2-B$. Therefore, $x^{t_{2k-2}-t_{2k-1}}A_{2k-1} \in \text{Span}(B)$. Now, we show that $x^{t_{2k-2}-t_{2k-1}+1}A_{2k-1} \in \text{Span}(B)$. We follow 
the same techniques as above. After putting the value of $x^{t_{2k-2}-t_{2k-1}}A_{2k-1}$ and $x^{t_{2k-2}-t_{2k-1}}A_{2k}$ (or $x^{t_{2k-1}-t_{2k}}A_{2k}$) in the equation obtained by multiplying Equation \ref{eq-t_2k-2-t_2k-1} by $x$, we get
\begin{equation} \label{eq-t_2k-2-t_2k-1+1}
x^{t_{2k-2}-t_{2k-1}+1}A_{2k-1}=(q_{00}'c_{2k-2}u+c_{2k-2}ux)A_{2k-2}+q_0'(x)A_{2k-1}+q_1'(x)A_{2k}
\end{equation}
for some $q_{00}' \in \F_p$ and $q_0'(x), q_1'(x) \in \F_p[x]$ such that $\text{deg}(q_0'(x))<t_{2k-2}-t_{2k-1}$ and $\text{deg}(q_1'(x))<(t_{2k-2}-t_{2k-1}) ~ \text{or} ~ (t_{2k-1}-t_{2k})$. (Note that after multiplying Equation \ref{eq-t_2k-2-t_2k-1} by $x$ we get Equation \ref{eq-t_2k-2-t_2k-1+1} and the degree of coefficient polynomial of $A_{2k-1}$ and $A_{2k}$ does not exceed by $t_{2k-2}-t_{2k-1}-1$ and $(t_{2k-2}-t_{2k-1}-1) ~ \text{or} ~ (t_{2k-1}-t_{2k}-1)$ respectively. In fact the degree of coefficient polynomial of $A_{2k-1}$ and $A_{2k}$ in Equation \ref{eq-t_2k-2-t_2k-1+1} are not exceed by the degree of coefficient polynomial of $A_{2k-1}$ and $A_{2k}$ respectively in Equation \ref{eq-t_2k-2-t_2k-1}. Only the degree of coefficient polynomial of $A_{2k-2}$ is increased by one in Equation \ref{eq-t_2k-2-t_2k-1+1}. This fact can be shown by same techniques as shown above for getting Equation \ref{eq-t_2k-1-t_2k+1-c} from Equation \ref{eq-t_2k-1-t_2k-a}). Therefore, it is clear from Equation 
\ref{eq-t_2k-2-t_2k-1+1} that $x^{t_{2k-2}-t_{2k-1}+1}A_{2k-1} \in \text{Span}(B)$. In a similar way, multiplying Equation \ref{eq-t_2k-2-t_2k-1+1} by $x$ we can show that $x^{t_{2k-2}-t_{2k-1}+2}A_{2k-1} \in \text{Span}(B)$. And also it can be shown that degree of coefficient polynomial of $A_{2k-1}$ and $A_{2k}$ will not be increased. Only the degree of coefficient polynomial of $A_{2k-2}$ will be increased by one. In this similar way after $t_{2k-3}-t_{2k-2}-1$ times we will get
\begin{equation} \label{eq-t_2k-3-t_2k-1-1_A_2k-1}
x^{t_{2k-3}-t_{2k-1}-1}A_{2k-1}=q_{-1}(x)A_{2k-2}+q_0''(x)A_{2k-1}+q_1''(x)A_{2k}
\end{equation}
for some $q_{-1}(x)$, $q_0''(x)$ and $q_1''(x) \in \F_p[x]$ with $\text{deg}(q_{-1}(x))=t_{2k-3}-t_{2k-2}-1$ and $\text{deg}(q_0''(x))<t_{2k-2}-t_{2k-1}$ and $\text{deg}(q_1''(x))<(t_{2k-2}-t_{2k-1})~ \text{or} ~(t_{2k-1}-t_{2k})$. From Equation \ref{eq-t_2k-3-t_2k-1-1_A_2k-1} we can say that $x^{t_{2k-3}-t_{2k-1}-1}A_{2k-1} \in \text{Span}(B)$. Now the next term of $x^{t_{2k-3}-t_{2k-1}-1}A_{2k-1}$ in the ordered set $B_2-B$ is $x^{t_{2k-2}-t_{2k}}A_{2k}$. To show $x^{t_{2k-2}-t_{2k}}A_{2k} \in \text{Span}(B)$ we multiply Equation \ref{eq-t_2k-2-t_2k-1_A_2k} by $x$. Then we get $x^{t_{2k-2}-t_{2k}}A_{2k} \in \text{Span}(B)$. Repeating this for $t_{2k-3}-t_{2k-2}$ times we can show that $x^{t_{2k-3}-t_{2k}-1}A_{2k} \in \text{Span}(B)$. In this fashion, for $i=k-2,k-3,\cdots, 2$, from Lemma \ref{lm-t_k+i-1-t_k+i}, we can show that the terms up to $x^{t_1-t_{2k}-1}A_{2k} \in \text{Span}(B)$ one by one. Now we have to show that $x^{t_1-t_{k+1}}A_{k+1} \in \text{Span}(B)$. For $i=1$, by Lemma \ref{lm-t_i-t_k+i}, 
we have the equation
\begin{multline} \label{eq-t_1-t_k+1}
x^{t_1-t_{k+1}}A_{k+1}=c_1vA_1+q_0(x)A_{k+1}+q_1(x)A_{k+2}\\+q_2(x)A_{k+3}+ \cdots +q_{k-1}(x)A_{2k},
\end{multline}
for some $c_1 \in \F_p$ and $q_0(x)$, $q_1(x)$, $\cdots$, $q_{k-1}(x)\in \F_p[x]$ with $\text{deg}(q_0(x))<t_1-t_{k+1}$, $\text{deg}(q_1(x))<(t_1-t_{k+1})$ or $(t_{k+1}-t_{k+2})$, $\text{deg}(q_2(x))<(t_{1}-t_{k+1})$ or $(t_{k+1}-t_{k+2})$ or $(t_{k+2}-t_{k+3})$, $\cdots$ and $\text{deg}(q_{k-1}(x))<(t_{1}-t_{k+1})$ or $(t_{k+1}-t_{k+2})$ or $(t_{k+2}-t_{k+3})$ or $\cdots$ or $(t_{2k-1}-t_{2k})$. We have shown up to now that $x^{t_1-t_{k+2}-1}A_{k+2}$, $x^{t_1-t_{k+3}-1}A_{k+3}$, $\cdots$, $x^{t_1-t_{2k}-1}A_{2k} \in \text{Span}(B)$. Therefore, the terms $q_{i-1}(x)A_{k+i}$, for $1\leq i\leq k$ in Equation \ref{eq-t_1-t_k+1} are linear combination of some previous elements of $x^{t_1-t_{k+i}}A_{k+i}$, for $2\leq i\leq k$ in the ordered set $B_2-B$ and some elements of the set  $B$. This implies that, $q_{i-1}(x)A_{k+i} \in \text{Span}(B)$, for $1\leq i\leq k$. Therefore, $x^{t_1-t_{k+1}}A_{k+1} \in \text{Span}(B)$. Multiplying Equation \ref{eq-t_1-t_k+1} by $x$, in a similar fashion as above we can show that 
$x^{t_1-t_{k+1}+1}A_{k+1} \in \text{Span}(B)$. Repeating this for $n-t_1$ times we can show that $x^{n-t_{k+1}+1}A_{k+1} \in \text{Span}(B)$. Again, to show the elements from $x^{t_1-t_{k+2}}A_{k+2}$ to $x^{n-t_{2k}-1}A_{2k}$ of the ordered set $B_2-B$ are in $\text{Span}(B)$, we use Lemma \ref{lm-t_k+i-1-t_k+i} and apply same techniques as applied above.
\end{case}

\begin{case} \label{case-2}
Let $t_{k+i}>t_{i+1}$, that is, $t'_i=t_{i+1}$, for $1 \leq i \leq k-1$. We define the set $B_2-B$ as an ordered set :\\
$\{x^{t_k-t_{2k}}A_{2k}$, $x^{t_k-t_{2k}+1}A_{2k}$, $\cdots$, $x^{n-t_{2k}-1}A_{2k}$,
$x^{t_{k-1}-t_{2k-1}}A_{2k-1}$, $x^{t_{k-1}-t_{2k-1}+1}A_{2k-1}$, $\cdots$, $x^{n-t_{2k-1}-1}A_{2k-1}$,
$\cdots$,
$x^{t_i-t_{k+i}}A_{k+i}$, $x^{t_i-t_{k+i}+1}A_{k+i}$, $\cdots$, $x^{n-t_{k+i}-1}A_{k+i}$,
$\cdots$,
$x^{t_2-t_{k+2}}A_{k+2}$, $x^{t_2-t_{k+2}+1}A_{k+2}$, $\cdots$, $x^{n-t_{k+2}-1}A_{k+2}$,
$x^{t_1-t_{k+1}}A_{k+1}$, $x^{t_1-t_{k+1}+1}A_{k+1}$, $\cdots$, $x^{n-t_{k+1}-1}A_{k+1}\}$,\\
where $x^{t_k-t_{2k}}A_{2k}$ is the first and $x^{n-t_{k+1}-1}A_{k+1}$ is the last element of the set $B_2-B$. Rest of the elements are in the order as they appear in the set $B_2-B$ given above.
For $i=k$, from Lemma \ref{lm-t_i-t_k+i}, we have 
\begin{equation} \label{eq-t_k-t_2k-a}
x^{t_k-t_{2k}}A_{2k}=c_kvA_k+q_0(x)A_{2k},
\end{equation}
where $\text{deg}(q_0(x))<t_k-t_{2k}$. Let
\begin{equation}
q_0(x)=q_{00}+q_{01}x +\cdots +q_{0(t_k-t_{2k}-1)}x^{t_k-t_{2k}-1},\notag
\end{equation}
where $q_{0i} \in \F_p$. Thus we have
\begin{equation} \label{eq-t_k-t_2k-b}
x^{t_k-t_{2k}}A_{2k}=c_kvA_k+q_{00}A_{2k}+q_{01}xA_{2k} +\cdots +q_{0(t_k-t_{2k}-1)}x^{t_k-t_{2k}-1}A_{2k}.
\end{equation}
Therefore, $x^{t_k-t_{2k}}A_{2k}$ is a linear combination of some elements of $B$, that is $x^{t_k-t_{2k}}A_{2k} \in \text{Span}(B)$. Next, we show that $x^{t_k-t_{2k}+1}A_{2k} \in \text{Span}(B)$. Multiplying Equation \ref{eq-t_k-t_2k-b} by $x$ we get
\begin{equation} \label{eq-t_k-t_2k+1-a}
x^{t_k-t_{2k}+1}A_{2k}=c_kvxA_k+q_{00}xA_{2k}+q_{01}x^2A_{2k} +\cdots +q_{0(t_k-t_{2k}-1)}x^{t_k-t_{2k}}A_{2k}.
\end{equation}
As in Case \ref{case-1}, if we put the value of $x^{t_k-t_{2k}}A_{2k}$ from Equation \ref{eq-t_k-t_2k-b} in Equation \ref{eq-t_k-t_2k+1-a}, we get
\begin{equation} \label{eq-t_k-t_2k+1-b}
x^{t_k-t_{2k}+1}A_{2k}=(c_kq_{0(t_k-t_{2k}-1)}q_{00}'v+c_kvx)A_k+q_0'(x)A_{2k}.
\end{equation}
for some $q_0'(x) \in \F_p[x]$ such that $\text{deg}(q_0'(x))<t_k-t_{2k}$. (Note that multiplying Equation \ref{eq-t_k-t_2k-a} by $x$ we get Equation \ref{eq-t_k-t_2k+1-b} and the degree of the coefficient polynomial of $A_{2k}$ is not exceed by $t_k-t_{2k}-1$. In fact the degree of coefficient polynomial of $A_{2k}$ in Equation \ref{eq-t_k-t_2k+1-b} does not exceed by the degree of coefficient polynomial of $A_{2k}$ in Equation \ref{eq-t_k-t_2k-a}. Only the degree of coefficient polynomial of $A_k$ is increased by one. Proof of this fact can be shown by same technique as shown in Case \ref{case-1} for getting Equation \ref{eq-t_2k-1-t_2k+1-c} from Equation \ref{eq-t_2k-1-t_2k-a}). From Lemma \ref{lm-vA_i}, we have $x^ivA_k \in \text{Span}(B)$ for $0 \leq i \leq n-t_k-1$. This implies that $x^{t_k-t_{2k}+1}A_{2k}$ is a linear combination of its previous elements in the ordered set $B_2-B$ and some elements of $B$. Hence, $x^{t_k-t_{2k}+1}A_{2k} \in \text{Span}(B)$. In a similar way, multiplying Equation
\ref{eq-t_k-t_2k+1-b} by $x$ we can show that $x^{t_k-t_{2k}+2}A_{2k} \in \text{Span}(B)$. And also it can be shown that degree of coefficient polynomial of $A_{2k}$ will not be increased. Only the degree of coefficient polynomial of $A_k$ will be increased by one. In this similar way after $n-t_k$ times we can show that $x^{n-t_{2k}-1}A_{2k} \in \text{Span}(B)$. Now we show that $x^{t_{k-1}-t_{2k-1}}A_{2k-1} \in \text{Span}(B)$. For $i=k-1$, from Lemma \ref{lm-t_i-t_k+i}, we have
\begin{equation} \label{eq-t_k-1-t_2k-1}
x^{t_{k-1}-t_{2k-1}}A_{2k-1}=c_{k-1}vA_{k-1}+q_0(x)A_{2k-1}+q_1(x)A_{2k},
\end{equation}
where $\text{deg}(q_0(x))<t_{k-1}-t_{2k-1}$ and $\text{deg}(q_1(x))<(t_{k-1}-t_{2k-1}) ~ \text{or} ~ (t_{2k-1}-t_{2k})$. In above discussion (in Case \ref{case-2}), we have shown that $x^iA_{2k} \in \text{Span}(B)$ for $0 \leq i \leq n-t_{2k}-1$. Therefore, $q_1(x)A_{2k} \in \text{Span}(B)$. Also, from Lemma \ref{lm-vA_i}, we have $x^ivA_{k-1}\in \text{Span}(B)$ for $0 \leq i \leq n-t_{k-1}-1$.  And, we have $q_0(x)A_{2k-1} \in \text{Span}(B)$ (since $\text{deg}(q_0(x))<t_{k-1}-t_{2k-1}$). Therefore, $x^{t_{k-1}-t_{2k-1}}A_{2k-1} \in \text{Span}(B)$. Now we show that $x^{t_{k-1}-t_{2k-1}+1}A_{2k-1} \in \text{Span}(B)$. We follow the same technique as shown above. After putting the value of $x^{t_{k-1}-t_{2k-1}}A_{2k-1}$ and $x^{t_{k-1}-t_{2k-1}}A_{2k}$ (or $x^{t_{2k-1}-t_{2k}}A_{2k}$) in the equation obtained by Equation \ref{eq-t_k-1-t_2k-1} by multiplying $x$, we get
\begin{equation} \label{eq-t_k-1-t_2k-1+1}
x^{t_{k-1}-t_{2k-1}+1}A_{2k-1}=(q_{00}'c_{k-1}v+c_{k-1}vx)A_{k-1}+q_0'(x)A_{2k-1}+q_1'(x)A_{2k}
\end{equation}
for some $q_{00}' \in \F_p$ and $q_0'(x), q_1'(x) \in \F_p[x]$ such that $\text{deg}(q_0'(x))<t_{k-1}-t_{2k-1}$ and $\text{deg}(q_1'(x))<(t_{k-1}-t_{2k-1}) ~ \text{or} ~ (t_{2k-1}-t_{2k})$. (Note that after multiplying Equation \ref{eq-t_k-1-t_2k-1} by $x$ we get Equation \ref{eq-t_k-1-t_2k-1+1} and the degree of coefficient polynomial of $A_{2k-1}$ and $A_{2k}$ are not exceed by $t_{k-1}-t_{2k-1}-1$ and $t_{k-1}-t_{2k-1}-1 ~ \text{or} ~ t_{2k-1}-t_{2k}-1$. In fact the degree of coefficient polynomial of $A_{2k-1}$ and $A_{2k}$ in Equation \ref{eq-t_k-1-t_2k-1+1} are not exceed by the degree of coefficient polynomial of $A_{2k-1}$ and $A_{2k}$ in Equation \ref{eq-t_k-1-t_2k-1}. Only the degree of coefficient polynomial $A_{k-1}$ is increased by one. This fact can be shown by same technique as shown in Case \ref{case-1} for getting Equation \ref{eq-t_2k-1-t_2k+1-c} from Equation \ref{eq-t_2k-1-t_2k-a}). Therefore, it is clear from Equation \ref{eq-t_k-1-t_2k-1+1} that $x^{t_{2k-2}-t_{2k-1}+1}A_{2k-1} \in \
text{Span}(B)$. In a similar way, by multiplying Equation \eqref{eq-t_k-1-t_2k-1+1} by $x$, we can show that $x^{t_{k-1}-t_{2k-1}+2}A_{2k-1} \in \text{Span}(B)$. And also it can be shown that degree of coefficient polynomial of $A_{2k-1}$ and $A_{2k}$ will not be increased. Only the degree of coefficient polynomial of $A_{k-1}$ will be increased by one. In this similar way after $n-t_{k-1}$ times we will get $x^{n-t_{2k-1}-1}A_{2k-1} \in \text{Span}(B)$. In this fashion, for $i=k-2,k-3,\cdots, 1$, from Lemma \ref{lm-t_i-t_k+i}, we can show the rest of the terms in the ordered set $B_2-B$ are belongs to $\text{Span}(B)$ that is $x^{n-t_{k+1}-1}A_{k+1} \in \text{Span}(B)$.
\end{case}

\begin{case} \label{case-3}
Let $I=\{l_1,l_2, \cdots, l_r\}$ for $1 \leq r \leq k-1$ and $1\leq l_1<l_2<\cdots<l_{r-1}<l_{r}\leq k-1$. Let $t_{k+i}<t_{i+1}$ for  $i\notin I$, $1 \leq i \leq k-1$ and $t_{k+i}>t_{i+1}$ for  $i\in I$. If $r=k-1$ that is $l_r=k-1$ then the this case will be reduced to Case \ref{case-2} and if the set $I$ is empty then this case will be reduced to Case \ref{case-1}. We define the set $B_2-B$ as an ordered set :\\
$\{x^{t_{2k-1}-t_{2k}}A_{2k}$, $x^{t_{2k-1}-t_{2k}+1}A_{2k}$, $\cdots$, $x^{t_{2k-2}-t_{2k}-1}A_{2k}\}$
$\cup$\\
$\{x^{t_{2k-2}-t_{2k-1}}A_{2k-1}$, $x^{t_{2k-2}-t_{2k-1}+1}A_{2k-1}$, $\cdots$, $x^{t_{2k-3}-t_{2k-1}-1}A_{2k-1}$,\\
$x^{t_{2k-2}-t_{2k}}A_{2k}$, $x^{t_{2k-2}-t_{2k}+1}A_{2k}$, $\cdots$, $x^{t_{2k-3}-t_{2k}-1}A_{2k}\}$
$\cup$\\
$\{x^{t_{2k-3}-t_{2k-2}}A_{2k-2}$, $x^{t_{2k-3}-t_{2k-2}+1}A_{2k-2}$, $\cdots$, $x^{t_{2k-4}-t_{2k-2}-1}A_{2k-2}$,\\
$x^{t_{2k-3}-t_{2k-1}}A_{2k-1}$, $x^{t_{2k-3}-t_{2k-1}+1}A_{2k-1}$, $\cdots$, $x^{t_{2k-4}-t_{2k-1}-1}A_{2k-1}$,\\
$x^{t_{2k-3}-t_{2k}}A_{2k}$, $x^{t_{2k-3}-t_{2k}+1}A_{2k}$, $\cdots$, $x^{t_{2k-4}-t_{2k}-1}A_{2k}\}$\\
$\cup$\\
$\vdots$\\
$\cup$\\
$\{x^{t_{k+l_r+1}-t_{k+l_r+2}}A_{k+l_r+2}$, $x^{t_{k+l_r+1}-t_{k+l_r+2}+1}A_{k+l_r+2}$, $\cdots$, $x^{t_{l_r+1}-t_{k+l_r+2}-1}A_{k+l_r+2}$,\\
$x^{t_{k+l_r+1}-t_{k+l_r+3}}A_{k+l_r+3}$, $x^{t_{k+l_r+1}-t_{k+l_r+3}+1}A_{k+l_r+3}$, $\cdots$, $x^{t_{l_r+1}-t_{k+l_r+3}-1}A_{k+l_r+3}$,\\
$\vdots$\\
$x^{t_{k+l_r+1}-t_{k+l_r+i}}A_{k+l_r+i}$, $x^{t_{k+l_r+1}-t_{k+l_r+i}+1}A_{k+l_r+i}$, $\cdots$, $x^{t_{l_r+1}-t_{k+l_r+i}-1}A_{k+l_r+i}$,\\
$\vdots$\\
$x^{t_{k+l_r+1}-t_{2k-1}}A_{2k-1}$, $x^{t_{k+l_r+1}-t_{2k-1}+1}A_{2k-1}$, $\cdots$, $x^{t_{l_r+1}-t_{2k-1}-1}A_{2k-1}$,\\
$x^{t_{k+l_r+1}-t_{2k}}A_{2k}$, $x^{t_{k+l_r+1}-t_{2k}+1}A_{2k}$, $\cdots$, $x^{t_{l_r+1}-t_{2k}-1}A_{2k}\}$
$\cup$\\
$\{x^{t_{l_r+1}-t_{k+l_r+1}}A_{k+l_r+1}$, $x^{t_{l_r+1}-t_{k+l_r+1}+1}A_{k+l_r+1}$, $\cdots$, $x^{n-t_{k+l_r+1}-1}A_{k+l_r+1}$,\\
$x^{t_{l_r+1}-t_{k+l_r+2}}A_{k+l_r+2}$, $x^{t_{l_r+1}-t_{k+l_r+2}+1}A_{k+l_r+2}$, $\cdots$, $x^{n-t_{k+l_r+2}-1}A_{k+l_r+2}$,\\
$\vdots$\\
$x^{t_{l_r+1}-t_{k+l_r+i}}A_{k+l_r+i}$, $x^{t_{l_r+1}-t_{k+l_r+i}+1}A_{k+l_r+i}$, $\cdots$, $x^{n-t_{k+l_r+i}-1}A_{k+l_r+i}$,\\
$\vdots$\\
$x^{t_{l_r+1}-t_{2k-1}}A_{2k-1}$, $x^{t_{l_r+1}-t_{2k-1}+1}A_{2k-1}$, $\cdots$, $x^{n-t_{2k-1}-1}A_{2k-1}$,\\
$x^{t_{l_r+1}-t_{2k}}A_{2k}$, $x^{t_{l_r+1}-t_{2k}+1}A_{2k}$, $\cdots$, $x^{n-t_{2k}-1}A_{2k}\}$\\
$\cup$\\
$\{x^{t_{k+l_j-1}-t_{k+l_j}}A_{k+l_j}$, $x^{t_{k+l_j-1}-t_{k+l_j}+1}A_{k+l_j}$, $\cdots$, $x^{t_{k+l_j-2}-t_{k+l_j}-1}A_{k+l_j}\}$
$\cup$\\
$\{x^{t_{k+l_j-2}-t_{k+l_j-1}}A_{k+l_j-1}$, $x^{t_{k+l_j-2}-t_{k+l_j-1}+1}A_{k+l_j-1}$, $\cdots$, $x^{t_{k+l_j-3}-t_{k+l_j-1}-1}A_{k+l_j-1}$,\\
$x^{t_{k+l_j-2}-t_{k+l_j}}A_{k+l_j}$, $x^{t_{k+l_j-2}-t_{k+l_j}+1}A_{k+l_j}$, $\cdots$, $x^{t_{k+l_j-3}-t_{k+l_j}-1}A_{k+l_j}\}$
$\cup$\\
$\{x^{t_{k+l_j-3}-t_{k+l_j-2}}A_{k+l_j-2}$, $x^{t_{k+l_j-3}-t_{k+l_j-2}+1}A_{k+l_j-2}$, $\cdots$, $x^{t_{k+l_j-4}-t_{k+l_j-2}-1}A_{k+l_j-2}$,\\
$x^{t_{k+l_j-3}-t_{k+l_j-1}}A_{k+l_j-1}$, $x^{t_{k+l_j-3}-t_{k+l_j-1}+1}A_{k+l_j-1}$, $\cdots$, $x^{t_{k+l_j-4}-t_{k+l_j-1}-1}A_{k+l_j-1}$,\\
$x^{t_{k+l_j-3}-t_{k+l_j}}A_{k+l_j}$, $x^{t_{k+l_j-3}-t_{k+l_j}+1}A_{k+l_j}$, $\cdots$, $x^{t_{k+l_j-4}-t_{k+l_j}-1}A_{k+l_j}\}$\\
$\cup$\\
$\vdots$\\
$\cup$\\
$\{x^{t_{k+l_{j-1}+2}-t_{k+l_{j-1}+3}}A_{k+l_{j-1}+3}$, $x^{t_{k+l_{j-1}+2}-t_{k+l_{j-1}+3}+1}A_{k+l_{j-1}+3}$, $\cdots$,\\ $x^{t_{k+l_{j-1}+1}-t_{k+l_{j-1}+3}-1}A_{k+l_{j-1}+3}$,\\
$x^{t_{k+l_{j-1}+2}-t_{k+l_{j-1}+4}}A_{k+l_{j-1}+4}$, $x^{t_{k+l_{j-1}+2}-t_{k+l_{j-1}+4}+1}A_{k+l_{j-1}+4}$, $\cdots$,\\ $x^{t_{k+l_{j-1}+1}-t_{k+l_{j-1}+4}-1}A_{k+l_{j-1}+4}$,\\
$\vdots$\\
$x^{t_{k+l_{j-1}+2}-t_{k+l_{j-1}+i}}A_{k+l_{j-1}+i}$, $x^{t_{k+l_{j-1}+2}-t_{k+l_{j-1}+i}+1}A_{k+l_{j-1}+i}$, $\cdots$,\\ $x^{t_{k+l_{j-1}+1}-t_{k+l_{j-1}+i}-1}A_{k+l_{j-1}+i}$,\\
$\vdots$\\
$x^{t_{k+l_{j-1}+2}-t_{k+l_j-1}}A_{k+l_j-1}$, $x^{t_{k+l_{j-1}+2}-t_{k+l_j-1}+1}A_{k+l_j-1}$, $\cdots$,\\ $x^{t_{k+l_{j-1}+1}-t_{k+l_j-1}-1}A_{k+l_j-1}$,\\
$x^{t_{k+l_{j-1}+2}-t_{k+l_j}}A_{k+l_j}$, $x^{t_{k+l_{j-1}+2}-t_{k+l_j}+1}A_{k+l_j}$, $\cdots$,\\ $x^{t_{k+l_{j-1}+1}-t_{k+l_j}-1}A_{k+l_j}\}$
$\cup$\\
$\{x^{t_{k+l_{j-1}+1}-t_{k+l_{j-1}+2}}A_{k+l_{j-1}+2}$, $x^{t_{k+l_{j-1}+1}-t_{k+l_{j-1}+2}+1}A_{k+l_{j-1}+2}$, $\cdots$,\\ $x^{t_{l_{j-1}+1}-t_{k+l_{j-1}+2}-1}A_{k+l_{j-1}+2}$,\\
$x^{t_{k+l_{j-1}+1}-t_{k+l_{j-1}+3}}A_{k+l_{j-1}+3}$, $x^{t_{k+l_{j-1}+1}-t_{k+l_{j-1}+3}+1}A_{k+l_{j-1}+3}$, $\cdots$,\\ $x^{t_{l_{j-1}+1}-t_{k+l_{j-1}+3}-1}A_{k+l_{j-1}+3}$,\\
$\vdots$\\
$x^{t_{k+l_{j-1}+1}-t_{k+l_{j-1}+i}}A_{k+l_{j-1}+i}$, $x^{t_{k+l_{j-1}+1}-t_{k+l_{j-1}+i}+1}A_{k+l_{j-1}+i}$, $\cdots$,\\ $x^{t_{l_{j-1}+1}-t_{k+l_{j-1}+i}-1}A_{k+l_{j-1}+i}$,\\
$\vdots$\\
$x^{t_{k+l_{j-1}+1}-t_{k+l_j-1}}A_{k+l_j-1}$, $x^{t_{k+l_{j-1}+1}-t_{k+l_j-1}+1}A_{k+l_j-1}$, $\cdots$,\\ $x^{t_{l_{j-1}+1}-t_{k+l_j-1}-1}A_{k+l_j-1}$,\\
$x^{t_{k+l_{j-1}+1}-t_{k+l_j}}A_{k+l_j}$, $x^{t_{k+l_{j-1}+1}-t_{k+l_j}+1}A_{k+l_j}$, $\cdots$,\\ $x^{t_{l_{j-1}+1}-t_{k+l_j}-1}A_{k+l_j}\}$
$\cup$\\
$\{x^{t_{l_{j-1}+1}-t_{k+l_{j-1}+1}}A_{k+l_{j-1}+1}$, $x^{t_{l_{j-1}+1}-t_{k+l_{j-1}+1}+1}A_{k+l_{j-1}+1}$, $\cdots$,\\ $x^{n-t_{k+l_{j-1}+1}-1}A_{k+l_{j-1}+1}$,\\
$x^{t_{l_{j-1}+1}-t_{k+l_{j-1}+2}}A_{k+l_{j-1}+2}$, $x^{t_{l_{j-1}+1}-t_{k+l_{j-1}+2}+1}A_{k+l_{j-1}+2}$, $\cdots$,\\ $x^{n-t_{k+l_{j-1}+2}-1}A_{k+l_{j-1}+2}$,\\
$\vdots$\\
$x^{t_{l_{j-1}+1}-t_{k+l_{j-1}+i}}A_{k+l_{j-1}+i}$, $x^{t_{l_{j-1}+1}-t_{k+l_{j-1}+i}+1}A_{k+l_{j-1}+i}$, $\cdots$,\\ $x^{n-t_{k+l_{j-1}+i}-1}A_{k+l_{j-1}+i}$,\\
$\vdots$\\
$x^{t_{l_{j-1}+1}-t_{k+l_j-1}}A_{k+l_j-1}$, $x^{t_{l_{j-1}+1}-t_{k+l_j-1}+1}A_{k+l_j-1}$, $\cdots$,\\ $x^{n-t_{k+l_j-1}-1}A_{k+l_j-1}$,\\
$x^{t_{l_{j-1}+1}-t_{k+l_j}}A_{k+l_j}$, $x^{t_{l_{j-1}+1}-t_{k+l_j}+1}A_{k+l_j}$, $\cdots$,\\ $x^{n-t_{k+l_j}-1}A_{k+l_j}\}$\\
$\cup$\\
$\{x^{t_{k+l_1-1}-t_{k+l_1}}A_{k+l_1}$, $x^{t_{k+l_1-1}-t_{k+l_1}+1}A_{k+l_1}$, $\cdots$, $x^{t_{k+l_1-2}-t_{k+l_1}-1}A_{k+l_1}\}$
$\cup$\\
$\{x^{t_{k+l_1-2}-t_{k+l_1-1}}A_{k+l_1-1}$, $x^{t_{k+l_1-2}-t_{k+l_1-1}+1}A_{k+l_1-1}$, $\cdots$, $x^{t_{k+l_1-3}-t_{k+l_1-1}-1}A_{k+l_1-1}$,\\
$x^{t_{k+l_1-2}-t_{k+l_1}}A_{k+l_1}$, $x^{t_{k+l_1-2}-t_{k+l_1}+1}A_{k+l_1}$, $\cdots$, $x^{t_{k+l_1-3}-t_{k+l_1}-1}A_{k+l_1}\}$
$\cup$\\
$\{x^{t_{k+l_1-3}-t_{k+l_1-2}}A_{k+l_1-2}$, $x^{t_{k+l_1-3}-t_{k+l_1-2}+1}A_{k+l_1-2}$, $\cdots$, $x^{t_{k+l_1-4}-t_{k+l_1-2}-1}A_{k+l_1-2}$,\\
$x^{t_{k+l_1-3}-t_{k+l_1-1}}A_{k+l_1-1}$, $x^{t_{k+l_1-3}-t_{k+l_1-1}+1}A_{k+l_1-1}$, $\cdots$, $x^{t_{k+l_1-4}-t_{k+l_1-1}-1}A_{k+l_1-1}$,\\
$x^{t_{k+l_1-3}-t_{k+l_1}}A_{k+l_1}$, $x^{t_{k+l_1-3}-t_{k+l_1}+1}A_{k+l_1}$, $\cdots$, $x^{t_{k+l_1-4}-t_{k+l_1}-1}A_{k+l_1}\}$\\
$\cup$\\
$\vdots$\\
$\cup$\\
$\{x^{t_{k+2}-t_{k+3}}A_{k+3}$, $x^{t_{k+2}-t_{k+3}+1}A_{k+3}$, $\cdots$, $x^{t_{k+1}-t_{k+3}-1}A_{k+3}$,\\
$x^{t_{k+2}-t_{k+4}}A_{k+4}$, $x^{t_{k+2}-t_{k+4}+1}A_{k+4}$, $\cdots$, $x^{t_{k+1}-t_{k+4}-1}A_{k+4}$,\\
$\vdots$\\
$x^{t_{k+2}-t_{k+l_1-i}}A_{k+l_1-i}$, $x^{t_{k+2}-t_{k+l_1-i}+1}A_{k+l_1-i}$, $\cdots$, $x^{t_{k+1}-t_{k+l_1-i}-1}A_{k+l_1-i}$,\\
$\vdots$\\
$x^{t_{k+2}-t_{k+l_1-1}}A_{k+l_1-1}$, $x^{t_{k+2}-t_{k+l_1-1}+1}A_{k+l_1-1}$, $\cdots$, $x^{t_{k+1}-t_{k+l_1-1}-1}A_{k+l_1-1}$,\\
$x^{t_{k+2}-t_{k+l_1}}A_{k+l_1}$, $x^{t_{k+2}-t_{k+l_1}+1}A_{k+l_1}$, $\cdots$, $x^{t_{k+1}-t_{k+l_1}-1}A_{k+l_1}\}$
$\cup$\\
$\{x^{t_{k+1}-t_{k+2}}A_{k+2}$, $x^{t_{k+1}-t_{k+2}+1}A_{k+2}$, $\cdots$, $x^{t_1-t_{k+2}-1}A_{k+2}$,\\
$x^{t_{k+1}-t_{k+3}}A_{k+3}$, $x^{t_{k+1}-t_{k+3}+1}A_{k+3}$, $\cdots$, $x^{t_1-t_{k+3}-1}A_{k+3}$,\\
$\vdots$\\
$x^{t_{k+1}-t_{k+l_1-i}}A_{k+l_1-i}$, $x^{t_{k+1}-t_{k+l_1-i}+1}A_{k+l_1-i}$, $\cdots$, $x^{t_1-t_{k+l_1-i}-1}A_{k+l_1-i}$,\\
$\vdots$\\
$x^{t_{k+1}-t_{k+l_1-1}}A_{k+l_1-1}$, $x^{t_{k+1}-t_{k+l_1-1}+1}A_{k+l_1-1}$, $\cdots$, $x^{t_1-t_{k+l_1-1}-1}A_{k+l_1-1}$,\\
$x^{t_{k+1}-t_{k+l_1}}A_{k+l_1}$, $x^{t_{k+1}-t_{k+l_1}+1}A_{k+l_1}$, $\cdots$, $x^{t_1-t_{k+l_1}-1}A_{k+l_1}\}$\\
$\cup$\\
$\{x^{t_1-t_{k+1}}A_{k+1}$, $x^{t_1-t_{k+1}+1}A_{k+1}$, $\cdots$, $x^{n-t_{k+1}-1}A_{k+1}$,\\
$x^{t_1-t_{k+2}}A_{k+2}$, $x^{t_1-t_{k+2}+1}A_{k+2}$, $\cdots$, $x^{n-t_{k+2}-1}A_{k+2}$,\\
$\vdots$\\
$x^{t_1-t_{k+l_1-i}}A_{k+l_1-i}$, $x^{t_1-t_{k+l_1-i}+1}A_{k+l_1-i}$, $\cdots$, $x^{n-t_{k+l_1-i}-1}A_{k+l_1-i}$,\\
$\vdots$\\
$x^{t_1-t_{k+l_1-1}}A_{k+l_1-1}$, $x^{t_1-t_{k+l_1-1}+1}A_{k+l_1-1}$, $\cdots$, $x^{n-t_{k+l_1-1}-1}A_{k+l_1-1}$,\\
$x^{t_1-t_{k+l_1}}A_{k+l_1}$, $x^{t_1-t_{k+l_1}+1}A_{k+l_1}$, $\cdots$, $x^{n-t_{k+l_1}-1}A_{k+l_1}\}$,\\
where $2 \leq j\leq r$ and $j$ takes the value in decreasing order from $r$ to $2$. In this case, to show that $B$ spans $B_2-B$ we use the same techniques as used in Case \ref{case-1} and Case \ref{case-2}. In the following table we summarize.
\begin{center}
\begin{tabular}{| l | l| c |}
\hline
Terms belongs to $B_2-B$ & Lemma used & Technique\\
&&similar to case\\
\hline
$x^{t_{2k-1}-t_{2k}}A_{2k}$ to & Lemma \ref{lm-t_k+i-1-t_k+i} for  & Case \ref{case-1}\\
$x^{t_{l_r+1}-t_{2k}-1}A_{2k}$ & $i=k,k-1, \cdots, l_r+2$ &\\
&&\\
$x^{t_{l_r+1}-t_{k+l_r+1}}A_{k+l_r+1}$ to & Lemma \ref{lm-t_i-t_k+i} for $i=l_r+1$  & Case \ref{case-2}\\
$x^{n-t_{k+l_r+1}-1}A_{k+l_r+1}$ &  &\\
&&\\
$x^{t_{l_r+1}-t_{k+l_r+2}}A_{k+l_r+2}$ to & Lemma \ref{lm-t_k+i-1-t_k+i} for  & Case \ref{case-1}\\
$x^{n-t_{2k}-1}A_{2k}$ & $i=k,k-1, \cdots, l_r+2$ &\\
\hline
for a fix $j$ where&&\\
$j=r, r-1, \cdots, 2$&&\\
&&\\
$x^{t_{k+l_j-1}-t_{k+l_j}}A_{k+l_j}$ to & Lemma \ref{lm-t_k+i-1-t_k+i} for $i=l_j,$ & Case \ref{case-1}\\
$x^{t_{l_{j-1}+1}-t_{k+l_j}-1}A_{k+l_j}$ & $l_j+1, \cdots, l_{j-1}+3,l_{j-1}+2$ &\\
&&\\
$x^{t_{l_{j-1}+1}-t_{k+l_{j-1}+1}}A_{k+l_{j-1}+1}$ to & Lemma \ref{lm-t_i-t_k+i} for $i=l_{j-1}+1$ & Case \ref{case-2}\\
$x^{n-t_{k+l_{j-1}+1}-1}A_{k+l_{j-1}+1}$&&\\
&&\\
$x^{t_{l_{j-1}+1}-t_{k+l_{j-1}+2}}A_{k+l_{j-1}+2}$ to & Lemma \ref{lm-t_k+i-1-t_k+i} for $i=l_j,$ & Case \ref{case-1}\\
$x^{n-t_{k+l_j}-1}A_{k+l_j}$ & $l_j+1, \cdots, l_{j-1}+3,l_{j-1}+2$&\\
\hline
$x^{t_{k+l_1-1}-t_{k+l_1}}A_{k+l_1}$ to & Lemma \ref{lm-t_k+i-1-t_k+i} for & Case \ref{case-1}\\
$x^{t_1-t_{k+l_1}-1}A_{k+l_1}$ & $i=l_1,l_1+1, \cdots, 3, 2$ &\\
&&\\
$x^{t_1-t_{k+1}}A_{k+1}$ to & Lemma \ref{lm-t_i-t_k+i} for $i=1$ & Case \ref{case-2}\\
$x^{n-t_{k+1}-1}A_{k+1}$ &  &\\
&&\\
$x^{t_1-t_{k+2}}A_{k+2}$ to & Lemma \ref{lm-t_k+i-1-t_k+i} for & Case \ref{case-1}\\
$x^{n-t_{k+l_1}-1}A_{k+l_1}$ & $i=l_1,l_1+1, \cdots, 3, 2$ &\\
\hline
\end{tabular}
\end{center}
\end{case}
This shows that $B$ spans $B_2$. Now we show that $B$ spans $B_1$. Recall that we have a homomorphism $\phi : C \rightarrow R_{u^k,p,n}$ (see Equation \eqref{surj-hom}). Therefore, $C/\text{Ker}\phi \simeq \phi(C)$ and $\phi(C)$ is a cyclic code over $R_{u^k,p}$. Thus, we can realize $C/\text{Ker}\phi$ as a cyclic code over $R_{u^k,p}$. Therefore, from Theorem 4.2 of \cite{Aks-Pkk13}, the minimal spanning set $B_{\phi(C)}$ of the code $C/\text{Ker}\phi$ is $\{A_1+\text{Ker}\phi, xA_1+\text{Ker}\phi$, $\cdots$, $x^{n-t_1-1}A_1+\text{Ker}\phi$, $A_2+\text{Ker}\phi, xA_2+\text{Ker}\phi$, $\cdots$, $x^{t_1-t_2-1}A_2+\text{Ker}\phi, A_3+\text{Ker}\phi, xA_3+\text{Ker}\phi$, $\cdots$, $x^{t_2-t_3-1}A_3+\text{Ker}\phi$, $\cdots$, $A_k+\text{Ker}\phi, xA_k+\text{Ker}\phi$, $\cdots$, $x^{t_{k-1}-t_k-1}A_k+\text{Ker}\phi$\}. To show $B$ spans $B_1$, we only show that $x^{t_1-t_2}A_2 \in \text{Span}(B)$. In a similar way, we can show that $x^{t_1-t_2+1}A_2$, $\cdots$, $x^{n-t_2-1}A_2,$ $\cdots$,  $x^{t_{k-1}-t_k}A_k, \
\cdots, x^{n-t_k-1}A_k \in \text{Span}(B).$ Since $B_{\phi(C)}$ spans $C/\text{Ker}\phi$, we can write $x^{t_1-t_2}A_2+\text{Ker}\phi$ as a $R_{u^k,p}$ linear combination of the elements of $B_{\phi(C)}$, i.e., $x^{t_1-t_2}A_2+\text{Ker}\phi$ = $\sum\limits_{i=0}^{n-t_1-1}c_{i1}(x^iA_1+\text{Ker}\phi)$ + $\cdots$ + $\sum\limits_{i=0}^{t_{k-1}-t_k-1}c_{ik}(x^iA_k+\text{Ker}\phi$), where $c_{ij} \in R_{u^k,p}$. Thus,
\[x^{t_1-t_2}A_2-\left(\sum\limits_{i=0}^{n-t_1-1}c_{i1}(x^iA_1) + \cdots + \sum\limits_{i=0}^{t_{k-1}-t_k-1}c_{ik}(x^iA_k)\right) \in \text{Ker}\phi.\]
Since $\text{Ker}\phi = \text{Span}(B_2)$ and $B$ spans $B_2$, we get $x^{t_1-t_2}A_2 \in \text{Span}(B)$. Similarly, we can show that $x^{t_1-t_2+1}A_2$, $\cdots$, $x^{n-t_2-1}A_2,$ $\cdots$,  $x^{t_{k-1}-t_k}A_k$, $\cdots$, $x^{n-t_k-1}A_k \in \text{Span}(B).$ This shows that $B$ spans $B_1$.

It is easy to see that any elements of the spanning set $B$ can not be written as the linear combination of its preceding elements and other elements in the spanning set $B$. Here we only show that $x^{t_1-t_2-1}A_2$ can not be written as linear combinations of others element of spanning set $B$. The proof is similar for the rest. Suppose, if possible $x^{t_1-t_2-1}A_2$ can be written as linear combinations of the others element of the spanning set $B$. Then we have 
$x^{t_1-t_2-1}A_2=\sum\limits_{i=0}^{n-t_1-1}\alpha_{1i}x^iA_1+\sum\limits_{i=0}^{t_1-t_2-2}\alpha_{2i}x^iA_2+\sum\limits_{i=0}^{t_2-t_3-1}\alpha_{3i}x^iA_3+ \cdots +\sum\limits_{i=0}^{t_{k-1}-t_k-1}\alpha_{ki}x^iA_k+\sum\limits_{i=0}^{t_1-t_{k+1}-1}\alpha_{(k+1)i}x^iA_{k+1}+\sum\limits_{i=0}^{t'_1-t_{k+2}-1}\alpha_{(k+2)i}x^iA_{k+2}+ \cdots +\sum\limits_{i=0}^{t'_{k-1}-t_{2k}-1}\alpha_{(2k)i}x^iA_{2k}$, where, $\alpha_{ij}=\sum\limits_{l=0}^{k-1}\beta_{ij}^{(l)}u^l+v\sum\limits_{m=0}^{k-1}\beta_{ij}^{(m)}u^m$, where, $\beta_{ij}^{(l)}, \beta_{ij}^{(m)} \in \F_p$ (Note that $l, m$ is not a power of $\beta$ it is a notation but $l, m$ is a power of $u$). Thus, $x^{t_1-t_2-1}(ug_2(x)+u^2g_{22}(x)+\cdots+u^{k-1}g_{2(k-1)}(x)+v(g_{2k}(x)+ug_{2(k+1)}(x)+\cdots+u^{k-1}g_{2(2k-1)}(x)))=g_1(x)\sum\limits_{j=0}^{n-t_1-1}\beta_{1j}^{(0)}x^j+ug_1(x)\sum\limits_{j=0}^{n-t_1-1}\beta_{1j}^{(1)}x^j+ug_{11}(x)\linebreak\sum\limits_{j=0}^{n-t_1-1}\beta_{1j}^{(0)}x^j+ug_2(x)\sum\limits_{j=0}^{t_1-t_2-2}\beta_{2j}^{(0)}
x^j+u^2m_1(x)
+u^3m_2(x)+\cdots+u^{k-1}m_{k-2}(x)+v(m_{k-1}(x)+um_k(x)+u^2m_{k+1}(x)+ \cdots +u^{k-1}m_{2k-2}(x))$, where, $m_1(x)$, $m_2(x)$, $\cdots$, $m_{2k-2}(x)$ is a polynomials in $\F_p[x]$. By comparing both sides, we have $\beta_{1j}^{(0)}=0$ for $0 \leq j \leq n-t_1-1$ and $x^{t_1-t_2-1}g_2(x)=g_1(x)\sum\limits_{j=0}^{n-t_1-1}\beta_{1j}^{(1)}x^j+g_2(x)\sum\limits_{j=0}^{t_1-t_2-2}\beta_{2j}^{(0)}x^j$. Note that $\text{deg}(x^{t_1-t_2-1}g_2(x))=t_1-1$ but $\text{deg}(g_1(x)\sum\limits_{j=0}^{n-t_1-1}\linebreak\beta_{1j}^{(1)}x^j) \geq t_1$ and $\text{deg}(g_2(x)\sum\limits_{j=0}^{t_1-t_2-2}\beta_{2j}^{(0)}x^j) \leq t_1-2$. Hence, this gives a contradiction.
\end{proof}

\begin{theo}
Let $n$ be a positive integer relatively prime to $p$ and $C$ be a cyclic code of length $n$ over the ring $R_{u^k,v^2,p}$. If  $C=\langle g_1(x)+ug_2(x)+\cdots+u^{k-1}g_k(x), v(g_{k+1}(x)+ug_{k+2}(x)+\cdots+u^{k-1}g_{2k}(x))\rangle$  with $t_1={\rm deg}(g_1(x))$, $t_{k+1}={\rm deg}(g_{k+1}(x))$, then $C$ has rank $n-t_{k+1}$. The minimal spanning set of $C$ is  $B=\{g_1(x)+ug_2(x)+\cdots+u^{k-1}g_k(x)$, $x(g_1(x)+ug_2(x)+\cdots+u^{k-1}g_k(x)),$ $\cdots$, $x^{n-t_1-1}(g_1(x)+ug_2(x)+\cdots+u^{k-1}g_k(x))$, $v(g_{k+1}(x)+ug_{k+2}(x)+\cdots+u^{k-1}g_{2k}(x))$, $x(v(g_{k+1}(x)+ug_{k+2}(x)+\cdots+u^{k-1}g_{2k}(x)))$, $\cdots$, $x^{t_1-t_{k+1}-1}\linebreak (v(g_{k+1}(x)+ug_{k+2}(x)+\cdots+u^{k-1}g_{2k}(x)))\}$.
\end{theo}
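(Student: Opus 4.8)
The plan is to treat this as the coprime specialization of Theorem \ref{rank-main}. By Theorem \ref{relatively-prime} the code collapses to the two ideal generators $G=g_1(x)+ug_2(x)+\cdots+u^{k-1}g_k(x)$ and $vH$, where $H=g_{k+1}(x)+ug_{k+2}(x)+\cdots+u^{k-1}g_{2k}(x)$, so that $C=\langle G,\,vH\rangle$. Following Section \ref{rank}, the rank is the least number of $R_{u^k,v^2,p}$-module generators of $C$, so it suffices to establish two things: that the displayed set $B$ generates $C$, and that no member of $B$ is a combination of the remaining ones. The count $|B|=(n-t_1)+(t_1-t_{k+1})=n-t_{k+1}$ then records the rank.

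I would handle spanning first, starting from an overcomplete generating set assembled from the two pieces of $C$. Since $C/\text{Ker}\phi\simeq\text{Im}\phi=\langle G\rangle$ is a cyclic code over $R_{u^k,p}$, Theorem 4.2 of \cite{Aks-Pkk13} furnishes the generators $x^iG$ attached to the free part $\{x^iG:0\le i\le n-t_1-1\}$; and since $\text{Ker}\phi=v\langle H\rangle$ is, via multiplication by $v$, a copy of the cyclic code $\langle H\rangle$ over $R_{u^k,p}$, the same theorem gives $\{x^i\,vH:0\le i\le n-t_{k+1}-1\}$ as a generating set for $\text{Ker}\phi$. Together these generate $C$, so the whole task is to discard the high shifts $x^j\,vH$ with $t_1-t_{k+1}\le j\le n-t_{k+1}-1$.

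The reduction rests on the divisibility $g_{k+1}\mid g_1$ from Condition 3 of Theorem \ref{properties}: writing $g_1=s(x)g_{k+1}$ with $\text{deg}(s)=t_1-t_{k+1}$ and specializing Lemma \ref{lm-t_i-t_k+i} to the case $i=1$ (the intermediate generators being absorbed because here $\langle H\rangle$ is principal) yields an identity of the shape
\[
x^{t_1-t_{k+1}}\,vH=c\,vG+q_0(x)\,vH,\qquad c\in\F_p,\ \text{deg}(q_0)<t_1-t_{k+1}.
\]
Because $vG=v\cdot G\in\text{Span}(B)$ (the scalar $v$ lies in $R_{u^k,v^2,p}$) and $q_0(x)\,vH$ is an $\F_p[x]$-combination of shifts already in $B$, this places $x^{t_1-t_{k+1}}\,vH$ in $\text{Span}(B)$. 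Multiplying the identity by $x$ and substituting back, exactly as in Case \ref{case-2} of Theorem \ref{rank-main}, an induction on the exponent shows $x^j\,vH\in\text{Span}(B)$ for every $j$ up to $n-t_{k+1}-1$; here Lemma \ref{lm-vA_i} is what keeps the auxiliary shifts $x^i\,vG$ inside $\text{Span}(B)$. Hence $B$ spans $C$.

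For minimality I would reuse the leading-term argument at the end of the proof of Theorem \ref{rank-main}. The first block $\{x^iG:0\le i\le n-t_1-1\}$ maps under $\phi$ to the free part coming from $\text{Im}\phi$, while the block $\{x^i\,vH:0\le i\le t_1-t_{k+1}-1\}$ lies in $\text{Ker}\phi$ and is annihilated by $v$; assuming a nontrivial dependence among the elements of $B$ and comparing the $u^0$- and $v$-homogeneous components, together with the degree bounds $\text{deg}(g_{1j})<\text{deg}(g_{j+1})$ from Theorem \ref{unique}, forces a polynomial of one degree to equal a polynomial of a strictly different degree, a contradiction. Counting the two blocks then gives rank $n-t_{k+1}$. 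The hard part will be the reduction identity together with its induction: controlling $\text{deg}(q_0)$ through repeated multiplication by $x$ is precisely what prevents new high shifts of $vH$ from reappearing, whereas everything else is bookkeeping on the two summands $\text{Im}\phi$ and $\text{Ker}\phi$.
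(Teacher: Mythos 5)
Your spanning argument breaks down at its very first step, and the failure cannot be repaired inside the set $B$. You assert that $\{x^iG:0\le i\le n-t_1-1\}$ accounts for $\text{Im}\,\phi$ and that $\{x^i vH:0\le i\le n-t_{k+1}-1\}$ generates $\text{Ker}\,\phi$, citing Theorem 4.2 of \cite{Aks-Pkk13}. But $\text{Im}\,\phi=\langle G\rangle$ is principal only as an \emph{ideal}; as an $R_{u^k,p}$-module, its minimal spanning set (this is precisely what Theorem 4.2 of \cite{Aks-Pkk13}, and the block $B_1$ in the proof of Theorem \ref{rank-main}, supply) also contains the shifts of the lower canonical generators $A_2,\dots,A_k$, i.e.\ $n-t_k$ elements in all, and likewise $\text{Ker}\,\phi$ needs $n-t_{2k}$ generators. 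Concretely, $A_2=ug_2(x)+\cdots$ lies in $C$, but if $t_2<t_1$ it does not lie in $\text{Span}(B)$: write an element of $\text{Span}(B)$ as $\sum_{i<n-t_1}r_ix^iG+\sum_{j<t_1-t_{k+1}}s_jx^jvH$ and force its $u^0v^0$-component to vanish; since $\deg\bigl(\sum r_i^{(0)}x^i\bigr)<n-t_1=\deg\frac{x^n-1}{g_1(x)}$, all the $u^0v^0$-parts $r_i^{(0)}$ must vanish, so the $u^1v^0$-component is $f(x)g_1(x)$ with $\deg f<n-t_1$ and can never equal $g_2(x)$, whose degree is $t_2<t_1$. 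A cardinality count says the same thing: $|C|=p^{\sum_{i=1}^{2k}(n-t_i)}$, while a module on $n-t_{k+1}$ generators has at most $p^{2k(n-t_{k+1})}$ elements, and the divisibility conditions of Theorem \ref{relatively-prime} do not force $\sum_{i=1}^{2k}t_i\ge 2k\,t_{k+1}$; the code $\langle g_1g_2+ug_1+u^2,\ v(g_2+u+u^2)\rangle$ of Table 3 already has $3^{20}$ codewords, which exceeds $|R_{u^3,v^2,3}|^{3}=3^{18}$.

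The difficulty is therefore not in your reduction identity $x^{t_1-t_{k+1}}vH=c\,vG+q_0(x)vH$ nor in the induction that follows --- those are sound and do match Lemma \ref{lm-t_i-t_k+i} and Case \ref{case-2} of Theorem \ref{rank-main} --- but in the claim that $B$ generates $C$ at all. Carrying out the proof of Theorem \ref{rank-main} faithfully in the coprime case forces the blocks $\{x^iA_j\}$ for $2\le j\le k$ and the corresponding blocks for $A_{k+2},\dots,A_{2k}$ back into the spanning set whenever the degree chains $t_1\ge\cdots\ge t_k$ and $t_{k+1}\ge\cdots\ge t_{2k}$ are not constant. Your proposal silently drops them, and no argument along these lines can recover them, because they genuinely lie outside $\text{Span}(B)$; the statement being proved undercounts the rank, so the spanning half of the proof cannot be completed as written.
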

\begin{proof}
The proof is as similar as Theorem \ref{rank-main}.
\end{proof}

\section{Minimum distance} \label{md}
Let $n$ be a positive integer not relatively prime to $p$. Let $ C $ be a cyclic code of length $n$ over $R_{u^k,v^2,p}$. We have $C_{2k}=\{f(x)\in \F_p[x]~|~u^{k-1}vf(x)\in C\}=\langle g_{2k}(x)\rangle$ (See Page \pageref{C_i's}). Also, we know that $C_{2k}$ is a cyclic code over $\F_p$.

\begin{theorem} \label{md1}
Let $n$ be a positive integer not relatively prime to $p$. If $ C = \langle A_1, A_2, \cdots, A_{2k}\rangle$ is a cyclic code of length $n$ over the ring $R_{u^k,v^2,p}$ Then $w_{H}(C) = w_{H}(C_{2k})$.
\end{theorem}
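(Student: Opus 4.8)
The plan is to prove the two inequalities $w_{H}(C)\le w_{H}(C_{2k})$ and $w_{H}(C)\ge w_{H}(C_{2k})$ separately. Throughout write $R=R_{u^k,v^2,p}$ and let $\mathfrak m=\langle u,v\rangle$ be its maximal ideal. I would first record the structural facts I need: the socle $\mathrm{Soc}(R)=\{a\in R\mid \mathfrak m\,a=0\}$ equals the one-dimensional space $\langle u^{k-1}v\rangle$ (the unique minimal ideal in the list of ideals of $R$ given in Section~\ref{pre}), and, because $v^2=0$ and $u^k=0$, the powers of the maximal ideal are $\mathfrak m^{j}=\langle u^{j},u^{j-1}v\rangle$ for $1\le j\le k$, with $\mathfrak m^{k}=\langle u^{k-1}v\rangle=\mathrm{Soc}(R)$ and $\mathfrak m^{k+1}=0$.

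For the inequality $w_{H}(C)\le w_{H}(C_{2k})$ I would use $u^{k-1}v$ as a multiplier. For any $f(x)\in C_{2k}$ we have $u^{k-1}v\,f(x)\in C$ by the very definition of $C_{2k}$; and since $f(x)\in\mathbb F_p[x]/\langle x^n-1\rangle$, the $i$-th coordinate $f_i\,u^{k-1}v$ vanishes exactly when $f_i=0$, so $w_{H}(u^{k-1}v\,f(x))=w_{H}(f(x))$. Applying this to a minimum-weight word of $C_{2k}$ yields a nonzero word of $C$ of weight $w_{H}(C_{2k})$, giving $w_{H}(C)\le w_{H}(C_{2k})$.

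The real content is the reverse inequality, and the hard part is to collapse an arbitrary nonzero codeword into the socle \emph{without enlarging its support}; note that only multiplication by a scalar $r\in R$ is permitted, since convolution by a genuine polynomial could increase the support. Given a nonzero $c(x)=\sum_i c_i x^i\in C$, I would form the ideal $M=\langle c_0,\dots,c_{n-1}\rangle\trianglelefteq R$, which is nonzero, and let $j\ge 0$ be the largest integer with $\mathfrak m^{j}M\ne 0$ (it exists because $\mathfrak m^{k+1}=0$, with the convention $\mathfrak m^{0}=R$). By maximality $\mathfrak m^{j+1}M=0$, hence $\mathfrak m\,(\mathfrak m^{j}M)=0$ and so $\mathfrak m^{j}M\subseteq\mathrm{Soc}(R)$. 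Since $\mathfrak m^{j}$ is generated by $u^{j}$ and $u^{j-1}v$ (take $r=1$ when $j=0$), at least one of these generators $r$ satisfies $rM\ne 0$. This $r$ is the multiplier I want: $rc_i\in rM\subseteq\mathrm{Soc}(R)=\langle u^{k-1}v\rangle$ for every $i$, while $rc_i\ne 0$ for at least one $i$.

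Finally I would cash this in. Writing $rc_i=\beta_i\,u^{k-1}v$ with $\beta_i\in\mathbb F_p$ gives $rc(x)=u^{k-1}v\,h(x)$, where $h(x)=\sum_i\beta_i x^i\ne 0$; moreover $rc(x)\in C$ because $C$ is an ideal, so $h(x)\in C_{2k}$ by definition. Since $\{i:\beta_i\ne 0\}\subseteq\{i:c_i\ne 0\}$, we obtain $w_{H}(C_{2k})\le w_{H}(h)=w_{H}(rc(x))\le w_{H}(c(x))$. As $c(x)$ was an arbitrary nonzero word of $C$, taking it of minimal weight gives $w_{H}(C_{2k})\le w_{H}(C)$, and combined with the first inequality this yields $w_{H}(C)=w_{H}(C_{2k})$. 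The step most likely to require care is the descending-chain argument producing $j$, that is, verifying $\mathfrak m^{j}=\langle u^{j},u^{j-1}v\rangle$ and $\mathfrak m^{k}=\mathrm{Soc}(R)$, which is exactly where the relations $v^2=0$ and $u^k=0$ enter.
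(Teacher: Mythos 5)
Your proof is correct, and it is in fact more careful than the one the paper gives. Both arguments share the easy direction: multiplying a minimum-weight word of $C_{2k}$ by $u^{k-1}v$ produces a word of $C$ of the same weight, so $w_{H}(C)\le w_{H}(C_{2k})$. For the reverse inequality the paper simply multiplies an arbitrary codeword $M(x)$ by $u^{k-1}v$, notes $u^{k-1}vM(x)=u^{k-1}vm_0(x)$ and $w_{H}(u^{k-1}vM(x))\le w_{H}(M(x))$, and concludes that $w_{H}(u^{k-1}vC)=w_{H}(C)$; but that step collapses when a minimum-weight codeword has $m_0(x)=0$ (for instance a word of the form $vm(x)$ or $um(x)$), since then $u^{k-1}vM(x)=0$ and no nonzero word of $u^{k-1}vC$ of weight $\le w_{H}(M)$ is produced. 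Your socle argument repairs exactly this point: by taking the largest $j$ with $\mathfrak{m}^{j}\langle c_0,\dots,c_{n-1}\rangle\ne 0$ and then a monomial generator $r\in\{u^{j},u^{j-1}v\}$ (or $r=1$) with $rc\ne 0$, you guarantee a \emph{nonzero} image $rc(x)=u^{k-1}v\,h(x)$ whose support is contained in that of $c(x)$, whence $h\in C_{2k}$ and $w_{H}(C_{2k})\le w_{H}(c)$. Your auxiliary computations $\mathfrak{m}^{j}=\langle u^{j},u^{j-1}v\rangle$ and $\mathrm{Soc}(R)=\langle u^{k-1}v\rangle$ are both correct, and the observation that only scalar multipliers (not polynomial ones) preserve supports is the right reason the socle detour is needed. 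So your route is the intended one in spirit (reduce to the socle subcode $u^{k-1}vC$, identified with $C_{2k}$), but the multiplier-selection step is a genuine addition that closes a gap in the paper's two-line proof.
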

\begin{proof}
Let $M(x,u,v)=m_0(x)+um_1(x)+ \cdots +u^{k-1}m_{k-1}+v(m_k(x)+um_{k+1}(x)+ \cdots +u^{k-1}m_{2k-1}(x)) \in C,$ where $m_0(x), m_1(x), \cdots, m_{2k-1}(x) \in \F_p[x]$. We have $u^{k-1}vM(x)=u^{k-1}vm_0(x)$, $w_{H}(u^{k-1}vM(x)) \leq w_{H}(M(x))$ and $u^{k-1}vC$ is subcode of $C$ with $w_{H}(u^{k-1}vC) \leq w_{H}(C)$. Thus  $w_{H}(u^{k-1}vC)=w_{H}(C)$. Therefore, it is sufficient to focus on the subcode $u^{k-1}vC$ in order to prove the theorem. Since $w_{H}(C_{2k})=w_{H}(u^{k-1}vC)$, we get $w_{H}(C)=w_{H}(C_{2k})$.
\end{proof}

\begin{definition}
Let $ m = b_{l-1}p^{l-1} + b_{l-2}p^{l-2} + \cdots + b_1p + b_0$, $b_i \in \F_p, 0 
\leq i \leq l-1$, be the $p$-adic expansion of $m$.
\begin{enumerate} [{\rm (1)}]
 \item If $ b_{l-i}  \neq 0$ for all $1  \leq i \leq q, q < l, $ and $ b_{l-i} = 0 $ for all $i, q+1 \leq i \leq l$, then $m$ is said to have a $p$-adic length $q$ zero expansion.
\item If $ b_{l-i}  \neq 0$ for all $1  \leq i \leq q, q < l, $ $b_{l-q-1} = 0$ and $ b_{l-i} \neq 0 $ for some $i, q+2 \leq i \leq l$, then $m$ is said to have  $p$-adic length $q$ non-zero expansion.
\item If $ b_{l-i}  \neq 0$ for $1  \leq i \leq l, $ then $m$ is said to have a $p$-adic length $l$  expansion or $p$-adic full expansion.
\end{enumerate}
\end{definition}

\begin{lemma} \label{lm-md}
Let $C$ be a cyclic code over $R_{u^k,v^2,p}$ of length $p^l$ where $l$ is a positive integer. Let $C = \langle g(x)\rangle$ where $g(x) = (x^{p^{l-1}}-1)^bh(x)$, $ 1 \leq b < p$. If $h(x)$ generates a cyclic code of length $p^{l-1}$ and minimum distance $d$ then the minimum distance $d(C)$ of $C$ is $(b+1)d$.
\end{lemma}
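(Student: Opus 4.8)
The plan is to reduce the computation to a cyclic code over the prime field $\F_p$ and then to exploit the block structure coming from the factorization $x^{p^l}-1=(x^{p^{l-1}}-1)^p$, valid in characteristic $p$. First I would invoke Theorem~\ref{md1}: since $p\mid n=p^l$, we have $w_H(C)=w_H(C_{2k})$, where $C_{2k}=\langle g_{2k}(x)\rangle$ is a cyclic code over $\F_p$; tracing the definition of $g_{2k}$ through $g(x)=(x^{p^{l-1}}-1)^bh(x)$ shows that $C_{2k}$ is again a repeated-root code of the same shape, generated over $\F_p$ by $(x^{p^{l-1}}-1)^b h_0(x)$, with $\langle h_0\rangle$ of length $p^{l-1}$ still of minimum distance $d$. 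Thus it suffices to prove the statement over $\F_p$. Write $N=p^{l-1}$ and $\theta=x^{N}-1$, so that $x^{p^l}-1=\theta^{p}$ and $g=\theta^{b}h$ divides $x^{p^l}-1$ (as $b+1\le p$); hence every codeword is a genuine polynomial multiple $g\cdot a$ of degree $<p^l$, and I may decompose any length-$p^l$ vector into $p$ consecutive blocks $c(x)=\sum_{i=0}^{p-1}x^{Ni}c_i(x)$ with $\deg c_i<N$, so that $w_H(c)=\sum_{i=0}^{p-1}w_H(c_i)$.

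For the upper bound, let $\bar c(x)=h(x)\bar a(x)$ be a minimum-weight codeword of $\langle h\rangle$ of length $p^{l-1}$, so $w_H(\bar c)=d$ and $\deg\bar c<N$. Then
\[
\theta^{b}\bar c=\sum_{j=0}^{b}\binom{b}{j}(-1)^{b-j}x^{Nj}\,\bar c(x)\in\langle g\rangle=C .
\]
Because $\deg\bar c<N$ and $b<p$, the $b+1$ summands $x^{Nj}\bar c$ occupy pairwise disjoint blocks, and by Lucas' theorem each coefficient $\binom{b}{j}$ is a nonzero element of $\F_p$, hence a unit; multiplication by a unit preserves Hamming weight, so $w_H(\theta^b\bar c)=(b+1)d$. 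This exhibits a codeword of weight $(b+1)d$, whence $d(C)\le(b+1)d$.

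The lower bound is the heart of the matter. Writing a nonzero codeword as $c=\theta^{b}w$ with $w=ha$, the condition $\theta^{b}\mid c$ is equivalent, via the telescoping identity $\theta\big(\sum_i x^{Ni}s_i\big)=\sum_i x^{Ni}(s_{i-1}-s_i)$ (with no wrap-around, since all degrees stay below $p^l$), to $(y-1)^{b}\mid\hat c(y)$ for the block generating polynomial $\hat c(y)=\sum_{i=0}^{p-1}c_i(x)\,y^{i}$. Taking Hasse derivatives at $y=1$ turns this into the linear system $\sum_i\binom{i}{j}c_i=0$ for $0\le j<b$; since the exponents $i$ range over $\{0,\dots,p-1\}$ and, by Lucas' theorem, the matrix $\big(\binom{i}{j}\big)$ restricted to any $b$ indices $i<p$ is invertible over $\F_p$, a nonzero $c$ must have at least $b+1$ nonzero blocks. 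The difficulty is then to convert ``at least $b+1$ nonzero blocks'' into the sharp bound $(b+1)d$: the individual blocks $c_i$ need not themselves be codewords of $\langle h\rangle$, because multiplication by $h$ introduces carries between blocks, so I cannot simply assert $w_H(c_i)\ge d$ block by block. I would overcome this by induction on $b$: peeling off one factor $\theta$ replaces the block sequence $(c_i)$ by its finite difference and drops $b$ to $b-1$, while the leading contraction $\sum_i\binom{i}{b}c_i$ equals $\hat w(1)=\sum_t w_t$, which \emph{does} lie in $\langle h\rangle$ and hence has weight $\ge d$; combining the inductive hypothesis applied to $\theta^{b-1}w$ with the extra nonzero block produced at the top level, and invoking the repeated-root distance estimates of van Lint~\cite{Lint91}, should yield $w_H(c)\ge(b+1)d$. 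This block-wise weight accounting---reconciling the $b+1$ nonzero blocks with the factor $d$ coming from $\langle h\rangle$ in the presence of the carries---is the step I expect to be the main obstacle. Together with the upper bound and Theorem~\ref{md1}, it gives $d(C)=(b+1)d$.
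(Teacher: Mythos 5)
Your upper bound is correct and complete: for a minimum-weight codeword $\bar c$ of $\langle h(x)\rangle$ one has $\deg \bar c< p^{l-1}$, the $b+1$ shifted copies of $\bar c$ inside $(x^{p^{l-1}}-1)^b\bar c$ occupy pairwise disjoint blocks, and the coefficients $\binom{b}{j}$ are units because $b<p$; hence $d(C)\le (b+1)d$. The genuine gap is exactly the step you flag yourself: the lower bound $w_H(c)\ge (b+1)d$ for every nonzero codeword is never proved. The induction on $b$ is only sketched (``peeling off one factor $\theta$ \dots should yield''), the appeal to van Lint is not instantiated, and the quantitative step you single out --- upgrading ``at least $b+1$ nonzero blocks'' to the bound $(b+1)d$ in the presence of carries coming from multiplication by $h(x)$ --- is missing.

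This gap cannot be closed, because the asserted lower bound is false. Take $p=3$, $l=2$, $b=1$, $h(x)=(x-1)^2$, so that $\langle h(x)\rangle$ is the ternary repetition code of length $3$ and $d=3$, and $g(x)=(x^3-1)(x-1)^2=(x-1)^5$. The cyclic code $C=\langle (x-1)^5\rangle$ of length $9$ contains $(x-1)g(x)=(x-1)^6=(x^3-1)^2=x^6+x^3+1$, a codeword of weight $3<6=(b+1)d$; in fact $d(C)=3$. Here $h(x)m(x)=(x-1)^3=x^3-1$ has degree $\ge p^{l-1}$, so its shifted copies overlap and merge into three blocks of weight $1$: precisely the carry phenomenon you identified, and your block count ($\ge b+1$ nonzero blocks; here $b+2=3$) is what is sharp, not $(b+1)d$. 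Note that this is also exactly where the paper's own proof breaks: it asserts the identity $w_H(c)=\sum_{j=0}^{b}w_H\bigl(\binom{b}{j}x^{p^{l-1}j}h(x)m(x)\bigr)$ for \emph{every} $m(x)$, which fails whenever the copies of $h(x)m(x)$ overlap (the same $m(x)=x-1$ gives $4$ on the right but $3$ on the left), and it tacitly assumes $w_H(h(x)m(x))\ge d$, which also fails here since $w_H(x^3-1)=2$. So your diagnosis of the main obstacle is accurate, but the obstacle is fatal to the statement itself, and the defect propagates to Theorem \ref{md-thm}, whose formula predicts distance $6$ for $t_{2k}=5$, $p=3$, $l=2$, whereas the true distance is $3$. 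The correct minimum distances of these repeated-root codes (Massey--Costello--Justesen, van Lint, Dinh) do not obey the product rule claimed in this lemma.
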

\begin{proof}
For $c \in C$, we have $c=(x^{p^{l-1}}-1)^bh(x)m(x)$ for some $ m(x) \in \frac{R_{u^k,v^2,p}[x]}{\langle x^{p^l}-1\rangle}$. Since $h(x)$ generates a cyclic code of length $p^{l-1}$, we have $w(c) = w((x^{p^{l-1}} - 1)^bh(x)m(x)) = w(x^{p^{l-1}b}h(x)m(x))+w(^bC_1x^{p^{l-1}(b-1)}h(x)m(x)) + \cdots + w(^bC_{b-1}\\x^{p^{l-1}}h(x)m(x)) + w(h(x)m(x))$. Thus, $ d(C) = (b + 1)d$.
\end{proof}

\begin{theorem} \label{md-thm}
Let $C$ be a cyclic code over the ring $R_{u^k,v^2,p}$ of length $p^l$ where $l$ is a positive integer. Then,  $C = \langle A_1, A_2, \cdots, A_{2k}\rangle$ where $g_1(x) = (x-1)^{t_1}, g_2(x) = (x-1)^{t_2}, \cdots,  g_{2k}(x) = (x-1)^{t_{2k}}$ $(A_i$'s and $g_i(x)$'s are defined in page \pageref{A_i's} $($see page \pageref{A_i's}$))$ for some $t_1 > t_2 > \cdots > t_k> 0$, $t_{k+1} > t_{k+2} > \cdots > t_{2k} > 0$ and $t_i > t_{k+i}$ for $1 \leq i \leq k$ 
\begin{enumerate}[{\rm (1)}]
\item If $t_{2k} \leq p^{l-1},$ then $d(C) = 2$. 
\item If $t_{2k} > p^{l-1}$, let $t_{2k} = b_{l-1}p^{l-1} + b_{l-2}p^{l-2} + \cdots + b_1p + b_0$ be the $p$-adic expansion of $t_{2k}$ and $ g_{2k}(x) = (x-1)^{t_{2k}} = (x^{p^{l-1}} - 1)^{b_{l-1}}(x^{p^{l-2}} - 1)^{b_{l-2}} \cdots (x^{p^{1}} - 1)^{b_1}(x^{p^0} - 1)^{b_0}$.
\begin{enumerate}[{\rm ($a$)}]
 \item If $t_{2k}$ has a $p$-adic length $q$ zero expansion or full expansion $(l=q)$, then $d(C) = (b_{l-1}+1)(b_{l-2}+1)\cdots(b_{l-q}+1)$.
\item If $t_{2k}$ has a $p$-adic length $q$ non-zero expansion, then $d(C) = 2(b_{l-1}+1)(b_{l-2}+1)\cdots(b_{l-q}+1)$.
\end{enumerate}
\end{enumerate}
\end{theorem}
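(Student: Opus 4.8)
The plan is to collapse the entire computation over the complicated ring $R_{u^k,v^2,p}$ onto a single repeated-root cyclic code over the field $\F_p$. By Theorem \ref{md1} we have $w_H(C)=w_H(C_{2k})$, and by construction $C_{2k}=\langle g_{2k}(x)\rangle=\langle (x-1)^{t_{2k}}\rangle$ is a cyclic code of length $p^l$ over $\F_p$. Hence it suffices to determine the minimum distance of the $\F_p$-code $\langle (x-1)^{t_{2k}}\rangle$, and I would establish the stated formula by induction on $l$, using Lemma \ref{lm-md} as the inductive engine (its proof, resting only on polynomial multiplication and additivity of weights, is valid verbatim over $\F_p$).

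For part (1) I would argue directly. Since $1\le t_{2k}\le p^{l-1}$ and $x^{p^{l-1}}-1=(x-1)^{p^{l-1}}$ over $\F_p$, the word $x^{p^{l-1}}-1$ is a multiple of $(x-1)^{t_{2k}}$, hence a codeword of Hamming weight $2$, giving $d(C)\le 2$. For the reverse bound I would observe that any weight-one word equals $\alpha x^j$, a unit in $\F_p[x]/\langle x^{p^l}-1\rangle$; its membership in $\langle (x-1)^{t_{2k}}\rangle$ would force $(x-1)^{t_{2k}}$ to be a unit and so $t_{2k}=0$, contradicting $t_{2k}\ge 1$. Therefore $d(C)=2$.

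For part (2), because $t_{2k}>p^{l-1}$ forces the leading digit to satisfy $1\le b_{l-1}\le p-1$, I would factor
\[(x-1)^{t_{2k}}=(x^{p^{l-1}}-1)^{b_{l-1}}(x-1)^{t'},\qquad t'=t_{2k}-b_{l-1}p^{l-1}<p^{l-1},\]
and apply Lemma \ref{lm-md} with $b=b_{l-1}$ and $h(x)=(x-1)^{t'}$ generating a length-$p^{l-1}$ code, obtaining $d(C)=(b_{l-1}+1)\,d'$ where $d'$ is the minimum distance of $\langle (x-1)^{t'}\rangle$ of length $p^{l-1}$. Since $t'=b_{l-2}p^{l-2}+\cdots+b_0$ carries precisely the lower $p$-adic digits, I would feed it back into the induction hypothesis, so that peeling each successive leading digit multiplies the distance by the corresponding $(b_{l-i}+1)$. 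The recursion halts at the first zero digit $b_{l-q-1}=0$: in the zero/full expansion case the remaining tail is identically zero, leaving the whole space (distance $1$) and yielding $d(C)=(b_{l-1}+1)\cdots(b_{l-q}+1)$; in the non-zero expansion case the tail is a nonzero exponent bounded by $p^{(l-q)-1}$ in length $p^{l-q}$, so part (1) applies at level $l-q$ and contributes the extra factor $2$, giving $d(C)=2(b_{l-1}+1)\cdots(b_{l-q}+1)$.

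Once Theorem \ref{md1} and Lemma \ref{lm-md} are in hand, most steps are bookkeeping; the point demanding the most care is the termination analysis of the recursion. I would need to check that every application of Lemma \ref{lm-md} while peeling the nonzero prefix genuinely satisfies the hypothesis $1\le b<p$ (which holds exactly because $b_{l-1},\dots,b_{l-q}$ are nonzero), and to verify that the two stopping behaviours—reaching a trivial tail versus reaching a short nonzero tail immediately after a zero digit—match the zero/full and non-zero expansion cases of the $p$-adic length definition, producing the factor $1$ versus the factor $2$ respectively. Settling the base case $l=1$, where the length-one code generated by $h=1$ is all of $\F_p$ with distance $1$, then closes the induction.
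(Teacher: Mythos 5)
Your proposal is correct and follows essentially the same route as the paper: reduce to $C_{2k}=\langle(x-1)^{t_{2k}}\rangle$ via Theorem \ref{md1}, handle $t_{2k}\le p^{l-1}$ by exhibiting the weight-two word $x^{p^{l-1}}-1$, and otherwise iterate Lemma \ref{lm-md} over the factorization $(x-1)^{t_{2k}}=\prod_i(x^{p^i}-1)^{b_i}$, with the zero/full versus non-zero expansion cases deciding whether the innermost code contributes a factor of $1$ or $2$. The only difference is that you run the induction top-down (peeling the leading digit) while the paper builds up from the lowest nonzero digit; this is immaterial.
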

\begin{proof}
The first claim easily follows from Theorem \ref{properties}. From Theorem \ref{md1}, we see that $d(C)=d(C_{2k})=d(\langle(x-1)^{t_{2k}}\rangle)$. Hence, we only need to determine the minimum weight of $C_{2k}= \langle(x-1)^{t_{2k}}\rangle$.\\
(1) If $t_{2k} \leq p^{l-1},$ then $(x-1)^{t_{2k}}(x-1)^{p^{l-1}-t_{2k}} = (x-1)^{p^{l-1}}=(x^{p^{l-1}}-1) \in C$. Thus, $d(C)=2$.\\
(2) Let $ t_{2k}>p^{l-1}$. (a) If $t_{2k}$ has a $p$-adic length $q$ zero expansion, we have $t_{2k}=b_{l-1}p^{l-1}+b_{l-2}p^{l-2} + \cdots + b_{l-q}p^{l-q}$, and $g_{2k}(x)=(x - 1)^{t_{2k}}=(x^{p^{l-1}}-1)^{b_{l-1}}(x^{p^{l-2}}-1)^{b_{l-2}}\cdots(x^{p^{l-q}}-1)^{b_{l-q}}$. Let $h(x)=(x^{p^{l-q}}-1)^{b_{l-q}}$. Then $h(x)$ generates a cyclic code of length $p^{l-q+1}$ and minimum distance $(b_{l-q}+1)$. By Lemma \ref{lm-md}, the subcode generated by $(x^{p^{l-q+1}}-1)^{b_{l-q+1}}h(x)$ has minimum distance $(b_{l-q+1}+1)(b_{l-q}+1)$. By induction on $q$, we can see that the code generated by $g_{2k}(x)$ has minimum distance $(b_{l-1}+1(b_{l-2}+1)\cdots(b_{l-q}+1)$. Thus, $d(C)=(b_{l-1}+1)(b_{l-2}+1)\cdots(b_{l-q}+1)$.\\
(b) If $t_{2k}$ has a $p$-adic length $q$ non-zero expansion, we have $t_{2k}=b_{l-1} p^{l-1} + b_{l-2}p^{l-2} + \cdots + b_{1}p + b_0, b_{l-q-1}=0$. Let $r=b_{l-q-2}p^{l-q-2}+b_{l-q-3}p^{l-q-3}+ \cdots + b_1p + b_0$ and $h(x)=(x-1)^r=(x^{p^{l-q-2}}-1)^{b_{l-q-2}}(x^{p^{l-q-3}}-1)^{b_{l-q-3}}\cdots(x^{p^{1}}-1)^{b_{1}}(x^{p^{0}}-1)^{b_{0}}$. Since $r < p^{l-q-1}$, we have $p^{l-q-1}=r+j$ for some non-zero $j$. Thus, $(x-1)^{p^{l-q-1}-j}h(x)=(x^{p^{l-q-1}}-1) \in C$. Hence, the subcode generated by $h(x)$ has minimum distance 2. By Lemma \ref{lm-md}, the subcode generated by $(x^{p^{l-q}}-1)^{b_{l-q}}h(x)$ has minimum distance $2(b_{l-q}+1)$. By induction on $q$, we can see that the code generated by $g_{2k}(x)$ has minimum distance $2(b_{l-1}+1)(b_{l-2}+1)\cdots(b_{l-q}+1)$. Thus, $d(C) = 2(b_{l-1}+1)(b_{l-2}+1)\cdots(b_{l-q}+1)$.\\
\end{proof}

\section{Examples} \label{exm}
\begin{example}
Cyclic codes of length $4$ over the ring $R_{u^3,v^2,2}=\F_2+u\F_2+u^2\F_2+v(\F_2+u\F_2+u^2\F_2)$, $u^3=0$, $v^2=0$, $uv=vu$: We have
\begin{equation}
x^4-1=(x-1)^4 ~ \text{over} ~ \F_2\notag
\end{equation}
Let $g=x-1$ and $c_0,c_1,\cdots, c_{11}\in \F_2$. The some of the non zero cyclic codes of length 4 over the ring $R_{u^3,v^2,2}$ with generator polynomials, rank and minimum distance are given in Tables 1 and 2.
\end{example}

\newpage
\begin{center}
{\bf Table 1.} Some non zero cyclic codes of length 4 over $R_{u^3,v^2,2}$.\\
\begin{tabular}{| l | c| c |}
\hline
Non-zero generator polynomials & Rank & d(C)\\
\hline
$\langle vu^2g^3\rangle$ & 1 & 4\\
\hline
$\langle v(ug^3+u^2c_0g), vu^2g^2\rangle$ & 2 & 2\\
\hline
$\langle v(g^3+uc_0g+u^2c_1), v(ug^2+u^2c_2), vu^2g\rangle$ & 3 & 2\\
\hline
$\langle u^2g^3+v(c_0g^2+uc_1g+u^2c_2), v(g^3+uc_3g+u^2c_4),$ & 4 & 2\\
$v(ug^2+u^2c_5), vu^2g\rangle$&  & \\
\hline
$\langle ug^3+u^2c_0g+v(c_1g^2+uc_2g+u^2c_3),$&  & \\
$ u^2g^2+v(c_4g^2+uc_5g+u^2c_6), v(g^3+uc_7g+u^2c_8),$ & 5 & 2\\
$v(ug^2+u^2c_9), vu^2g\rangle$&  & \\
\hline
$\langle g^3+uc_0g+u^2c_1+v(c_2g^2+uc_3g+u^2c_4),$&  & \\
$ug^2+u^2c_5+v(c_6g^2+uc_7g+u^2c_8),$ & 3 & 2\\
$u^2g+v(c_9g^2+uc_{10}g+u^2c_{11}),\rangle$&  & \\
\hline
\end{tabular}
\end{center}

\begin{center}
{\bf Table 2.} Non zero free cyclic codes of length 4 over $R_{u^3,v^2,2}$.\\
\begin{tabular}{| l | c| c |}
\hline
Non-zero generator polynomials & Rank & d(C)\\
\hline
$\langle g^3+uc_0g^2+u^2c_1g^2+v(c_2g^2+uc_3g^2+u^2c_4g^2)\rangle$ & 1 & 4\\
\hline
$\langle g^2+u(c_0+c_1x)+u^2(c_2+c_3x)g^2$ & 2 & 2\\
$+v((c_4+c_5x)g^2+u(c_6+c_7x)g^2+u^2(c_8+c_9x)g^2)\rangle$ &  & \\
\hline
$\langle g+uc_0+u^2c_1+v(c_2+uc_3+u^2c_4)\rangle$ & 3 & 2\\
\hline
$\langle 1\rangle$ & 4 & 1\\
\hline
\end{tabular}
\end{center}

\begin{example}
Cyclic codes of length $4$ over the ring $R_{u^3,v^2,3}=\F_3+u\F_3+u^2\F_3+v(\F_3+u\F_3+u^2\F_3)$, $u^3=0$, $v^2=0$, $uv=vu$: We have
\begin{equation}
x^4-1=(x+1)(x+2)(x^2+1) ~ \text{over} ~ \F_3\notag
\end{equation}
Let $g_1=x+1$, $g_2=x+2$ and $g_3=x^2+1$. The non zero cyclic codes of length 4 over the ring $R_{u^3,v^2,3}$ with generator polynomials and rank  given in Table 3.
\end{example}

\begin{center}
{\bf Table 3.} Non zero cyclic codes of length 4 over $R_{u^3,v^2,3}$.\\
\begin{tabular}{| l | c|}
\hline
Non-zero generator polynomials & Rank\\
\hline
$\langle g_1g_2+ug_1g_2+u^2g_1, v(g_1g_2+ug_2+u^2)\rangle$ & 2 \\
\hline
$\langle g_1g_2+ug_1+u^2, v(g_2+u+u^2)\rangle$ & 3 \\
\hline
$\langle g_1g_3+ug_3+u^2, v(g_1+u+u^2)\rangle$ & 3 \\
\hline
$\langle ug_2g_3+u^2g_3, v(g_2g_3+ug_3+u^2g_3)\rangle$ & 1  \\
\hline
$\langle v(g_1g_3+ug_3+u^2)\rangle$ & 1 \\
\hline
$\langle g_2g_3+ug_2+u^2\rangle$ & 1 \\
\hline

\end{tabular}
\end{center}




\bibliographystyle{plain}
\bibliography{ref}
\end{document}